\newtheorem{prop}{Proposition}
\newtheorem{lem}{Lemma}
\newtheorem{thm}{Theorem}
\newtheorem{defi}{Definition}
\newtheorem{coro}[lem]{Corollary}
\def\KSW{{\it KSW }}
\def\sgn{\mathop{\rm sgn}}
\begin{document}

\title{On the Hyperbolicity of Small-World and \\ Tree-Like Random Graphs\thanks{This is 
	the version to appear in the journal of Internet Mathematics.}}

\author{
Wei Chen\\
Microsoft Research Asia\\
weic@microsoft.com
\and
Wenjie Fang\\
Ecole Normale Sup\'erieure de Paris\\
Wenjie.Fang@ens.fr
\and
Guangda Hu\\
Princeton University\\
guangdah@cs.princeton.edu
\and
Michael W. Mahoney\\
Stanford University\\
mmahoney@cs.stanford.edu
}

\date{}

\maketitle



\begin{abstract}
Hyperbolicity is a property of a graph that may be viewed as being a 
``soft'' version of a tree, and recent empirical and theoretical work has 
suggested that many graphs arising in Internet and related data applications
have 
hyperbolic properties.
Here,
we consider Gromov's notion of $\delta$-hyperbolicity, and 
we establish several positive and negative results for small-world and 
tree-like random graph models.
First, we study the hyperbolicity of 
	the class of Kleinberg small-world random graphs $\KSW(n, d, \gamma)$, where
	$n$ is the number of vertices in the graph, $d$ is the dimension of the underlying
	base grid $B$, and $\gamma$ is the small-world parameter such that each node
	$u$ in the graph connects to another node $v$ in the graph with probability
	proportional to $1/d_B(u,v)^\gamma$ with $d_B(u,v)$ being the grid distance from $u$ to $v$
	in the base grid $B$.
We show that when $\gamma = d$, the parameter value allowing efficient decentralized routing in
	Kleinberg's small-world network, with probability $1-o(1)$ the hyperbolic $\delta$
	is $\Omega((\log n)^{\frac{1}{1.5(d+1)+\varepsilon}})$
	for any $\varepsilon>0$ independent of $n$.
Comparing to the diameter of $\Theta(\log n)$ in this case, it indicates that 
	hyperbolicity is not significantly improved comparing to graph diameter even when
	the long-range connections greatly improves decentralized navigation.
We also show that for other values of $\gamma$ the hyperbolic $\delta$ is either
	at the same level or very close to the graph diameter, indicating poor hyperbolicity
	in these graphs as well.
Next we study  a class of tree-like graphs called ringed trees
	that have constant hyperbolicity.
We show that adding random links among the leaves in a manner similar to the
	small-world graph constructions may easily destroy the hyperbolicity
	of the graphs, except for a class of random edges added using
	an exponentially decaying probability function based on the ring distance
	among the leaves.

Our study provides one of the first significant analytical results on the 
	hyperbolicity of a rich class of random graphs, which
	shed light on 
	the relationship between hyperbolicity and 
	navigability of random graphs, as well as on the sensitivity
	of hyperbolic $\delta$ to noises in random graphs.
 \end{abstract}
\noindent
{\bf Keywords:} Complex networks, graph hyperbolicity, small-world networks, 
 	decentralized navigation

\section{Introduction}
\label{sxn:intro}

Hyperbolicity, a property of metric spaces that generalizes the idea of 
Riemannian manifolds with negative curvature, has received considerable
attention in both mathematics and computer science.
When applied to graphs, 
one may think of hyperbolicity as characterizing a ``soft'' version of a 
tree---trees have hyperbolicity zero, and graphs 
that ``look like'' trees in terms of their metric structure have ``small'' 
hyperbolicity.
Since trees are an important class of graphs and since tree-like graphs 
arise in numerous applications, the idea of hyperbolicity has received 
attention in a range of applications.
For example, it has found usefulness in the visualization of the Internet, 
the Web, and other large graphs~\cite{LR94,LRP95,MB95,Mun98,WR02};
it has been applied to questions of compact routing, navigation, and 
decentralized search in Internet graphs and small-world social 
networks~\cite{GL05,CDEHVX10,Kle07,ABKMRT07,KCFB07,BKC09,PKBV10};
and it has been applied to a range of other problems such as distance 
estimation, network security, sensor networks, and traffic flow and 
congestion minimization~\cite{Bar02,JL02,JL04,JLB08,NS11,montgolfier2011treewidth}.

The hyperbolicity of graphs is typically measured by Gromov's
	hyperbolic $\delta$~\cite{gromov1987hyperbolic,BH99}
	(see Section~\ref{sec:pre}).
The hyperbolic $\delta$ of a graph measures the ``tree-likeness'' of
	the graph in terms of the graph distance metric.
It can range from $0$ up to the half of the 
graph diameter, with trees having $\delta=0$, in contrast of ``circle graphs'' and 
``grid graphs'' having large $\delta$ equal to roughly half of their diameters.

In this paper, we study the $\delta$-hyperbolicity of 
	families of random graphs that intuitively 
	have some sort of tree-like or hierarchical structure.
Our motivation comes from two angles.
First, although there are a number of empirical studies on the hyperbolicity
	of real-world and random 
	graphs~\cite{Bar02,JL02,LouTH,LohsoonTH,NS11,montgolfier2011treewidth}, there are essentially no systematic analytical study on the
	hyperbolicity of popular random graphs.
Thus, our work is intended to fill this gap.
Second, a number of algorithmic studies show that good graph hyperbolicity
	leads to efficient distance labeling and routing 
	schemes~\cite{CD00,GL05,CE07,chepoi2008diameters,KPKVB10,CDEHVX10},
	and the routing infrastructure of the Internet is also empirically
	shown to be hyperbolic~\cite{Bar02}.
Thus, it is interesting to further investigate if efficient routing capability
	implies good graph hyperbolicity.

To achieve our goal, we first provide fine-grained characterization of
	$\delta$-hyperbolicity of graph families relative to the graph diameter:
A family of random graphs is 
	(a) {\em constantly hyperbolic} if their hyperbolic $\delta$'s
	are constant, regardless of the size or diameter of the graphs; 
	(b) {\em logarithmically (or polylogarithmically) hyperbolic} if their hyperbolic $\delta$'s
	are in the order of logarithm (or polylogarithm) of the graph diameters;
	(c) {\em weakly hyperbolic} if their hyperbolic $\delta$'s grow 
	asymptotically slower than the graph diameters; and
	(d) {\em not hyperbolic} if their hyperbolic $\delta$'s are at
	the same order as the graph diameters.

We study two families of random graphs.
The first family is Kleinberg's grid-based small-world random 
	graphs~\cite{Kle00}, which build random long-range edges among pairs
	of nodes
	with probability inverse proportional to the $\gamma$-th power of the
	grid distance of the pairs.
Kleinberg shows that when $\gamma$ equals to the grid dimension $d$,
the number of hops for
	decentralized routing can be improved from $\Theta(n)$ in the grid to
	$O({\rm polylog}(n))$, where $n$ is the number of vertices in the graph.
Contrary to the improvement in decentralized routing, we show that
	when $\gamma=d$, with high probability
	the small-world graph is not polylogarithmically 
	hyperbolic.
We further show that when $0\le \gamma < d$, the random small-world graphs
	is not hyperbolic and when $\gamma > 3$ and $d=1$, the random graphs
	is not polylogarithmically hyperbolic.
Although there still exists a gap between hyperbolic $\delta$ and graph diameter 
at the sweetspot of $\gamma=d$, our 
results already indicate that long-range edges that enable efficient 
navigation do not significantly improve the hyperbolicity of the graphs.

The second family of graphs is random {\em ringed trees}.
A ringed tree is a binary tree with nodes in each level of the tree connected
	by a ring (Figure~\ref{fig:disk}(d)).
Ringed trees can be viewed as an idealized version of hierarchical structure
	with local peer connections, such as the Internet autonomous system (AS)
	topology.
We show that ringed tree is quasi-isometric to the Poincar\'{e} 
disk, the well known hyperbolic space representation, and thus it
	is constantly hyperbolic.
We then study how random additions of long-range links on the leaves of
	a ringed tree affect the hyperbolicity of random ringed trees.
Note that due to the tree base structure, random ringed trees allow
	efficient routing within $O(\log n)$ steps using tree branches.
Our results show that if the random long-range edges 
	between leaves are added according to a
	probability function that decreases exponentially fast with the ring
	distance between leaves, then the resulting random graph is 
	logarithmically hyperbolic, but if the probability function decreases
	only as a power-law with ring distance, or based on another
	tree distance measure similar to~\cite{Kle01}, the resulting random graph
	is not hyperbolic.
Furthermore, if we use binary trees instead of ringed trees as base graphs,
	none of the above augmentations is hyperbolic.
Taken together, our results indicate that $\delta$-hyperbolicity of graphs
	is quite sensitive to both base graph structures and probabilities of
	long-range connections.

To summarize, 
	we provide one of the first significant analytical 
	results on the hyperbolicity properties of important families of random graphs.
Our results demonstrate that efficient routing performance does not
	necessarily mean good graph hyperbolicity (such as logarithmic hyperbolicity).

\subsection{Related work}
\label{sxn:intro-related}

There has been a lot of work on search and decentralized search subsequent 
to Kleinberg's original work~\cite{Kle00,Kle01}, much of which has been 
summarized in the review~\cite{Kle06}.
In a parallel with this, there has been empirical and theoretical work on 
hyperbolicity of real-world complex networks as well as simple random graph 
models.
On the empirical side,~\cite{Bar02} showed that measurements of the Internet 
are negatively curved;~\cite{JL02,JL04,JLB08,LouTH,LohsoonTH} provided 
empirical evidence that randomized scale-free and Internet graphs are more 
hyperbolic than other types of random graph models;~\cite{NS11} measured 
the average $\delta$ and related curvature to congestion; and~\cite{montgolfier2011treewidth} 
measured treewidth and hyperbolicity properties of the Internet.
On the theoretical side, one has~\cite{PR00,JLB08,BRSV10,NST12,Tucci12,Shang12}, among which
	\cite{NST12,Tucci12,Shang12} study Gromov hyperbolicity of random graphs and
	are most relevant to our work.
In~\cite{NST12}, Narayan et al. study $\delta$-hyperbolicity of 
	sparse Erd\H{o}s-R\'{e}nyi random graphs $G(n,p)$ where $n$ is the number of vertices in
	the graph and $p$ is the probability of any pair of nodes has an edge, with
	$p=c/n$ for some constant $c>1$.
They prove that with positive probability these graphs are not 
	$\delta$-hyperbolic for any positive constant $\delta$ (i.e. not constantly hyperbolic
	in our definition).
In~\cite{Tucci12}, Tucci shows that random $d$-regular graphs are almost surely
	not constantly hyperbolic.
In~\cite{Shang12}, Shang shows that with non-zero probability the 
	Newman-Watts small-world model~\cite{NW99} is not constantly hyperbolic.
These studies only investigate constant hyperbolicity on random graphs, while our study moves
	beyond constant hyperbolicity and show whether certain random graph classes are logarithmically
	hyperbolic, or not hyperbolic at all, comparing with the graph diameters.
Moreover, the one dimensional Newman-Watts small-world model studied in
	\cite{Shang12} is a special case of the Kleinberg small-world model we studied in this
	paper (with dimension $d=1$ and small-world parameter $\gamma=0$).
As given by Theorem~\ref{thm:main}~(2), we show that with probability $1-o(1)$ the hyperbolic
	$\delta$ of these random graphs is $\Omega(\log n)$, where $n$ is the number of vertices
	in the graph.
Therefore, our result is stronger than the result in~\cite{Shang12} for this particular case.

%

More generally, we see two approaches connecting hyperbolicity with efficient routing in
graphs. One approach study efficient computation of graph properties, such as
diameters, centers, approximating trees, and 
packings and coverings for low hyperbolic-$\delta$ graphs and
metric spaces~\cite{chepoi2008diameters,CD00,GL05,CDEHVX10,CE07}.
In large part, the reason for this interest is that there are often direct 
consequences for navigation and routing in these 
graphs~\cite{GL05,CDEHVX10,Kle07,ABKMRT07}.
While these results are 
of interest for general low hyperbolic-$\delta$ graphs, they can be less interesting when applied to 
small-world and other low-diameter random models of complex networks.
To take one example,~\cite{chepoi2008diameters} provides a simple construction of a 
distance approximating tree for $\delta$-hyperbolic graphs on $n$ vertices; 
but the $O(\log n)$ additive-error guarantee is clearly less interesting for 
models in which the diameter of the graph is $O(\log n)$.
Unfortunately, this $O(\log n)$ arises for a very natural reason in the 
analysis, and it is nontrivial to improve it for popular tree-like complex 
network models.

Another approach taken by several recent papers 
is to build random graphs from hyperbolic metric spaces and then shows that such random graphs lead to several 
common properties of small-world complex networks, including good 
navigability properties~\cite{BKC09,PKBV10,KPKVB10,KPVB09}.
While assuming a low hyperbolicity metric space to
	build random graphs in these studies makes intuitive sense, it is 
difficult to prove nontrivial results on the Gromov's $\delta$ 
	of these random graphs even for simple 
	random graph models that are intuitively tree-like.


Understanding the relationship between these two 
approaches was one of the original motivations of 
our research.
In particular, the difficulties in the above two approaches lead us to study hyperbolicity
	of small-world and tree-like random graphs.

Finally, ideas related to hyperbolicity have been applied in numerous other 
networks applications, e.g., to problems such as distance estimation, 
network security, sensor networks, and traffic flow and congestion 
minimization~\cite{ST08,JLBB11,JLHB07,JL04,NS11,BT10a_TR}, as well 
as large-scale data visualization
~\cite{Mun98}.
The latter applications typically take important advantage of the idea that 
data are often hierarchical or tree-like and that there is ``more room'' in 
hyperbolic spaces of dimension 2 than Euclidean spaces of any finite dimension.

\paragraph{Paper organization.}
In Section~\ref{sec:pre} we provide
	basic concepts and terminologies on hyperbolic spaces and graphs
	that are needed in this paper.
In Sections~\ref{sec:smallworld} and~\ref{sec:ringedtree} we study
	the hyperbolicity of small-world random graphs and ringed tree based
	random graphs.
For ease of reading, in each of Sections~\ref{sec:smallworld} and~\ref{sec:ringedtree}
	we first summarize our technical results together
	with their implications (Sections~\ref{sec:swresult} and~\ref{sec:rtresult}), then provide
	the outline of the analyses (Sections~\ref{sec:swoutline} and~\ref{sec:rtoutline}), followed by
	the detailed technical proofs (Sections~\ref{sec:swproof-app} and~\ref{sec:rtproof}), and finally
	discuss extensions of our results
		to other related models (Sections~\ref{sec:sworldextension} and~\ref{sec:rtringextension}).
We discuss open problems and future directions
	related to our study in Section~\ref{sec:discuss}.

\section{Preliminaries on hyperbolic spaces and graphs} \label{sec:pre}

Here, we provide basic concepts concerning hyperbolic spaces and graphs used 
in this paper; for more comprehensive coverage on hyperbolic spaces, see, 
e.g.,~\cite{BH99}.

\subsection{Gromov's $\delta$-hyperbolicity}

In \cite{gromov1987hyperbolic}, Gromov defined a notion of hyperbolic metric 
space; and he then defined hyperbolic groups to be finitely generated groups 
with a Cayley graph that is hyperbolic. 
There are several equivalent definitions (up to a multiplicative constant) 
of Gromov's hyperbolic metric space \cite{bowditch1991notes}. 
In this paper, we will mainly use the following. 

\begin{defi}[Gromov's four-point condition]
In a metric space $(X,d)$, given $u,v,w,x$ with $d(u,v)+d(w,x) \geq 
	d(u,x)+d(w,v) \geq d(u,w)+d(v,x)$ in $X$, 
	we note $\delta(u,v,w,x)=(d(u,v)+d(w,x)-d(u,x)-d(w,v))/2$. 
$(X,d)$ is called \emph{$\delta$-hyperbolic} for some non-negative 
	real number $\delta$ if for any four points 
	$u,v,w,x \in X$, $\delta(u,v,w,x) \leq \delta$. 
Let $\delta(X,d)$ be the smallest possible value of such $\delta$, which
	can also be defined as $\delta(X,d) = \sup_{u,v,w,x \in X} \delta(u,v,w,x)$.
\end{defi}

Given an undirected, unweighted and connected graph $G=(V,E)$, one can view 
it as a metric space $(V, d_G)$, where $d_G(u,v)$ denotes the (geodesic) 
graph distance between two vertices $u$ and $v$.
Then, one can apply the above four point condition to define its 
$\delta$-hyperbolicity, which we denote $\delta = \delta(G) = \delta(V,d_G)$ 
(and which we sometimes refer to simply as the hyperbolicity or the $\delta$ 
of the graph).
Trees are $0$-hyperbolic; and $0$-hyperbolic graphs are exactly clique trees 
(or called block graphs), which can be viewed as cliques connected in a tree 
fashion~\cite{Ho79}.
Thus, it is often helpful to view graphs with a low hyperbolic $\delta$ as 
``thickened'' trees, or in other words, as tree-like when viewed at large 
size scales.

If we let $D(G)$ denote the diameter of the graph $G$, then, by the triangle 
inequality, we have $\delta(G) \le D(G)/2$.
We will use the asymptotic difference between the hyperbolicity $\delta(G)$ 
and the diameter $D(G)$ to characterize the hyperbolicity of the graph $G$.
\begin{defi}[Hyperbolicity of a graph]
For a family of graphs $\cal G$ with diameter $D(G), G\in {\cal G}$ going to infinity as the size of $G$ grows to infinity, we say 
that graph family $\cal G$ is
{\em constantly (resp. logarithmically, polylogarithmically, or weakly) hyperbolic}, 
	if $\delta(G) = O(1)$ (resp. $O(\log D(G))$, $O((\log D(G))^c)$ for some constant $c>0$, 
	or $o(D(G))$) when $D(G)$ goes to infinity; and 
	$\cal G$ is {\em not hyperbolic} if $\delta(G) = \Theta(D(G))$, where $G\in {\cal G}$.
\end{defi}


\noindent
The above definition provides more fine-grained characterization of hyperbolicity of graph families
	than one typically sees in the literature, which only discusses whether or not a graph family
	is constantly hyperbolic.
This definition does not address the hyperbolicity of graph families where the diameter 
	stays bounded while the size of the graph goes unbounded.
For these graph families, one may probably need tight analysis on the constant factor
	between the hyperbolic $\delta$ and the graph diameter, and it is out of the scope of this paper.
	

\subsection{Rips condition}

Rips condition~\cite{gromov1987hyperbolic,BH99} is a technically equivalent
	condition to the Gromov's four point condition up to a constant factor.
We use the Rips condition when analyzing
	the $\delta$-hyperbolicity of ringed trees.
In a metric space $(X,d)$, we define a \emph{geodesic segment} $[u,v]$ between two points $u,v$ to be the image of a function $\rho:[0,d(u,v)] \to [u,v]$ satisfying $\rho(0)=u$, $\rho(d(u,v))=v$, $d(\rho(s),\rho(t))=|s-t|$ for any $s,t \in [0,d(u,v)]$. We say that a metric space is \emph{geodesic} if every pair of its points has a geodesic segment, not necessarily unique.
In a geodesic metric space $(X,d)$, given $u,v,w$ in $X$, we denote $\Delta(u,v,w)=[u,v] \cup [v,w] \cup [w,u]$ a \emph{geodesic triangle}. $[u,v]$, $[v,w]$, $[w,u]$ are called \emph{sides} of $\Delta(u,v,w)$. We should note that, in general, geodesic segments and geodesic triangles are not unique up to their endpoints.

In a metric space, it is sometimes convenient to consider distances between point sets in the following way. We say that a set $S$ is within distance $d$ to another set $T$ if $S$ is contained in the ball $B(T,d)$ of all points within distance $d$ to some point in $T$. We say that $S$ and $T$ are within distance $d$ to each other if $S$ is within distance $d$ to $T$ and vice versa.


\begin{defi}[Rips condition]
A geodesic triangle $\Delta(u,v,w)$ in a geodesic metric space $(X,d)$
	is called \emph{$\delta$-slim} for some non-negative real number $\delta$
	if any point on a side is within 
	distance $\delta$ to the union of the other two sides. 
$(X,d)$ is called \emph{Rips $\delta$-hyperbolic} if every geodesic triangle 
	in $(X,d)$ is $\delta$-slim. 
We denote $\delta_{Rips}(X,d)$ the smallest possible value of such $\delta$
	(could be infinity).
\end{defi}

\noindent
It is known (see, e.g., \cite{ghys1990groupes,BH99,chepoi2008diameters}) that
	$\delta(X,d)$ and $\delta_{Rips}(X,d)$ differ only within a multiplicative 
	constant. 
In particular, $\delta(X,d) \le 8\delta_{Rips}(X,d)$ and
	$\delta_{Rips}(X,d) \le 4\delta(X,d)$.
Since we are only concerned with asymptotic growth of $\delta(X,d)$, 
	Rips condition can be used in place of the Gromov's 
	four point condition.

For an undirected unweighted graph $G=(V,E)$, 
	we can also treat it as a geodesic metric space with every edge
	interpreted as a segment of length $1$, and thus use
	the Rips condition to define its hyperbolicity, which we denote
	as $\delta_{Rips}(G)$. 
Note that in the case of unweighted graph, when considering the distance
	between two geodesics on the graph, we only consider the distance
	among the vertices, since other points on the edges can add at most
	$2$ to the distance between vertices.


\subsection{Poincar\'{e} disk}

\begin{figure}
   \begin{center}
	\begin{tabular}{cccc}
         \includegraphics[height=1.2in]{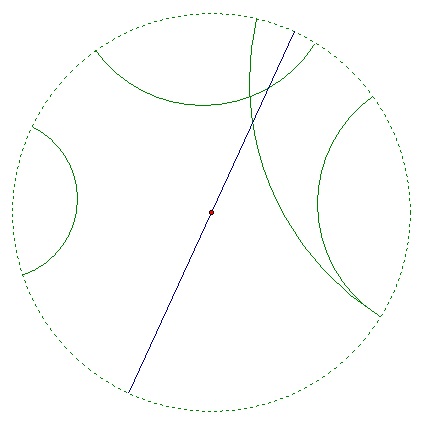} &
         \includegraphics[height=1.2in]{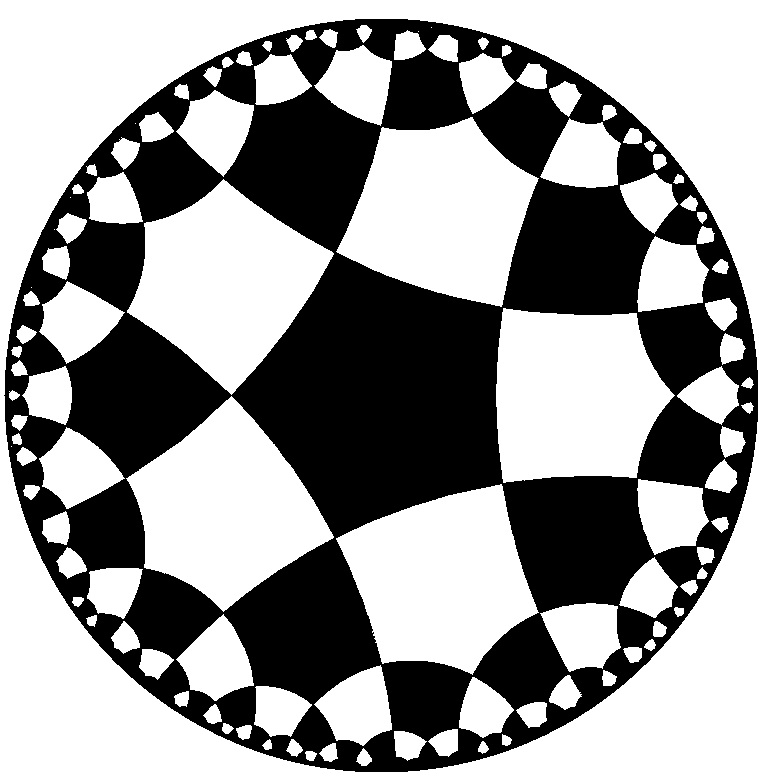} &
         \includegraphics[height=1.2in]{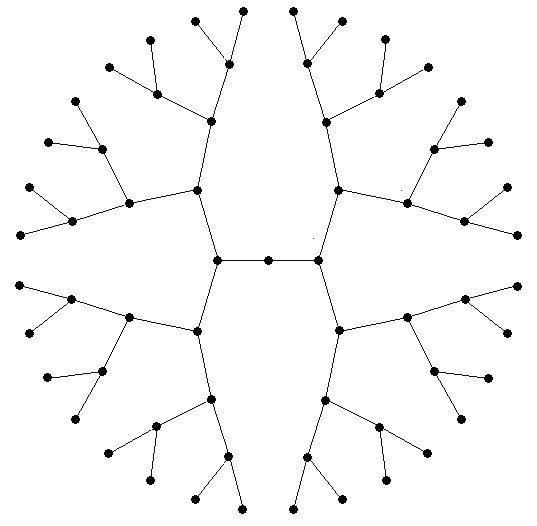} &
         \includegraphics[height=1.2in]{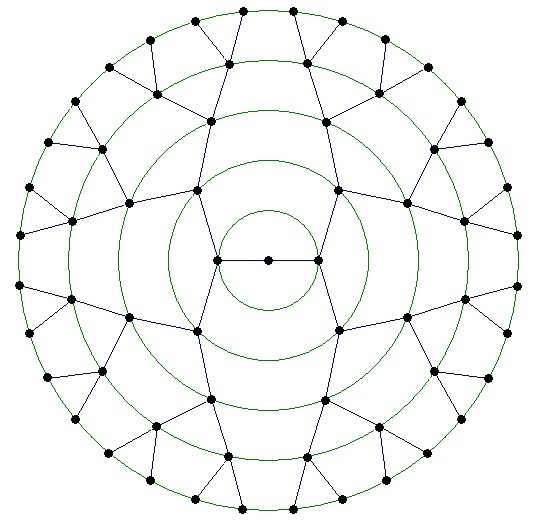} \\
	(a) Poincar\'{e} disk &
	(b) \parbox[t]{0.8in}{Tessellation of Poincar\'{e} disk} &
	(c) Binary tree &
	(d) Ringed tree 
	\end{tabular}
   \end{center}
\caption{ Poincar\'{e} disk, its tessellation, a binary tree, 
	and a ringed tree.
}
\label{fig:disk}
\end{figure}

The Poincar\'{e} disk (see Figure~\ref{fig:disk}(a) for an illustration) is 
a well-studied hyperbolic metric space.
Although in this paper we touch upon it only briefly when we study 
ringed-tree graphs, it is useful to convey intuition about hyperbolicity and
tree-like behavior.

\begin{defi}
Let $D=B(0,1)$ be a open disk on the complex plane with origin $0$ and
	radius $1$, with the following distance function:
\[ d(u,v) = \mathrm{arccosh}\bigg(1+\frac{2\Vert u - v \Vert^{2}}{(1-\Vert u \Vert^{2})(1-\Vert v \Vert^{2})} \bigg). \]
$(D,d)$ is a metric space. We call it the Poincar\'{e} disk.
\end{defi}


\noindent
Visually, a (hyperbolic) line in the Poincar\'{e} disk is the segment of a 
circle in the disk that is perpendicular to the circular boundary of the 
disk, and thus all lines bend inward towards the origin.
The hyperbolic distance between two points in the disk of fixed distance
	in the complex plane increase exponentially fast when they moves
	towards the boundary of the disk, meaning that there is much more
	``space'' towards the boundary than around the origin.
This can be seen from a tessellation of the Poincar\'{e} disk, as shown
	in Figure~\ref{fig:disk}(b).

\subsection{Quasi-isometry}

Quasi-isometry, defined as follows, is a concept used to capture the 
large-scale similarity between two metric spaces.
\begin{defi}[Quasi-isometry]
For two metric spaces $(X,d_{X}), (Y,d_{Y})$, we say that $f: X \to Y$ is a $(\lambda,\epsilon)$-quasi-isometric embedding from $X$ to $Y$ if for any 
	$u,v \in X$,
\[ \frac{1}{\lambda}d_{X}(u,v) - \epsilon \leq d_{Y}(f(u),f(v)) \leq \lambda d_{X}(u,v) + \epsilon. \]
Furthermore, if the $\epsilon$ neighborhood of $f(Y)$ covers $X$, then we 
say that $f$ is a $(\lambda,\epsilon)$-quasi-isometry.
Moreover, we say that $X,Y$ are quasi-isometric if such a 
$(\lambda,\epsilon)$-quasi-isometry exists for some constants $\lambda$ and~$\epsilon$.
\end{defi}

\noindent
If two metric spaces are quasi-isometric with some constant, then they have 
the same ``large-scale'' behavior. 
For example, the $d$-dimensional grid $\mathbb{Z}^d$ and the $d$-dimensional 
Euclidean space $\mathbb{R}^d$ are quasi-isometric, realized by the 
$(\sqrt{d}, \sqrt{d}/2)$-quasi-isometric embedding $(x,y) \mapsto (x,y)$.
As a second example, consider an infinite ringed-tree:
start with a binary tree (illustrated in Figure~\ref{fig:disk}(c)) and then 
connect all vertices at a given tree level into a ring.
This is defined more formally in Section~\ref{sec:ringedtree}, but an example
is illustrated in Figure~\ref{fig:disk}(d).
As we prove in Section~\ref{sec:ringedtree}, the infinite ringed tree is quasi-isometric to the 
Poincar\'{e} disk---thus it may be equivalently viewed as a ``softened'' 
binary tree or as a ``coarsened'' Poincar\'{e} disk.

Quasi-isometric embeddings have the important property of preserving 
hyperbolicity, up to a constant factor, as given by the following proposition.

\begin{prop}[Theorem 1.9, Chapter III.H of \cite{BH99}]\label{prop:quasi}
Let $X$ and $X'$ be two metric spaces and let $f: X' \rightarrow X$ be a 
	$(\lambda, \epsilon)$-quasi-isometric embedding.
If $X$ is $\delta$-hyperbolic, then $X'$ is $\delta'$-hyperbolic, where
	$\delta'$ is a function of $\delta$, $\lambda$, and $\epsilon$.
\end{prop}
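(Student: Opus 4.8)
My plan is to reduce to the Rips (slim-triangle) formulation and then transport a geodesic triangle of $X'$ through $f$ into $X$, straighten its sides, use $\delta$-slimness there, and pull the resulting estimate back. The reduction is immediate from the inequalities quoted just above in the Rips condition discussion: since $\delta(X',d_{X'})\le 8\,\delta_{Rips}(X',d_{X'})$, it suffices to bound $\delta_{Rips}(X')$ by a quantity depending only on $\delta,\lambda,\epsilon$. (As in \cite{BH99}, I treat the spaces involved --- a graph with unit-length edges and the Poincar\'e disk in our applications --- as geodesic metric spaces, so that geodesic triangles and slimness are available.) The one real idea is that $f$ carries geodesics to quasi-geodesics, and that quasi-geodesics in a $\delta$-hyperbolic space stay uniformly close to genuine geodesics.

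First, I would fix a geodesic triangle $\Delta(a',b',c')$ in $X'$ and a point $p'$ on the side $[a',b']$. Because $f$ is a $(\lambda,\epsilon)$-quasi-isometric embedding, its restriction to each side is a $(\lambda,\epsilon)$-quasi-geodesic in $X$ joining the images of the endpoints. Next I would invoke the stability of quasi-geodesics (the \emph{Morse lemma}; Theorem 1.7 of Chapter III.H of \cite{BH99}): there is a constant $R=R(\delta,\lambda,\epsilon)$ so that each image side lies within Hausdorff distance $R$ of a genuine geodesic segment in $X$ with the same endpoints, and the three genuine segments form a geodesic triangle $\Delta(f(a'),f(b'),f(c'))$ in $X$. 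Then, using $\delta$-slimness in $X$: $f(p')$ is within $R$ of a point $q$ on $[f(a'),f(b')]$; $q$ is within $\delta$ of a point $r$ on $[f(a'),f(c')]\cup[f(b'),f(c')]$; and $r$ is within $R$ of a point of $f([a',c'])\cup f([b',c'])$, i.e.\ of $f(s')$ for some $s'$ on $[a',c']\cup[b',c']$. Hence $d_X(f(p'),f(s'))\le 2R+\delta$. Finally I would pull this back with the lower bound of the embedding, $\frac{1}{\lambda}d_{X'}(p',s')-\epsilon\le d_X(f(p'),f(s'))$, obtaining $d_{X'}(p',s')\le \lambda(2R+\delta+\epsilon)$. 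As $p'$ and the triangle were arbitrary, $\delta_{Rips}(X')\le \lambda(2R+\delta+\epsilon)$, so $\delta(X')\le 8\lambda(2R+\delta+\epsilon)=:\delta'$, which depends only on $\delta,\lambda,\epsilon$.

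The hard part is entirely in the stability of quasi-geodesics used in the second step; everything else is bookkeeping with the triangle inequality and the two defining inequalities of a quasi-isometric embedding. I would simply cite it from \cite{BH99} (see also \cite{ghys1990groupes}). Were a self-contained argument wanted, the standard route is an exponential-divergence estimate: in a $\delta$-hyperbolic space a path that stays at distance $t$ from a geodesic between its endpoints has length exponential in $t$, so a $(\lambda,\epsilon)$-quasi-geodesic --- whose length between any two of its points is linearly controlled --- can only stray a bounded amount, and tightening this yields the Hausdorff bound $R$. I would also note that coarse surjectivity of $f$ is never used, which is why a quasi-isometric embedding rather than a full quasi-isometry suffices for this one-directional statement.
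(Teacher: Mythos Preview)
The paper does not prove this proposition at all: it is stated as a citation of Theorem~1.9, Chapter~III.H of \cite{BH99}, and used as a black box (e.g., to deduce Corollary~\ref{cor:ringed-tree-constant} from Theorem~\ref{thm:isometry}). Your sketch is essentially the standard proof one finds in \cite{BH99}---transport a geodesic triangle via $f$, invoke the Morse lemma (stability of quasi-geodesics) to replace the image sides by genuine geodesics at bounded Hausdorff distance, apply slimness in $X$, and pull back through the lower quasi-isometry inequality---and it is correct. There is nothing to compare against in the paper itself.
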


\section{$\delta$-hyperbolicity of grid-based small-world graphs}\label{sec:smallworld}


In this section, we consider the $\delta$-hyperbolicity of graphs constructed 
according to the small-world graph model as formulated by
Kleinberg~\cite{Kle00}, in which long-range edges are added on top of a base 
grid, which is a discretization of a low-dimensional Euclidean space.

The model starts with $n$ vertices forming a $d$-dimensional base grid 
(with wrap-around).  
More precisely, given positive integers $n$ and $d$ such that $n^{1/d}$ is also an integer, 
	let $B=(V,E)$ be the base grid, with
	$V=\{(x_1,x_2,\ldots,x_d) $ $\ |\ $ 
	$x_i \in \{0,1,\ldots, n^{1/d}-1\}, i\in [d]\}$,
        $E = \{((x_1,x_2,\ldots,x_d),$ $ (y_1,y_2,\ldots,y_d))\ |\ $ 
	$\exists j\in [d], y_j = x_j+1 \mod n^{1/d} $ ${\rm\ or\ } y_j = x_j-1 \mod n^{1/d},
	\forall i\ne j, y_i=x_i\}$.
Let $d_B$ denote the graph distance metric on the base grid $B$.
We then build a random graph $G$ on top of $B$, such that
	$G$ contains all vertices and all edges (referred to as grid edges)
	of $B$, and for each node $u\in V$, 
	it has one long-range edge (undirected)
	connected to some node $v\in V$, with probability proportional
	to $1/d_B(u,v)^\gamma$, where $\gamma\ge 0$ is a parameter.
We refer to the probability space of these random graphs as
	$\KSW(n, d, \gamma)$; and
we let $\delta(\KSW(n, d, \gamma))$ denote the random variable of
	the hyperbolic $\delta$ of a randomly picked graph $G$ in 
	$\KSW(n, d, \gamma)$.
Recall that Kleinberg showed that the small-world graphs with $\gamma=d$ 
allow efficient decentralized routing (with $O(\log^2 n)$ routing hops in 
expectation), whereas graphs with $\gamma \ne d$ do not allow any efficient 
decentralized routing (with $\Omega(n^c)$ routing hops for some constant 
$c$)~\cite{Kle00}; and 
note that the base grid $B$ has large hyperbolic $\delta$, i.e., 
$\delta(B) = \Theta(n^{1/d}) = \Theta(D(B))$.
Intuitively, the structural reason for the efficient routing performance
	at $\gamma=d$ is that long-range edges are added ``hierarchically'' 	
	such that each node's 
	long-range edges are nearly uniformly distributed over all ``distance 
	scales''.



\subsection{Results and their implications} \label{sec:swresult}

The following theorem summarizes our main technical results on the 
hyperbolicity of small-world graphs for different combinations of $d$ and 
$\gamma$.

\begin{thm} \label{thm:main}
With probability $1- o(1)$ (when $n$ goes to infinity), we have
\begin{enumerate}
\item
$\delta(\KSW(n, d, \gamma)) = \Omega((\log n)^{\frac{1}{1.5(d+1)+\varepsilon}})$
	when $d\ge 1$ and $\gamma = d$, for any $\varepsilon>0$
	independent of~$n$;
\item
$\delta(\KSW(n, d, \gamma)) = \Omega(\log n)$ when   
	$d\ge 1$ and $0\le \gamma < d$; and
\item
$\delta(\KSW(n, d, \gamma)) = \Omega(n^{\frac{\gamma-2}{\gamma-1}-\epsilon})$ 
	when $d=1$ and
	$\gamma > 3$, 
        for any 
        $\epsilon > 0$ independent of~$n$.
\end{enumerate}
\end{thm}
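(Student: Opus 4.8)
The plan is to reduce all three lower bounds to one elementary fact about the four-point condition, and then do three different probabilistic constructions. The fact: if in $(V,d_G)$ one can find vertices $a,b$ with $d_G(a,b)=2q$ together with two \emph{metric midpoints} $c,d$ — meaning $d_G(a,c)=d_G(c,b)=d_G(a,d)=d_G(d,b)=q$ — for which $d_G(c,d)\ge\kappa$, then the largest of the three pair-sums of $\{a,b,c,d\}$ is $d_G(a,b)+d_G(c,d)=2q+\kappa$ while the other two equal $d_G(a,c)+d_G(b,d)=d_G(a,d)+d_G(b,c)=2q$, so $\delta(G)=\delta(V,d_G)\ge\kappa/2$. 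Equivalently it suffices to exhibit two $d_G$-geodesics between one pair of vertices whose midpoints lie $\Omega(q)$ apart (a ``fat bigon''); and since an isometrically embedded subspace has hyperbolicity at most that of the ambient space, one is free to build the configuration inside any convenient isometric substructure. So in each case the task is to produce such a configuration at the largest possible scale $q$, with a first-moment (union-bound) argument controlling where the long-range edges land.

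Part~(2) for $d\ge 2$ is the easy warm-up and needs no long-range edges for the ``fat'' part. For $\gamma<d$ the normalizer is $Z_u=\sum_{w\ne u}d_B(u,w)^{-\gamma}=\Theta(n^{1-\gamma/d})$, so with $m=\varepsilon\log n$ and a center $o$, the expected number of walks of length $\le 2m$ that start and end in the grid-ball $B_B(o,m)$ and use at least one long-range edge is at most $2^{O(m)}\,m^{2d}/n^{\,1-\gamma/d}=o(1)$ once $\varepsilon$ is a small enough constant; on the complementary event $d_G=d_B$ on $B_B(o,m)$. That ball contains an isometrically embedded $\Theta(m)\times\Theta(m)$ sub-grid, whose four corners give $\delta(G)\ge\Omega(m)=\Omega(\log n)$. (The same ``pristine ball'' also works for $\gamma=d$, $d\ge2$, at scale $m=\Theta((\log n/\log\log n)^{1/d})$; but for $d=1$ a grid-ball is just a path and yields nothing, which is exactly why the next, harder mechanism is needed.)

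For $d=1$, and this is also the mechanism behind part~(1), the fat bigon must be manufactured from the long-range edges. The template: locate a long-range edge (or a short chain of them) $e=\{p,p'\}$ with $d_B(p,p')=\Theta(q)$, placed so that a second, essentially disjoint route of grid-length $\Theta(q)$ also joins the two ``sides'' of $e$; then put $a$ just before $p$ and $b$ just after $p'$, so that one $d_G$-geodesic $a\to b$ runs through $e$ while another follows the grid, and verify that their midpoints land $\Theta(q)$ apart — which holds provided no unwanted long-range edge falls inside a neighborhood of the construction, an event whose complement is again controlled by a first-moment estimate over a region of volume polynomial in $q$. For $\gamma<1$ the long-range edges are ``global'' and a usable $e$ is abundant, so this runs at $q=\Theta(\log n)$ and gives $\delta(\KSW(n,1,\gamma))=\Omega(\log n)$. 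For $\gamma=d$ the obstruction to a large $q$ is that each dyadic length-scale carries only a $\Theta(1/\log n)$-fraction of the $n$ long-range edges, so simultaneously forcing ``a usable edge of length $\approx q$ at an admissible location'' and ``no bad edge anywhere in its $\Theta(q)$-neighborhood'' costs a volume polynomial in $q$ spread over the $d$ grid directions together with the route direction (effectively dimension $d+1$), and the union bound over admissible placements then caps $q$ at $(\log n)^{\Theta(1/(d+1))}$; carrying the constants through this accounting yields $q=\Omega\!\big((\log n)^{1/(1.5(d+1)+\varepsilon)}\big)$, which is part~(1).

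Part~(3) concerns the base cycle $C_n$ again, but now $\gamma>3>2$, so $Z_u=\Theta(1)$, the long-range edge out of $u$ hits $v$ with probability $\Theta(|u-v|^{-\gamma})$, and a first- and second-moment estimate gives that the longest long-range edge in the whole graph has length $O\!\big(n^{1/(\gamma-1)+o(1)}\big)$; hence covering grid-distance $D$ requires $\Omega\!\big(D\cdot n^{-1/(\gamma-1)-o(1)}\big)$ hops. Thus $C_n$, whose hyperbolic ``size'' in the pure grid metric is $\Theta(n)$, behaves for hyperbolicity purposes like a cycle of length $\Theta\!\big(n^{(\gamma-2)/(\gamma-1)}\big)$, and four suitably spaced vertices on it — with the $\epsilon$ in the exponent absorbing the bounded multiplicative distortion between $d_G$ and hop-count around $C_n$ — witness $\delta(\KSW(n,1,\gamma))=\Omega\!\big(n^{(\gamma-2)/(\gamma-1)-\epsilon}\big)$. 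The step I expect to be genuinely hard is the one in part~(1): not the four-point reduction or the moment computations, but engineering the divergent-geodesic configuration so that \emph{(i)} both geodesics are provably shortest (a ``no shortcut'' event over a region that grows with $q$), \emph{(ii)} their midpoints are provably $\Omega(q)$ apart, and \emph{(iii)} the union bound over admissible placements stays $o(1)$ — reconciling all three is precisely what pins the exponent at $1/(1.5(d+1)+\varepsilon)$ and what one would have to revisit to push it toward the presumably-correct $1/d$.
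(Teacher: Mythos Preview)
Your architecture---reduce to a four-point ``fat bigon'', control shortcut paths by a first-moment bound, and manufacture the bigon from a long-range edge when the base is one-dimensional---matches the paper, and your sketches for parts~(2) and~(3) are essentially the paper's arguments (for~(3) the paper also picks antipodal $A,B$, constructs one-sided paths, and takes their midpoints $C,D$). The gap is in part~(1), specifically in how the exponent $1/(1.5(d+1))$ arises.

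For $d\ge 2$, $\gamma=d$, the paper does \emph{not} manufacture a bigon; it uses the very same grid square as in part~(2). What changes is the first-moment bound. Your crude estimate ``$2^{O(m)}m^{2d}/Z$'' with $Z=n^{1-\gamma/d}$ is fine for $\gamma<d$ because the single polynomial factor $1/Z$ swamps the exponential-in-$m$ path count; but for $\gamma=d$ one has $Z=\Theta(\log n)$ and the crude count is useless. The paper instead classifies length-$\ell$ paths with $k$ long-range segments into $O(c^k\ell^{(k+1)(d+1)}/k^{kd})$ categories (Lemma~5) and shows each category occurs with probability $O(c^k k^{kd}/(\log n)^k)$ (Lemma~8; the $k^{kd}$ cancellation is the delicate part). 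The product is $\bigl(c\,\ell^{(d+1)(1+1/k)}/\log n\bigr)^k$, worst at $k=2$, giving the threshold $\ell^{1.5(d+1)}=o(\log n)$. \emph{That} is the origin of $1.5(d+1)$: the $d+1$ from the category count and the $1.5=1+1/k|_{k=2}$ from the bottleneck case of two long-range edges---not a ``$d$ grid directions plus one route direction'' volume heuristic around a manufactured edge.

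Your parenthetical claim that the pristine ball already yields $(\log n/\log\log n)^{1/d}$ for $\gamma=d$, $d\ge 2$, is both stronger than what the paper proves and unjustified by what you wrote: ruling out long-range edges with \emph{both} endpoints in the ball is not enough, since a shortcut path can leave on one long-range edge and return on another from a vertex far outside. One is forced back to exactly the path-category analysis above. (The paper's own remark after Lemma~9 shows the grid-square technique cannot beat roughly $\sqrt{\log n}$, so there is room between its $(\log n)^{1/(1.5(d+1))}$ and the ceiling, but your $1/d$ is not established.)

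For $d=1$, $\gamma\le 1$, your template is correct and is what the paper does: find a long-range edge $e_0$ of grid length $\ell_0$ to close a local cycle, then rerun the \emph{same} no-shortcut lemmas conditioned on $e_0$; the bound on $\ell_0$ is inherited from that analysis, not from a separate bigon-specific accounting.
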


This theorem, together with the results of~\cite{Kle00} on the navigability of
small-world graphs, have several implications.
The first result shows that when $\gamma = d$, 
	with high probability the hyperbolic
	$\delta$ of the small-world graphs is at least
	$c(\log n)^{\frac{1}{1.5(d+1)}}$ for some constant $c$.
We know that the diameter is $\Theta(\log n)$ in expectation
	when $\gamma=d$~\cite{MN04}.
Thus the small-world graphs at the sweetspot for efficient routing is not 
polylogarithmically hyperbolic, i.e., $\delta$ is not $O(\log^c\log n)$-hyperbolic for
any constant $c>0$.
However, there is still a gap between our lower bound the upper bound 
provided by the diameter, and thus it is still open whether small-world 
graphs are weakly hyperbolic or not hyperbolic.
Overall, though, our result indicates no drastic improvement on the 
hyperbolicity (relative to the improvement of the diameter) for small-world 
graphs at the sweetspot (where a dramatic improvement was obtained for the
efficiency of decentralized routing).

The second result shows that when $\gamma < d$, then $\delta=\Omega(\log n)$.
The diameter of the graph in this case is $\Theta(\log n)$~\cite{MN04}; 
thus, we see that when $\gamma < d$ the hyperbolic $\delta$ is asymptotically 
the same as the diameter, i.e., although $\delta$ decreases as edges are 
added, small-world graphs in this range are not hyperbolic.
The third result concerns the case $\gamma > d$, in which case the
random graph degenerates towards the base grid (in the sense that most of all 
of the long-range edges are very local), which itself is not hyperbolic.
For the general $\gamma$, we show that for the case of
	 $d=1$ the hyperbolic $\delta$ is lower bounded by a (low-degree)
	polynomial of $n$; this also 
implies that the graphs in this range are not polylogarithmically hyperbolic.
Note that our polynomial exponent $\frac{\gamma-2}{\gamma-1}-\epsilon$
	matches the diameter lower bound proven in~\cite{NM05}.

\subsection{Outline of the proof of Theorem~\ref{thm:main}} 
\label{sec:swoutline}

In this subsection, we provide a summary of the proof of 
Theorem~\ref{thm:main}.
In our analysis, we use two different techniques, one for the first two 
results in Theorem~\ref{thm:main}, and the other for the last result; 
in addition, for the first two results, we further divide the analysis into 
the two cases $d\ge 2$ and $d=1$.

When $d\ge 2$ and $0 \le \gamma \le d$, the main idea of the proof is to 
pick a square grid of size $\ell_0$ (it does not matter in which dimension 
the square is picked from). 
We know that when only grid distance is considered, 
	the four corners of the square grid
	have the Gromov $\delta$ value equal to $\ell_0$.
We will show that, as long as $\ell_0$ is not very large 
	(to be exact, $O((\log n)^{\frac{1}{1.5(d+1)+\varepsilon}})$ when $\gamma = d$ and
	$O(\log n)$ when $0 \le \gamma < d$), 
	the probability that any pair of vertices on this square grid have
	a shortest path shorter than their grid distance after
	adding long-range edges is close to zero (as $n$ tends to
	infinity).
Therefore, with high probability, the four corners selected have
	Gromov $\delta$ as desired in the lower bound results.

To prove this result, we study the probability that any pair of
	vertices $u$ and $v$ at grid distance $\ell$ 
	are connected with a path that contains at least one long-range
	edge and has length 
	at most $\ell$.
We upper bound such $\ell$'s 
	so that this probability is close to zero.
To do so, we first classify such paths into a number of categories, based 
	on the pattern of paths connecting $u$ and $v$: how it alternates
	between grid edges and long-range edges, and the direction on each
	dimension of the grid edges and long-range edges (i.e., whether it is
	the same direction as from $u$ to $v$ in this dimension, or the opposite
	direction, or no move in this dimension).
We then bound the probability of existing a path in each category and finally 
	bound all such paths in aggregate.
The most difficult part of the analysis is the bounding of the probability
	of existing a path in each category.

For the case of $d = 1$ and $0 \le \gamma \le d$, the general idea is similar
	to the above.
The difference is that we do not have a base square to start with.
Instead, we find a base ring of length $\Theta(\ell_0)$ using one
	long-range edges $e_0$, where $\ell_0$ is fixed to be the same
	as the case of $d\ge 2$.
We show that with high probability, 
	(a) such an edge $e_0$ exists, and
	(b) the distance of any two vertices on the ring is simply their ring 
	distance.
This is enough to show the lower bound on the hyperbolic $\delta$.


For the case of $\gamma > 3$ and $d=1$, a different technique is used to
	prove the lower bound on hyperbolic $\delta$.
We first show that, in this case, with high probability all long-range
	edges only connect two vertices with ring distance at most some
	$\ell_0 = o(\sqrt{n})$.
Next, on the one dimensional ring, we first find two vertices $A$ and
	$B$ at the two opposite ends on the ring.
Then we argue that there must be a path ${\cal P}_{AB}^+$ that only 
	goes through the clockwise side of ring from $A$ to $B$, 
	while another path ${\cal P}_{AB}^-$ that
	only goes through the counter-clockwise side of the ring from $A$ to $B$,
	and importantly, the shorter length of 
	these two paths are at most $O(\ell_0)$
	longer than the distance between $A$ and $B$.
We then pick the middle point $C$ and $D$ of ${\cal P}_{AB}^+$ and
	${\cal P}_{AB}^-$, respectively, and argue that the $\delta$ value
	of the four points
	$A$, $B$, $C$, and $D$ give the desired lower~bound.

\subsection{Detailed proof of Theorem~\ref{thm:main}} 
\label{sec:swproof-app}

\subsubsection{The case of $d\ge 2$ and $0\le \gamma \le d$}\label{sec:d2}

For this case, let $n'=n^{1/d}$ be the number of vertices on one side of the grid.
For convenience, our main analysis for this case uses $n'$ instead of $n$.

\vspace{\topsep}

\noindent
{\bf Lemmas for calculation.}\ \ 
We first provide a couple of lemmas used in our probability 
	calculation.

\begin{lem} \label{lemma:calc1}
There exists a constant $c_1$, such that for any $k,m\in\mathbb{Z}^+$, we
	have
\[\sum_{y_1+\cdots+y_k=m \atop y_1,\ldots,y_k\in\mathbb{Z}^+}\frac{1}{y_1y_2\cdots y_k}\leq\frac{(c_1\ln m)^{k-1}}{m},\]
where the left side is considered to be 0 for $k>m$; and for $k=m=1$, 
	the right side $0^0$ is considered to be 1.
\end{lem}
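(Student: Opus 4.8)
The plan is to prove the bound by induction on $k$, writing $S_k(m)$ for the sum on the left-hand side. The base case $k=1$ is the identity $S_1(m)=1/m$, which is exactly the claimed right-hand side $(c_1\ln m)^{0}/m$ for every choice of $c_1$ (and for $m=1$ this is $1$, matching the stated $0^0=1$ convention). So the base case holds for all $m\in\mathbb{Z}^+$.

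For the inductive step, assume the bound for $k-1$ (all $m$) and peel off the last coordinate: conditioning on $y_k=j$ forces $y_1+\cdots+y_{k-1}=m-j$, so
\[
S_k(m)=\sum_{j=1}^{m-1}\frac{1}{j}\,S_{k-1}(m-j),
\]
where the terms with $m-j<k-1$ contribute $0$ (no compositions exist), and the whole sum is empty — hence $0$ — when $k>m$, as required. Applying the induction hypothesis $S_{k-1}(m-j)\le (c_1\ln(m-j))^{k-2}/(m-j)$ and using $\ln(m-j)\le\ln m$, we get
\[
S_k(m)\le (c_1\ln m)^{k-2}\sum_{j=1}^{m-1}\frac{1}{j(m-j)}.
\]

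The remaining ingredient is the elementary evaluation $\sum_{j=1}^{m-1}\frac{1}{j(m-j)}=\frac{2H_{m-1}}{m}$, which follows from the partial-fraction identity $\frac{1}{j(m-j)}=\frac{1}{m}\!\left(\frac{1}{j}+\frac{1}{m-j}\right)$, where $H_{m-1}=\sum_{i=1}^{m-1}1/i$. Since $H_{m-1}\le 1+\ln m$ for $m\ge 2$ and $1\le \ln m/\ln 2$ there, one has $2H_{m-1}\le c_1\ln m$ for a suitable absolute constant $c_1$ (for instance $c_1=5$ works); for $m=1$ the claim is vacuous when $k\ge 2$. Substituting gives $S_k(m)\le (c_1\ln m)^{k-2}\cdot\frac{c_1\ln m}{m}=\frac{(c_1\ln m)^{k-1}}{m}$, which closes the induction.

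There is no genuine difficulty here: the only points needing care are the boundary conventions ($0^0=1$, the empty sum, and the regime $m<k$ where both sides vanish) and picking $c_1$ large enough that $2H_{m-1}\le c_1\ln m$ survives the smallest relevant case $m=2$; everything else is the one-line recursion for $S_k(m)$ together with the partial-fraction computation of $\sum_j \frac{1}{j(m-j)}$.
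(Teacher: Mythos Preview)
Your proof is correct and follows essentially the same inductive strategy as the paper: peel off one coordinate, apply the induction hypothesis with $\ln(m-j)\le\ln m$, and bound the remaining sum $\sum_{j=1}^{m-1}\frac{1}{j(m-j)}$ by $\frac{c_1\ln m}{m}$. Your use of the partial-fraction identity $\frac{1}{j(m-j)}=\frac{1}{m}\bigl(\frac{1}{j}+\frac{1}{m-j}\bigr)$ to get the exact value $\frac{2H_{m-1}}{m}$ is a bit cleaner than the paper's cruder bound for the $k=2$ case, but the overall argument is the same.
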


\begin{proof}
For $k=1$, it is trivial. For $k=2$, we have
\begin{align*}
\sum_{y_1+y_2=m \atop y_1,y_2\in\mathbb{Z}^+}\frac{1}{y_1y_2}&\leq2\left(\frac{1}{\lfloor m/2\rfloor\cdot\lceil m/2\rceil}+\cdots+\frac{1}{(m-1)\cdot1}\right)\leq\frac{2}{\lfloor m/2\rfloor}\left(\frac{1}{\lceil m/2\rceil}+\cdots+\frac{1}{1}\right) \\
&<c_1\frac{\ln m}{m},
\end{align*}
where $c_1$ is roughly 4.

Suppose the lemma holds for $k-1$, with $k\ge 3$.
The induction hypothesis is
\[\sum_{y_1+\cdots+y_{k-1}=m \atop y_1,\ldots,y_{k-1}\in\mathbb{Z}^+}\frac{1}{y_1y_2\cdots y_{k-1}}\leq\frac{(c_1\ln m)^{k-2}}{m},\] Since the logarithm function is increasing, we have 
\begin{align*}
\sum_{y_1+\cdots+y_k=m \atop y_1,\ldots,y_k\in\mathbb{Z}^+}\frac{1}{y_1y_2\cdots y_k} & \leq \frac{1}{1}\frac{(c_1\ln(m-1))^{k-2}}{m-1}+\cdots+\frac{1}{m-1}\frac{(c_1\ln 1)^{k-2}}{1} \\
& \leq (c_1\ln m)^{k-2}\cdot\sum_{y_1+y_2=m \atop y_1,y_2\in\mathbb{Z}^+}\frac{1}{y_1y_2} \\
& \leq (c_1\ln m)^{k-2}\cdot \frac{c_1\ln m}{m}.
\end{align*}
Therefore the inequality holds for all $k$.
\end{proof}

\begin{lem} \label{lemma:calc2}
For any constant $\theta\in \mathbb{R}$ with $0\le \theta < 1$,
	there exists a constant $c_2$ (may only depend on $\theta$), such that
	for any constants $k,n'\in\mathbb{Z}^+$, $m\in\mathbb{R}$, and
	non-zero $\lambda_1,\lambda_2,\ldots,\lambda_k\in\mathbb{R}$, we have
\[\sum_{\lambda_1y_1+\cdots+\lambda_ky_k=m \atop y_1,\ldots,y_k\in\{1,2,\ldots,n'\}}\frac{1}{y_1^\theta y_2^\theta \cdots y_k^\theta}\leq(c_2n')^{(k-1)(1-\theta)},\]
where the left side is considered to be 0 if there is no 
	$y_1,y_2,\ldots,y_k\in\{1,2,\ldots,n'\}$ satisfying 
	$\lambda_1y_1+\lambda_2y_2+\cdots+\lambda_ky_k=m$.
\end{lem}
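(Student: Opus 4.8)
The plan is to exploit the single linear constraint $\lambda_1 y_1 + \cdots + \lambda_k y_k = m$ to eliminate one summation variable, and then to bound the remaining unconstrained sum as a product of one‑dimensional sums of the form $\sum_{y=1}^{n'} y^{-\theta}$, each of which is $O\big((n')^{1-\theta}\big)$ because $0\le\theta<1$.

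First I would fix one index, say $k$ (this is legitimate since $\lambda_k\ne 0$), and observe that for every choice of $(y_1,\dots,y_{k-1})\in\{1,\dots,n'\}^{k-1}$ there is \emph{at most one} value of $y_k$ satisfying the constraint, namely $y_k=(m-\lambda_1 y_1-\cdots-\lambda_{k-1}y_{k-1})/\lambda_k$. This value may fail to be a positive integer or to lie in $\{1,\dots,n'\}$, in which case the corresponding term of the sum is simply absent; whenever it is present we have $y_k\ge 1$, so $1/y_k^\theta\le 1$. Dropping this factor and letting $y_1,\dots,y_{k-1}$ range freely therefore gives
\[
\sum_{\lambda_1y_1+\cdots+\lambda_ky_k=m \atop y_1,\ldots,y_k\in\{1,2,\ldots,n'\}}\frac{1}{y_1^\theta\cdots y_k^\theta}
\;\le\;\sum_{y_1,\ldots,y_{k-1}\in\{1,2,\ldots,n'\}}\frac{1}{y_1^\theta\cdots y_{k-1}^\theta}
\;=\;\left(\sum_{y=1}^{n'}\frac{1}{y^\theta}\right)^{k-1}.
\]

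Next I would bound the one‑dimensional sum by comparison with an integral: since $x\mapsto x^{-\theta}$ is non‑increasing for $0\le\theta<1$,
\[
\sum_{y=1}^{n'}\frac{1}{y^\theta}\;\le\;1+\int_1^{n'}x^{-\theta}\,dx\;=\;1+\frac{(n')^{1-\theta}-1}{1-\theta}\;\le\;\Big(1+\tfrac{1}{1-\theta}\Big)(n')^{1-\theta},
\]
where the last inequality uses $(n')^{1-\theta}\ge 1$. Taking $c_2:=\big(1+\tfrac{1}{1-\theta}\big)^{1/(1-\theta)}$, which depends only on $\theta$, this reads $\sum_{y=1}^{n'}y^{-\theta}\le (c_2 n')^{1-\theta}$, and raising to the $(k-1)$‑th power yields exactly $(c_2 n')^{(k-1)(1-\theta)}$. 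The degenerate cases are immediate: if no $(y_1,\dots,y_k)$ satisfies the constraint the left side is $0$, and for $k=1$ the left side is a single term bounded by $1=(c_2 n')^{0}$.

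There is no substantial obstacle in this argument — it is a routine elimination‑plus‑integral‑comparison. The only points deserving care are (i) that the constant $c_2$ must be chosen uniformly in $k$ and $n'$, which the displayed choice is, and (ii) that the factor $1/y_k^\theta$ arising from the eliminated variable is genuinely at most $1$, which is where the hypothesis that the summation variables are positive integers is used. One could replace the integral comparison with the cruder bound $\sum_{y=1}^{n'}y^{-\theta}\le n'$, but that would weaken the exponent; keeping the sharp exponent $(k-1)(1-\theta)$ (which is what the downstream path‑counting estimates require) is precisely why the integral estimate is the right tool here.
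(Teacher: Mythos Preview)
Your proposal is correct and follows essentially the same argument as the paper: eliminate $y_k$ via the linear constraint (at most one solution since $\lambda_k\ne 0$), bound $1/y_k^\theta\le 1$, factor the remaining free sum as $\bigl(\sum_{y=1}^{n'}y^{-\theta}\bigr)^{k-1}$, and finish with the standard integral estimate. The paper's proof is the same line for line, only stating the constant as ``roughly $(\tfrac{1}{1-\theta})^{1/(1-\theta)}$'' rather than your more explicit $\bigl(1+\tfrac{1}{1-\theta}\bigr)^{1/(1-\theta)}$.
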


\begin{proof}
For each tuple $(y_1,y_2,\ldots,y_{k-1})\in\{1,2,\ldots,n'\}^{k-1}$, there is at most one $y_k\in\{1,2,\ldots,n'\}$ satisfying $\lambda_1y_1+\lambda_2y_2+\cdots+\lambda_ky_k=m$. 
Since $\frac{1}{y_k^\theta}\leq1$ because $0\le \theta < 1$, we have 
\begin{align*}
\sum_{\lambda_1y_1+\cdots+\lambda_ky_k=m \atop y_1,\ldots,y_k\in\{1,2,\ldots,n'\}}\frac{1}{y_1^\theta y_2^\theta \cdots y_k^\theta}&\leq\sum_{y_1,\ldots,y_{k-1}\in\{1,2,\ldots,n'\}}\frac{1}{y_1^\theta y_2^\theta \cdots y_{k-1}^\theta} \\
&=\left(\sum_{i=1}^{n'}\frac{1}{i^\theta}\right)^{k-1}\leq(c_2n')^{(k-1)(1-\theta)},
\end{align*}
where $c_2$ is roughly $(\frac{1}{1-\theta})^{\frac{1}{1-\theta}}$. Therefore the lemma is proved.
\end{proof}

\vspace{\topsep}
\noindent
{\bf Classification of paths.}\ \ 
In a $d$-dimensional random graph $\KSW(n,d,\gamma)$, 
	there are two kinds of edges: {\em grid edges}, which are 
	edges on the grid, and {\em long-range edges}, which are randomly added. 


Fix two vertices $u$ and $v$, a path from $u$ to $v$ may contain some 
	long-range edges and some grid edges. 
We divide the path into several {\em segments} along the way from $u$ to $v$:
	(a) each segment is either one long-range edge (called a {\em long-range
	segment}) or a batch of consecutive 
	grid edges (called a {\em grid segment}); and
	(b) two consecutive segments  cannot be both grid segments
	(otherwise combining them into one segment).

We use a $d$-dimensional vector to denote each edge, so that the source coordinate plus this vector equals to the destination coordinate module $n'$. For grid with wrap-around, there may be multiple vectors corresponding to one edge. We choose the vector in which every element is from 
	$\{-\lfloor\frac{n'}{2}\rfloor,-\lfloor\frac{n'}{2}\rfloor+1,
	\ldots,\lfloor\frac{n'-1}{2}\rfloor\}$.
In this way, the vector representation of each edge is unique and the absolute value of every dimension is the smallest. We call this the {\em edge vector} of that edge. For every segment in the path, we call the summation (not module $n'$) of all edge vectors the {\em segment vector}. 
For a vector $(x_1,x_2,\ldots,x_d)$, define its {\em sign pattern} as $(\sgn(x_1), \sgn(x_2), \ldots, \sgn(x_d))$.

We say two paths from $u$ to $v$ belong to the same {\em category} if
	(a) they have the same number of segments; 
	(b) their corresponding segments are of the same type (long-range
	or grid segments); 
	(c) for every pair of corresponding long-range segments in the two paths, 
	the sign patterns of their segment vectors are the same;
	(d) for every pair of corresponding grid segments in the two paths,
	their segment vectors are equal; and
	(e) the summations (not module $n'$) of all segment vectors in the 
	two paths are equal.

The last condition is only used to distinguish paths that go different rounds in each dimension on grid with wrap-around.

In one category, there exist paths of which the long-range edges are
	identical but the grid edges may be different.
To compute the probability of existing a path in a category, we only
	need to consider one path among the paths with identical long-range
	edges, since grid edges do not change probabilistic events and thus
	one such path exists if any only if other such paths exist.
	
We also assume that there are no repeated long-range edges in every path. For a path that has repeated long-range edges, we can obtain a shorter subpath without any repeated long-range edges so that the original path exists if and only if the new one exists. Since we are going to calculate the probability about paths not exceeding some length, it is safe to only consider paths without repeated long-range edges.



\begin{lem} \label{lemma:numcate2}
There exists a constant $c_3$ (dependent on $d$) such that
	for any fixed $\ell$, the number of categories of paths from
	$u$ to $v$ of
	length $\ell$ is at most ${c_3}^\ell$.
\end{lem}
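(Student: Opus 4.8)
The plan is to encode every category by a word of length $\ell$ over an alphabet whose size depends only on $d$, together with one ``global'' displacement vector for which only polynomially-in-$\ell$ many values are admissible; the bound $c_3^{\ell}$ will then follow with a $d$-dependent $c_3$.

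First I would attach to a path of length $\ell$ the following data. For each of its $\ell$ edges, record a symbol: if the edge is a grid edge, the unit vector $\pm e_i$ it realizes (at most $2d$ possibilities); if it is a long-range edge, the sign pattern of its edge vector (one of the $3^{d}$ elements of $\{-1,0,1\}^{d}$). In addition, record the total displacement $S=\sum(\text{edge vectors})$, summed in $\mathbb{Z}^{d}$ and not modulo $n'$. I would then check that two length-$\ell$ paths carrying the same data lie in the same category: the sequence of grid/long-range tags determines the decomposition into segments (a grid segment is a maximal run of grid edges, a long-range segment is a single long-range edge), which gives conditions (a) and (b); for (c), a long-range segment is one edge whose segment vector is its edge vector, with the recorded sign pattern; for (d), a grid segment's vector is the sum of the recorded unit vectors of its edges, hence the same for both paths; and (e) is precisely equality of $S$. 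Thus the number of categories is at most the number of admissible data pairs.

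The remaining counting is elementary. The per-edge symbols form a word of length $\ell$ over an alphabet of size $2d+3^{d}$, so there are at most $(2d+3^{d})^{\ell}$ of them. As for $S$: every edge vector has all coordinates of absolute value at most $\lfloor n'/2\rfloor$, so each coordinate of $S$ lies in an interval of length at most $\ell n'$; since $S\equiv v-u\pmod{n'}$ coordinatewise, each coordinate of $S$ takes at most $\ell+1$ values, whence $S$ has at most $(\ell+1)^{d}$ admissible values. So the number of categories is at most $(2d+3^{d})^{\ell}(\ell+1)^{d}$, and since $(\ell+1)^{d}\le((2e)^{d})^{\ell}$ for every $\ell\ge 1$ (using $\ell+1\le 2\ell$ and $\ell\le e^{\ell-1}$), this is at most $\bigl((2d+3^{d})(2e)^{d}\bigr)^{\ell}$, proving the lemma with $c_3=(2d+3^{d})(2e)^{d}$.

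The one delicate point is the global vector $S$ together with condition (e): since a long-range edge may have an edge vector with coordinates of size $\Theta(n')$, it is not obvious a priori that the number of admissible values of $S$ is independent of $n'$. The resolution is that a path using only $\ell$ edges can wrap around the torus at most $O(\ell)$ times in any coordinate, so once the residue $v-u$ modulo $n'$ is fixed there remain only $O(\ell)$ choices per coordinate. This is the only step where the bound on the path length is really used; every other step is an over-counting that is harmless for an upper bound.
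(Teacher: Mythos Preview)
Your proof is correct and follows essentially the same approach as the paper: encode each edge by one of $2d+3^d$ symbols and bound the number of admissible total displacement vectors by a polynomial in $\ell$ (the paper uses $(2\ell+1)^d$ where you use $(\ell+1)^d$), then absorb the polynomial into the exponential. Your justification for the displacement bound via the congruence $S\equiv v-u\pmod{n'}$ is a slightly more explicit version of the paper's ``number of rounds'' argument.
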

\begin{proof}
For each edge on the path, if it is a grid edge, it could be in one of
	the $d$ dimensions, and in each dimension it could be in one of the
	two opposite directions, and thus a grid edge has $2d$ possibilities.
If the edge is a long-range edge, on each dimension its sign has three
	possibilities $(+1,0,-1)$, so totally $3^d$ possibilities for
	the sign pattern of the long-range segment vector.
Moreover, in the grid with wrap-around, we consider each wrap-around 
	of the path on some dimension to be one round in that dimension.
Then the path can go at most $2\ell+1$ different numbers of rounds on each dimension (ranging from $\ell$ rounds in one direction up to $\ell$ rounds in the other direction), so the summation of all segment vectors has at most $(2\ell+1)^d$ different values.
The choice of each edge out of $2d+3^d$ possibilities and the total summation of segments vectors determine a category.
Therefore, the number of categories is bounded by $(2d+3^d)^\ell(2\ell+1)^d<{c_3}^\ell$ for some $c_3$.
\end{proof}

The above bound on the number of categories is not tight enough to be
	used for later analysis, when the number of long-range segments
	are small.
Thus, we further bound the number of categories in the following~way.

\begin{lem} \label{lemma:numcate1}
There exists a constant $c_4$ (dependent on $d$), such that for
	any fixed $\ell<n'$ and $k$ with $1\leq k\leq \ell$, 
	the number of categories of paths from $u$ to $v$ of length 
	$\ell$ and having $k$ long-range segments is 
	at most ${c_4}^k\ell^{(k+1)(d+1)}/k^{kd}$.
\end{lem}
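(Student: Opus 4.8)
The plan is to describe a category by a short list of independent combinatorial choices and to multiply the number of options for each. Fix $u,v$ and $\ell<n'$, and consider a category with exactly $k$ long-range segments. By the definition of a category (conditions (a)--(e)) such a category is completely determined by the following data: (i) its \emph{shape}, i.e.\ the order in which the $k$ long-range segments and the grid segments occur along the path, where, since no two grid segments are adjacent, the number $j$ of grid segments satisfies $0\le j\le k+1$; (ii) the sign pattern in $\{-1,0,+1\}^{d}$ of each of the $k$ long-range segment vectors; (iii) the grid segment vectors $w_{1},\dots,w_{j}\in\mathbb{Z}^{d}$; and (iv) the total segment-vector sum $S\in\mathbb{Z}^{d}$. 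I will bound each of these four factors and take the product. A long-range segment is one long-range edge, so the path has $k$ long-range edges and $\ell-k$ grid edges.

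The number of shapes is at most $\sum_{j=0}^{k+1}\binom{k+1}{j}=2^{k+1}$ (choose which of the $k+1$ gaps around the long-range segments carries a grid segment), and the number of choices in (ii) is $3^{dk}$. For (iv): each of the $\ell-k$ grid edges contributes $\pm1$ to a single coordinate, while each of the $k$ long-range edges has all coordinates in $[-\lfloor n'/2\rfloor,\lfloor(n'-1)/2\rfloor]$ because we chose the shortest representative of the edge vector; hence $\|S\|_{\infty}<(\ell-k)+kn'/2<n'(k+2)/2$, and since moreover $S\equiv v-u\pmod{n'}$ coordinatewise, each coordinate of $S$ has at most $k+3$ admissible values, so there are at most $(k+3)^{d}$ possibilities for $S$. (This refines the $(2\ell+1)^{d}$ count used in the proof of Lemma~\ref{lemma:numcate2}: since $\ell<n'$ the grid edges cannot complete even one wrap-around, so only the $k$ long-range edges contribute to the round count.)

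The main step is (iii). Grid segment $i$ consists of $g_{i}\ge1$ grid edges with $\sum_{i=1}^{j}g_{i}=\ell-k$, and $w_{i}$ is a sum of $g_{i}$ unit vectors, so $w_{i}$ lies in the $\ell_{1}$-ball of radius $g_{i}$, which contains at most $(2g_{i}+1)^{d}$ lattice points. Over-counting by ranging over all compositions of $\ell-k$ into $j$ positive parts $g_{1},\dots,g_{j}$ and applying AM--GM to the $j$ numbers $2g_{i}+1$, the number of tuples $(w_{1},\dots,w_{j})$ for a fixed $j\ge1$ is at most
\[
\binom{\ell-k-1}{j-1}\cdot\Big(\tfrac{2(\ell-k)+j}{j}\Big)^{jd}
\;\le\;\Big(\tfrac{e\ell}{j}\Big)^{j}\Big(\tfrac{3\ell}{j}\Big)^{jd}
\;\le\;\Big(\tfrac{C\ell}{j}\Big)^{j(d+1)}
\]
for a constant $C=C(d)$, where I used $\binom{\ell-k-1}{j-1}\le\binom{\ell}{j}\le(e\ell/j)^{j}$ and $j\le\ell$ (valid in the nondegenerate case $\ell>k$). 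The function $j\mapsto(C\ell/j)^{j(d+1)}$ is nondecreasing for $j\le C\ell/e$, hence on the whole range $1\le j\le\min(k+1,\ell-k)$, so its maximum there is attained at $j=k+1$ when $\ell\ge2k+1$ and at $j=\ell-k$ otherwise. In the first case the maximum is at most $C^{(k+1)(d+1)}\ell^{(k+1)(d+1)}/(k+1)^{(k+1)(d+1)}$; in the second, since $\ell\le 2k$ gives $\ell/(\ell-k)\le2k/(\ell-k)$ and $t\mapsto(2Ck/t)^{t(d+1)}$ is increasing on $[1,k]$, it is at most $(2C)^{k(d+1)}$. Summing over $j$ merely reintroduces the factor $2^{k+1}$ already charged to (i).

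It remains to multiply (i)--(iv). When $\ell\ge2k+1$, the denominator from (iii) absorbs everything: $(k+1)^{(k+1)(d+1)}\ge k^{(k+1)(d+1)}=k^{kd}\cdot k^{k+d+1}\ge k^{kd}\cdot k^{d}$, so the product of (iii) and (iv) is at most $4^{d}C^{(k+1)(d+1)}\ell^{(k+1)(d+1)}/k^{kd}$; when $\ell<2k+1$, one instead uses $\ell\ge k$, whence $\ell^{(k+1)(d+1)}\ge k^{(k+1)(d+1)}=k^{kd}\cdot k^{k+d+1}$ dominates the bounded-degree factors coming from (iii) and (iv). Folding all remaining purely $k$-exponential terms and all $d$-dependent constants into a single $c_{4}=c_{4}(d)$, and checking the degenerate case $\ell=k$ (where $j=0$ and only (ii) and (iv) contribute) separately, yields the stated bound ${c_{4}}^{k}\ell^{(k+1)(d+1)}/k^{kd}$. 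I expect the delicate part to be exactly this last accounting: one must confirm that the AM--GM estimate in (iii) --- which is tightest precisely when the $j$ grid segments have roughly equal length $\approx(\ell-k)/k$ --- together with the $(k+3)^{d}$ wrap-around count of (iv), leaves just enough room for the $1/k^{kd}$ factor and no more.
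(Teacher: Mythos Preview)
Your proof is correct and follows essentially the same approach as the paper's: fix the number of grid segments, bound the number of grid segment-vector tuples by AM--GM on the segment lengths, bound the long-range contribution by $3^{dk}$ sign patterns, multiply by an $O(k)^{d}$ wrap-around count, and maximize over the number of grid segments (the maximum landing at $j=k+1$). The differences are purely bookkeeping: you count compositions by $\binom{\ell-k-1}{j-1}$ where the paper uses the cruder $\ell^{t}$; you include the shape factor $2^{k+1}$ explicitly (the paper omits it, though it is harmlessly absorbed in $c_{4}$); your wrap-around count $(k+3)^{d}$ refines the paper's $(2k+1)^{d}$ by noting that $\ell<n'$ prevents grid edges from contributing a full round; and you treat the regime $\ell\le 2k$ separately, whereas the paper simply asserts the maximum is at $t=k+1$ (which is justified for $\ell>e^{d}$, the remaining $\ell$ being absorbed in $c_{4}$). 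None of these changes the structure of the argument.
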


\begin{proof}
For a path from $u$ to $v$ of length $\ell$ and containing $k$ long-range 
	segments ($1\leq k\leq \ell<n'$), the summation of all segment vectors has at most $(2k+1)^d$ choices. This is because the path can go at most $2k+1$ different number of rounds on each dimension ($k$ rounds in one direction to $k$ rounds in the other direction). We consider the number of categories for a fixed summation of segment vectors first.
	
	Suppose there are $t$ 
	grid segments, each having $a_1,a_2,\ldots,a_t$ edges respectively 
	($t\leq k+1$, $a_i\ge 1$, $a_1+a_2+\cdots+a_t<\ell$). 
If $t=0$, then $k=\ell$, and it is easy to see that there are at most
	$(3^d)^k $ categories.
Suppose now $t \ge 1$.
For the $i$-th grid segment with $a_i$ grid edges, its segment vector
	is such that on each dimension the only possible values 
	are $-a_i, -a_i+1, \ldots, 0, \ldots, a_i-1, a_i$.
Thus, the number of possible segment vectors is $(2a_i+1)^d \le 3^da_i^d$.
Since each long-range edge has $3^d$ possible sign patterns, 
	the number of categories for fixed $t$ and 
	$a_1,a_2,\ldots,a_t$ is at most
	$(3^d)^k\prod_{i=1}^t3^da_i^d<9^{(k+1)d}(\ell/t)^{td}$, where the
	inequality of arithmetic and geometry means is used.

The tuple $(a_1,a_2,\ldots,a_t)$ has less than $\ell^t$ possibilities. 
Considering $(2k+1)^d$ different possibilities of segment vectors summations, the total number of categories from $u$ to $v$ with length $\ell$ 
	and containing $k$ long-range edges is at most
\begin{eqnarray*}
&& (2k+1)^d\left\{(3^d)^k+\sum_{t=1}^{k+1}\ell^t\cdot9^{(k+1)d}(\ell/t)^{td}\right\} \\
& < & (2k+1)^d\left\{3^{dk}+(k+1)9^{(k+1)d}\max_{1\leq t\leq k+1}\{\ell^t(\ell/t)^{td}\}\right\} \\
& = & (2k+1)^d\left\{3^{dk}+(k+1)9^{(k+1)d}\ell^{k+1}(\frac{\ell}{k+1})^{(k+1)d}\right\} \\
& < & c_4^k\cdot\frac{\ell^{(k+1)(d+1)}}{k^{kd}},
\end{eqnarray*}
where $c_4$ is a constant depending only on $d$. 
\end{proof}

\vspace{\topsep}
\noindent
{\bf Probability calculation.}\ \ 
We first give a lemma to calculate the probability of the existence of a specific edge. For an integer $x$, we define $\overline{x}$ to be 
	$|x|$ if $x\neq 0$ and $1$ if $x=0$.
We also define function $f(n')$ as follows:
\[
f(n') = \begin{cases}
\ln n' & \gamma=d, \\
(n')^{d-\gamma} & 0\le \gamma < d.
\end{cases}
\]

\begin{lem} \label{lemma:probedge}
For two vertices $u$ and $v$, the probability of the existence of a
	long-range undirected edge between $u$ and $v$ is at most 
	$c_5(\overline{x_1}\cdot\overline{x_2}\cdots\overline{x_d})^{-\frac{\gamma}{d}}
	/ f(n')$,
	 where $c_5$ is a constant depending only on $d$ and $\gamma$, and $(x_1,x_2,\ldots,x_d)$ is the edge vector if there exists a long-range edge from $u$ to $v$ and depends only on $u$ and~$v$.
\end{lem}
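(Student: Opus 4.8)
The plan is to read the edge probability straight off Kleinberg's rule and then translate a bound stated in terms of the grid distance $d_B(u,v)$ into one stated in terms of the product $\overline{x_1}\cdots\overline{x_d}$ of the edge-vector coordinates. In $\KSW(n,d,\gamma)$ each vertex $w$ independently selects its unique long-range endpoint to be $w''$ with probability $d_B(w,w'')^{-\gamma}/Z$, where $Z=\sum_{w'\neq w}d_B(w,w')^{-\gamma}$; since the base grid is a vertex-transitive torus, $Z$ is the same for every $w$. An undirected long-range edge joins $u$ and $v$ precisely when $u$ picks $v$ or $v$ picks $u$, so by a union bound, and using $d_B(u,v)=d_B(v,u)$ together with $Z_u=Z_v$, the probability of this event is at most $2\,d_B(u,v)^{-\gamma}/Z$. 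It therefore suffices to lower bound $Z$ and to compare $d_B(u,v)^{-\gamma}$ with $(\overline{x_1}\cdots\overline{x_d})^{-\gamma/d}$.

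For the lower bound on $Z$, write $n'=n^{1/d}$ and note that for every $r$ with $1\le r\le\lfloor n'/2\rfloor$ the number of vertices at grid distance exactly $r$ from a fixed vertex is $\Omega(r^{d-1})$: already the compositions of $r$ into $d$ positive parts contribute $\binom{r-1}{d-1}$ such vertices once $r\ge d$, and each of those parts is automatically at most $n'/2$, so the corresponding edge vectors are legitimate representatives. Hence $Z\ge c_d\sum_{r=d}^{\lfloor n'/2\rfloor}r^{\,d-1-\gamma}$ for a constant $c_d>0$ depending only on $d$. If $\gamma=d$ this is a harmonic-type sum, so $Z=\Omega(\ln n')$; if $0\le\gamma<d$ the exponent satisfies $d-1-\gamma>-1$, so comparison with $\int x^{\,d-1-\gamma}\,dx$ gives $Z=\Omega((n')^{d-\gamma})$. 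In both cases $Z\ge c\,f(n')$ for a constant $c>0$ depending only on $d$ and $\gamma$ (the finitely many small values of $n'$ being absorbed into the constant).

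To compare the two distance expressions, recall that the edge vector $(x_1,\dots,x_d)$ has each coordinate of minimal absolute value, so $d_B(u,v)=\sum_{i=1}^d|x_i|$. Since at most $d$ of the coordinates vanish and $d_B(u,v)\ge1$, we get $\sum_{i=1}^d\overline{x_i}\le d_B(u,v)+d\le(d+1)\,d_B(u,v)$, and AM--GM then yields $(\overline{x_1}\cdots\overline{x_d})^{1/d}\le\frac1d\sum_{i=1}^d\overline{x_i}\le\frac{d+1}{d}\,d_B(u,v)$. Raising to the power $\gamma$ and inverting, $d_B(u,v)^{-\gamma}\le\big(\tfrac{d+1}{d}\big)^{\gamma}(\overline{x_1}\cdots\overline{x_d})^{-\gamma/d}$ (the case $\gamma=0$ being immediate). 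Substituting this and $Z\ge c\,f(n')$ into the union bound shows the edge probability is at most $\frac{2}{c}\big(\tfrac{d+1}{d}\big)^{\gamma}(\overline{x_1}\cdots\overline{x_d})^{-\gamma/d}/f(n')$, which is the claim with $c_5=\frac{2}{c}\big(\tfrac{d+1}{d}\big)^{\gamma}$, a constant depending only on $d$ and $\gamma$.

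The main obstacle is the lower bound on the normalizing constant $Z$: it is the only step requiring genuine lattice-point counting in the wrap-around grid, and tracking the correct dependence on $\gamma$ there — the logarithm when $\gamma=d$ versus the power $(n')^{d-\gamma}$ when $\gamma<d$ — is exactly what dictates the piecewise definition of $f$. The remaining ingredients, the union bound over the two possible ``owners'' of the edge and the AM--GM passage from $d_B(u,v)$ to $\overline{x_1}\cdots\overline{x_d}$, are routine.
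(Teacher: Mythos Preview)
Your proof is correct and follows essentially the same approach as the paper's: a union bound over the two possible orientations of the long-range edge, the estimate $Z=\Theta\!\bigl(\sum_{r=1}^{n'} r^{\,d-1-\gamma}\bigr)=\Theta(f(n'))$ for the normalizing constant, and an AM--GM passage from $d_B(u,v)=\sum_i|x_i|$ to the product $\prod_i\overline{x_i}$. The only cosmetic difference is that the paper applies AM--GM to the nonzero coordinates $x_{i_1},\dots,x_{i_{d'}}$ and then uses $d'\le d$ to relax the exponent $-\gamma/d'$ to $-\gamma/d$, whereas you apply AM--GM directly to the $\overline{x_i}$'s after bounding $\sum_i\overline{x_i}\le(d+1)\,d_B(u,v)$; both routes yield the same bound up to a constant depending only on $d$ and $\gamma$.
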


\begin{proof}
Say the non-zero elements of $(x_1,x_2,\ldots,x_d)$ are $(x_{i_1},x_{i_2},\ldots,x_{i_{d'}})$ ($d'\leq d$). Let $p$ be the probability to add an edge from $u$ to $v$. Then
\begin{align*}
p&=\frac{(|x_1|+|x_2|+\cdots+|x_d|)^{-\gamma}}{\Theta(\sum_{i=1}^{n'}\frac{i^{d-1}}{i^\gamma})}=O\left(\frac{|x_{i_1}\cdot x_{i_2}\cdots x_{i_{d'}}|^{-\frac{\gamma}{d'}}}{\sum_{i=1}^{n'} i^{d-1-\gamma}}\right) \\
&\leq O\left(\frac{|x_{i_1}\cdot x_{i_2}\cdots x_{i_{d'}}|^{-\frac{\gamma}{d}}}{\sum_{i=1}^{n'} i^{d-1-\gamma}}\right)=O\left(\frac{(\overline{x_1}\cdot\overline{x_2}\cdots\overline{x_d})^{-\frac{\gamma}{d}}}{\sum_{i=1}^{n'} i^{d-1-\gamma}}\right)=O\left(\frac{(\overline{x_1}\cdot\overline{x_2}\cdots\overline{x_d})^{-\frac{\gamma}{d}}}{f(n')}\right).
\end{align*}

Moreover, the edge may also be from $v$ to $u$, which also has probability $p$. By union bound the probability of the undirected 
	edge $(u,v)$ is $O(p)$.
\end{proof}

The following lemma gives the probability that one edge jumps within a local area.

\begin{lem} \label{lemma:localjump}
For a vertex $u$ and a long-range edge $(u,v)$ from $u$, the probability that the grid distance between $u,v$ is less than $s$ is at most $c_6f(s)/f(n')$, where $c_6$ is a constant depending only on $d$ and $\gamma$.
\end{lem}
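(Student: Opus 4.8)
The plan is to sum the edge‑existence bound of Lemma~\ref{lemma:probedge} over all vertices $v$ whose grid distance from $u$ is less than $s$, and then apply a union bound. First I would note that for a fixed magnitude profile $(\overline{x_1},\ldots,\overline{x_d})$ the number of vertices $v$ realizing it (via the edge vector) is bounded, and the grid distance is $\sum_i |x_i|$, so the event ``$d_B(u,v) < s$'' restricts us to vectors with $\sum_i |x_i| < s$. Hence by Lemma~\ref{lemma:probedge} the probability that the long‑range edge from $u$ lands within distance $s$ is at most
\[
\sum_{\substack{(x_1,\ldots,x_d) \\ \sum_i |x_i| < s}} \frac{c_5\,(\overline{x_1}\cdot\overline{x_2}\cdots\overline{x_d})^{-\gamma/d}}{f(n')}
\;=\; \frac{c_5}{f(n')} \sum_{\substack{(x_1,\ldots,x_d) \\ \sum_i |x_i| < s}} (\overline{x_1}\cdots\overline{x_d})^{-\gamma/d}.
\]

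Next I would estimate the inner sum. Since each $\overline{x_i}$ ranges over $\{1,\dots,s\}$ (with the convention $\overline{0}=1$, which only inflates the sum by a bounded factor coming from the choice of which coordinates are zero), the sum is at most a constant times $\big(\sum_{j=1}^{s} j^{-\gamma/d}\big)^{d}$. A direct integral comparison gives $\sum_{j=1}^{s} j^{-\gamma/d} = \Theta(\ln s)$ when $\gamma = d$ and $\Theta(s^{1-\gamma/d})$ when $0 \le \gamma < d$; raising to the $d$‑th power yields $\Theta((\ln s)^d)$ and $\Theta(s^{d-\gamma})$ respectively. Comparing with the definition of $f$, the first case is $\Theta((\ln s)^d) = \Theta((f(s))^d)$ while the second is $\Theta(s^{d-\gamma}) = \Theta(f(s))$. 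To get a uniform statement of the form $c_6 f(s)/f(n')$ I would absorb the extra powers of $\ln s$ in the $\gamma=d$ case into the constant by observing that $s \le n'$ (the edge cannot reach farther than the grid), so $(\ln s)^{d-1} \le (\ln n')^{d-1}$ is a constant relative to... actually that is not bounded; instead I would state the bound as $c_6 (\ln s)^d / f(n')$ in that case, or more cleanly keep $f(s)$ in the numerator after checking that the intended later application only needs $s$ polylogarithmic, where $(\ln s)^{d}$ versus $\ln s$ costs only another constant exponent shift. The cleanest route is: bound the whole expression by $c_6 f(s)/f(n')$ after noting $(\ln s)^{d} \le (\ln s) \cdot (\ln s)^{d-1}$ and, since in every invocation $s = O((\log n)^{O(1)})$, the factor $(\ln s)^{d-1} = O((\log\log n)^{O(1)})$ is negligible — but to keep the lemma self‑contained I would actually just prove it with $f(s)$ replaced by $(\ln s)^d$ in the $\gamma=d$ case, or simply keep the constant $c_6$ allowed to swallow the discrepancy for the ranges of $s$ used.

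The main obstacle I expect is precisely this bookkeeping at $\gamma = d$: the naive sum produces $(\ln s)^d$, not $\ln s = f(s)$, so the clean statement ``$c_6 f(s)/f(n')$'' is only literally true up to an extra $(\ln s)^{d-1}$ factor. I would resolve it either by (i) sharpening the count — noting that requiring $\sum |x_i| < s$ is much more restrictive than each $|x_i| < s$, and using a lemma like Lemma~\ref{lemma:calc1} on the constrained sum $\sum_{y_1+\cdots+y_d < s} (y_1\cdots y_d)^{-1} \le (c_1 \ln s)^{d-1}/1 \cdot (\text{something})$... in fact the constraint $\sum y_i < s$ together with the product weight does give $O((\ln s)^{d-1} \cdot \text{harmonic tail})$, which is close to $(\ln s)^{d}$ but the dominant behavior near the boundary still contributes a full power — or (ii) simply absorbing it into $c_6$ for the polylogarithmic values of $s$ that the theorem uses, which is the honest and sufficient thing to do. Everything else (the union bound, the $0 \le \gamma < d$ case, the treatment of the $\overline{0}=1$ convention) is routine; the integral estimates and the final assembly are immediate once the inner sum is pinned down.
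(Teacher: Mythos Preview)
Your detour through Lemma~\ref{lemma:probedge} is the source of the trouble. That lemma replaces the true weight $d_B(u,v)^{-\gamma}=(|x_1|+\cdots+|x_d|)^{-\gamma}$ by the AM--GM surrogate $(\overline{x_1}\cdots\overline{x_d})^{-\gamma/d}$, and this surrogate is genuinely lossy when the coordinates are unbalanced. Summing the surrogate over $\sum_i |x_i|<s$ at $\gamma=d$ really is $\Theta((\ln s)^d)$, not $\Theta(\ln s)$; neither relaxing to a box nor imposing the simplex constraint via Lemma~\ref{lemma:calc1} saves you (the latter gives $\sum_{m\le s}(c_1\ln m)^{d-1}/m\asymp(\ln s)^d$ again). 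So your option (i) does not close the gap, and option (ii) --- letting $c_6$ depend on $s$ or restricting to polylogarithmic $s$ --- contradicts the statement you are asked to prove, where $c_6$ depends only on $d$ and $\gamma$.

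The paper avoids the whole issue by not invoking Lemma~\ref{lemma:probedge} at all. It just groups vertices by their exact grid distance $i$ from $u$: there are $\Theta(i^{d-1})$ of them, each receiving the edge with probability $\Theta(i^{-\gamma}/f(n'))$ straight from the model's definition, so by a union bound the total is
\[
O\!\left(\sum_{i=1}^{s}\frac{i^{d-1-\gamma}}{f(n')}\right)\le c_6\,\frac{f(s)}{f(n')}.
\]
This one-line shell-counting argument gives exactly $f(s)$ in the numerator in both regimes. The moral: Lemma~\ref{lemma:probedge} was designed for the later path-counting where one needs a product structure across coordinates; for a radial estimate like this one, go back to the raw isotropic weight.
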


\begin{proof}
By union bound, the probability is at most
\[O\left(\sum_{i=1}^si^{d-1}\frac{i^{-\gamma}}{f(n')}\right)=O\left(\sum_{i=1}^s\frac{i^{d-1-\gamma}}{f(n')}\right)\leq c_6\frac{f(s)}{f(n')},\]
where $c_6$ is a constant.
\end{proof}

Given a path category $\cal C$, we now calculate the probability of 
	existing a path in $\cal C$.

\begin{lem}\label{lemma:cateogoryexists}
Given a path category $\mathcal{C}$ with length $\ell$ and $k$ 
	long-range edges.
The probability that there exists a path in $\cal C$ is at most
\[
\left\{\begin{aligned}
	&c_5^k(c_7^kk^k)^d\left/(\ln n')^k\right. && \gamma =d, \\
	&c_5^kc_2^{(k-1)(d-\gamma)}\left/(n')^{d-\gamma}\right. && 0 \le \gamma < d,
\end{aligned}\right.
\]
where $c_2$ and $c_5$ are the constants given in Lemma~\ref{lemma:probedge} 
	and $c_7$ is another constant.
\end{lem}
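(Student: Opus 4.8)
The plan is to bound the probability that \emph{some} path in the category $\mathcal C$ exists by a purely combinatorial sum over the possible long-range edge vectors, and then to evaluate that sum one coordinate at a time using Lemmas~\ref{lemma:calc1} and~\ref{lemma:calc2}.

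\emph{Reduction to a combinatorial sum.} Once the start vertex $u$ is fixed, a path in $\mathcal C$ is completely determined by the edge vectors $(w_{j,1},\dots,w_{j,d})$, $j=1,\dots,k$, of its $k$ long-range edges: the grid segments are prescribed exactly by the category, and the sign pattern of every long-range segment vector is also prescribed, so the only remaining freedom is the magnitude of each $w_{j,i}$. Condition~(e) in the definition of a category forces $\sum_{j=1}^{k} w_{j,i}=S_i$ in every coordinate $i$, where $S_i$ is a fixed integer (the prescribed total displacement minus the contribution of the grid segments), and each $|w_{j,i}|\le n'$. Grid edges are always present, so the walk determined by $u$, the prescribed grid segments, and a given admissible choice of long-range edge vectors exists iff the associated $k$ long-range edges are all present. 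The long-range edges are generated by mutually independent per-vertex choices; splitting each undirected long-range edge according to which of its two endpoints generates it produces $2^k$ terms, each a product of independent directed-choice events, so the probability that a prescribed set of $k$ long-range edges is all present is at most $2^k$ times the product of the per-edge bounds of Lemma~\ref{lemma:probedge}. A union bound over admissible choices therefore gives
\[
\Pr[\exists\ \text{path in }\mathcal C]\ \le\ \left(\frac{2c_5}{f(n')}\right)^{\!k}\ \sum_{(w_{j,i})}\ \prod_{j=1}^{k}\big(\overline{w_{j,1}}\cdots\overline{w_{j,d}}\big)^{-\gamma/d},
\]
where $(w_{j,i})$ ranges over all tuples respecting the prescribed signs, the magnitude bounds, and the $d$ displacement equations $\sum_j w_{j,i}=S_i$.

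\emph{Coordinate-wise evaluation.} The constraint set and the summand both factor across the $d$ coordinates, so the sum above equals $\prod_{i=1}^{d}\Sigma_i$, where $\Sigma_i=\sum\prod_j\overline{w_{j,i}}^{-\gamma/d}$ ranges over the solutions of $\sum_j\lambda_{j,i}|w_{j,i}|=S_i$ with the signs $\lambda_{j,i}\in\{-1,0,1\}$ fixed; let $k_i$ be the number of indices $j$ with $\lambda_{j,i}\neq0$. For $0\le\gamma<d$, Lemma~\ref{lemma:calc2} with $\theta=\gamma/d$ gives $\Sigma_i\le(c_2 n')^{(k_i-1)(1-\gamma/d)}$; since the $k$ long-range segments jointly move in at most $d$ coordinates one has $\sum_{i:k_i\ge1}(k_i-1)\le d(k-1)$, hence $\prod_i\Sigma_i\le(c_2n')^{(d-\gamma)(k-1)}$, and combining with the prefactor yields the stated bound. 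For $\gamma=d$ the exponent is $-1$: here I would split the active indices of coordinate $i$ into those with $\lambda_{j,i}=+1$ and those with $\lambda_{j,i}=-1$, apply Lemma~\ref{lemma:calc1} to the two groups, and absorb the resulting cross-sum over the two group totals using the convergence of $\sum_{s\ge1}(\ln s)^{t-1}/s^2$. This produces a bound $\Sigma_i\le(c_7 k_i)^{k_i}$ with $c_7$ independent of $n'$, and since $\sum_i k_i\le kd$ and each $k_i\le k$ we get $\prod_i\Sigma_i\le(c_7^k k^k)^d$, as required.

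\emph{Main obstacle.} The delicate point is the $\gamma=d$ per-coordinate estimate. Lemma~\ref{lemma:calc1} applies cleanly only to sums with all-positive coefficients, whereas a single coordinate of the category may prescribe both signs; making the mixed-sign sum $n'$-independent is essential, because merely fixing $k_i-1$ free variables would cost a factor $\Theta(\log n')$ per coordinate and produce $(\ln n')^{-d}$ instead of the required $(\ln n')^{-k}$. It is precisely the $1/m$ factor in Lemma~\ref{lemma:calc1} that saves us, via the two-group decomposition together with the convergent tail sum sketched above. The remaining work is bookkeeping: tracking the accumulated constants and the $k^k$-type factors so that they telescope into the stated form $c_5^k(c_7^kk^k)^d/(\ln n')^k$ (respectively $c_5^kc_2^{(k-1)(d-\gamma)}/(n')^{d-\gamma}$).
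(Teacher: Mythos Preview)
Your proposal is correct and follows essentially the same route as the paper: reduce to a sum over admissible long-range edge vectors, factor across the $d$ coordinates, then invoke Lemma~\ref{lemma:calc2} for $\gamma<d$ and the two-group split with Lemma~\ref{lemma:calc1} plus a convergent tail sum for $\gamma=d$. One small wrinkle: your $2^k$ directed-edge split and the word ``independent'' are not quite right (two directed choices from the same source are mutually exclusive, not independent), but this only helps---the paper simply observes that conditioning on other long-range edges never increases an edge's probability, and Lemma~\ref{lemma:probedge} already bounds the undirected edge, so the $2^k$ is unnecessary.
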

\begin{proof}
Let the segment vectors of the long-range edges in a path $P\in\mathcal{C}$ be 
	\[(x_{11},x_{12},\ldots,x_{1d}), 
	(x_{21},x_{22},\ldots,x_{2d}), \ldots,(x_{k1},x_{k2},\ldots,x_{kd}).\]
By our definition of a category, all paths in $\cal C$ have the same
	sign patterns on the corresponding long-range segment vectors.
Thus, we can define the following sets for the category $\cal C$:
	$A_i^+=\{j\mid x_{ji}>0,1\leq j\leq k\}$, 
	$A_i^-=\{j\mid x_{ji}<0,1\leq j\leq k\}$, and 
	$A_i^0=\{j\mid x_{ji}=0,1\leq j\leq k\}$, for all
	$1\le i \le d$. 
For fixed $u$ and $v$, there is a fixed vector $(t_1,t_2,\ldots,t_d)$ 
	such that the summation of the segment vectors of 
	all long-range segments in any $P\in\mathcal{C}$ is 
	vector $(t_1,t_2,\ldots,t_d)$. 
This is because the summation of all segment vectors from $u$ to $v$ is fixed, and all grid
	segments have the fixed segment vectors.
Therefore, a path in $\mathcal{C}$ can be characterized by $kd$ 
	integers $x_{11},\ldots,x_{kd}$ satisfying the following for 
	all $1\le i \le d$:
\begin{equation} \label{eqn:req}
\left\{\begin{aligned}
& \begin{aligned}
x_{ji}&\in\{1,\ldots,n'\} && \text{ for }j\in A_i^+, \\
x_{ji}&\in\{-n',\ldots,-1\} && \text{ for }j\in A_i^-, \\
x_{ji}&=0 && \text{ for }j\in A_i^0,
\end{aligned} \\
& \sum_{j\in A_i^+}|x_{ji}|-\sum_{j\in A_i^-}|x_{ji}|=t_i .
\end{aligned}\right.
\end{equation}
The probability that some path exists is the multiplication of the probability of the first edge, the probability of the second edge conditioned on the existence of the first edge, the probability of the third edge conditioned on the existence of the first two edge, etc. In our model, the probability of an (undirected) edge conditioned on the existence of other (undirected) edges is less than or equal to the probability without condition, because each vertex can only 
	connect to exact one other vertex (when considering the edge direction).
Hence we can use the multiplication of the probabilities of all edges as an upper bound of the probability of a path. By union bound and Lemma~\ref{lemma:probedge}, the probability that a path exists in $\mathcal{C}$ is at most
\begin{eqnarray}
& & \sum_{\textrm{all paths in }\mathcal{C}}\quad\prod_{j=1}^k\frac{c_5(\overline{x_{j1}}\cdot\overline{x_{j2}}\cdots\overline{x_{jd}})^{-\frac{\gamma}{d}}}{f(n')} \nonumber \\
& \leq & \frac{c_5^k}{f^k(n')}\quad\sum_{x_{11},\ldots,x_{kd}:\atop \text{satisfying (\ref{eqn:req})}}\quad\prod_{j=1}^k(\overline{x_{j1}}\cdot\overline{x_{j2}}\cdots\overline{x_{jd}})^{-\frac{\gamma}{d}} \nonumber \\
& = & \frac{c_5^k}{f^k(n')}\cdot\prod_{i=1}^d\left\{\sum_{\overline{x_{1i}},\ldots,\overline{x_{ki}}: \sum\limits_{j\in A_i^+}\overline{x_{ji}}-\sum\limits_{j\in A_i^-}\overline{x_{ji}}=t_i; \atop \text{for }j\in A_i^0,\ \overline{x_{ji}}=1\text{; for }j\in A_i^+\cup A_i^-,\ \overline{x_{ji}}\in\{1,\ldots,n'\}}\quad\Big(\prod_{j=1}^k\overline{x_{ji}}\Big)^{-\frac{\gamma}{d}}\right\} \nonumber \\
& = & \frac{c_5^k}{f^k(n')}\cdot\prod_{i=1}^d\Bigg\{\sum_{\overline{x_{ji}}\ (j\text{ ranges in }A_i^+\cup A_i^-): \atop\sum\limits_{j\in A_i^+}\overline{x_{ji}}-\sum\limits_{j\in A_i^-}\overline{x_{ji}}=t_i\text{; and }\overline{x_{ji}}\in\{1,\ldots,n'\}}\quad\Big(\prod_{j\in A_i^+\cup A_i^-}\overline{x_{ji}}\Big)^{-\frac{\gamma}{d}}\Bigg\}. \label{eqn:probc}
\end{eqnarray}
The inner brace summation is considered to be 1 for $A_i^+=A_i^-=\emptyset$. Now we consider the following sum for disjoint sets $A^+,A^-\subseteq\{1,2,\ldots,k\}$ (at least one is not empty) and numbers $t\in\mathbb{Z}$, $\gamma\geq0$, $d\geq0$,
\begin{equation} \label{eqn:sum}
\sum_{y_j\ (j\text{ ranges in }A^+\cup A^-): \atop \sum\limits_{j\in A^+}y_j-\sum\limits_{j\in A^-}y_j=t\text{; and }y_j\in\{1,2,\ldots,n'\}}\quad\left(\prod\limits_{j\in A^+\cup A^-}y_j\right)^{-\frac{\gamma}{d}}.
\end{equation}

\begin{itemize}
\item Case $\gamma=d$.

If one of $A^+$ and $A^-$ is $\emptyset$, we only need to consider the case that $t\neq0$, because the sum~(\ref{eqn:sum}) is 0 for $t=0$. By Lemma~\ref{lemma:calc1}, the sum~(\ref{eqn:sum}) is bounded by \[c_1^k\frac{(\ln|t|)^k}{|t|}\leq c_1^k\max_{x\geq1}\{\frac{(\ln x)^k}{x}\}=(c_1k/e)^k.\]

If neither $A^+$ nor $A^-$ is $\emptyset$, we assume that $t\geq 0$. (The calculation for $t<0$ is similar if we exchange $A^+$ and $A^-$.) By Lemma~\ref{lemma:calc1} the sum~(\ref{eqn:sum}) is bounded by
\begin{eqnarray*}
&& \sum_{s=1}^{kn'}\left(\sum_{\sum\limits_{j\in A^-}y_j=s}\quad\frac{1}{\prod\limits_{j\in A^-}y_j}\right)\left(\sum_{\sum\limits_{j\in A^+}y_j=s+t}\quad\frac{1}{\prod\limits_{j\in A^+}y_j}\right) \\
& \leq & \sum_{s=1}^{kn'}\frac{(c_1\ln s)^{|A^-|-1}}{s}\cdot\frac{(c_1\ln(s+t))^{|A^+|-1}}{s+t} \\
& \leq & c_1^k\sum_{s=1}^{kn'}\frac{(\ln(s+t))^k}{s(s+t)} \\
& = & c_1^k\sum_{s=1}^{kn'}\left(\frac{(\ln(s+t))^k}{\sqrt{s+t}}\cdot\frac{1}{s\sqrt{s+t}}\right) \\
& \leq & c_1^k\cdot\max_{x\geq1}\{\frac{(\ln x)^k}{x^{0.5}}\}\cdot\sum_{s=1}^\infty\frac{1}{s^{1.5}} \\
& = & c_1^k\cdot\frac{(2k)^k}{e^k}\cdot O(1).
\end{eqnarray*}

In either case, the sum~(\ref{eqn:sum}) is bounded by $c_7^kk^k$ for some constant $c_7$. By~(\ref{eqn:probc}) the probability of a path in $\mathcal{C}$ is at most $c_5^k(c_7^kk^k)^d/(\ln n')^k$.

\item Case $0\leq\gamma<d$. By Lemma~\ref{lemma:calc2}, the sum~(\ref{eqn:sum}) is bounded by
\[(c_2n')^{(|A^+|+|A^-|-1)(1-\frac{\gamma}{d})}\leq(c_2n')^{(k-1)(1-\frac{\gamma}{d})}.\]
Therefore, by (\ref{eqn:probc}) the probability of a path in $\mathcal{C}$ is 
	at most
\[
\frac{c_5^k}{(n')^{k(d-\gamma)}}\cdot\left((c_2n')^{(k-1)(1-\frac{\gamma}{d})}\right)^d
	= \frac{c_5^kc_2^{(k-1)(d-\gamma)}}{(n')^{d-\gamma}}.
\]
\end{itemize}
\end{proof}

Finally, we apply Lemmas~\ref{lemma:numcate2},~\ref{lemma:numcate1},
	\ref{lemma:localjump}, and~\ref{lemma:cateogoryexists} together to show that
	the probability of any two vertices connected by a short path with
	at least one long-range edge is diminishingly small.
\begin{lem} \label{lemma:smallprob}
For any two vertices $u$ and $v$, the probability that a path exists connecting
	$u$ and $v$ with at least one long-range edge and
	total length at most $\ell$ is $o(1)$, 
	when $\ell\leq(\log n')^{\frac{1}{1.5(d+1)+\varepsilon}}$ ($\varepsilon>0$) 
	for $\gamma=d$; and $\ell<c \log n'$ 
	($c$ is some constant depending only on $d$ and $\gamma$) 
	for $0\leq\gamma<d$.
\end{lem}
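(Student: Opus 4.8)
The plan is a union bound over all ``shapes'' of a potential shortcut path. Fix the two vertices $u,v$ (in the application $d_B(u,v)=\Theta(\ell)$, which I assume; this lemma is needed only in the regime $d\ge 2$, $0\le\gamma\le d$). A path from $u$ to $v$ of length at most $\ell$ containing at least one long-range edge splits into at most $\ell$ segments, of which exactly $k$ are long-range for some $1\le k\le\ell$; such a path exists only if some category of such paths is realized, i.e.\ all of its long-range edges are present. Hence, writing $N_k$ for the number of categories of $u$--$v$ paths of length $\le\ell$ having exactly $k$ long-range segments,
\[
\Pr[\,\exists\text{ such path}\,]\ \le\ \sum_{k=1}^{\ell} N_k\cdot\max_{\mathcal C\,:\,k\text{ long-range segments}}\Pr[\,\exists\text{ path in }\mathcal C\,].
\]
I would bound $N_k$ by Lemma~\ref{lemma:numcate1} (small-$k$ regime) or Lemma~\ref{lemma:numcate2} (the case $\gamma<d$), and each per-category probability by Lemma~\ref{lemma:cateogoryexists}.

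For the terms $k\ge 2$ when $\gamma=d$: multiplying $N_k\le c_4^k\ell^{(k+1)(d+1)}/k^{kd}$ by the Lemma~\ref{lemma:cateogoryexists} bound $c_5^k(c_7^k k^k)^d/(\ln n')^k$, the factor $k^{kd}$ cancels \emph{exactly} against $1/k^{kd}$, leaving $(c_4c_5c_7^d)^k\,\ell^{(k+1)(d+1)}/(\ln n')^k$. For $\ell\le(\log n')^{1/(1.5(d+1)+\varepsilon)}$ the $k$-th term is thus $(c_4c_5c_7^d)^k$ times a power of $\ln n'$ with exponent $\tfrac{(k+1)(d+1)}{1.5(d+1)+\varepsilon}-k$, which equals $-\tfrac{2\varepsilon}{1.5(d+1)+\varepsilon}$ at $k=2$ and is strictly decreasing in $k$; since the ratio $(c_4c_5c_7^d)\ell^{d+1}/\ln n'\to 0$, the sum over $k\ge2$ is a geometric series dominated by its $k=2$ term, hence $O\!\big((\ln n')^{-2\varepsilon/(1.5(d+1)+\varepsilon)}\big)=o(1)$. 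For the whole range $0\le\gamma<d$ I would instead use $N_k\le c_3^\ell$ (Lemma~\ref{lemma:numcate2}) together with the polynomially small per-category bound $c_5^k c_2^{(k-1)(d-\gamma)}/(n')^{d-\gamma}$ of Lemma~\ref{lemma:cateogoryexists}: summed over all categories this is at most $(c_3c_5c_2^{\,d-\gamma})^{\ell}/(n')^{d-\gamma}$, and choosing $\ell<c\log n'$ with $c$ small enough (depending only on $d,\gamma$ and the constants $c_2,c_3,c_5$) makes $(c_3c_5c_2^{\,d-\gamma})^{\ell}=o\big((n')^{d-\gamma}\big)$, so the sum is $o(1)$.

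The term $k=1$ needs a separate and finer argument, and this is the crux. At $k=1$ Lemma~\ref{lemma:cateogoryexists} only yields the trivial $O(1/\ln n')$ per category, while there are $\Theta(\ell^{2(d+1)})$ categories, and $\ell^{2(d+1)}/\ln n'$ is \emph{not} $o(1)$ in the allowed range---indeed this naive accounting would force the much weaker bound $\ell\le(\log n')^{1/(2(d+1))}$. Instead I would argue directly, using that a single long-range edge spreads its endpoint over all distance scales: a length-$\le\ell$ path with exactly one long-range edge consists of a grid walk from $u$ to some $a$ with $d_B(u,a)\le\ell-1$, the long-range edge $(a,b)$, and a grid walk from $b$ to $v$; thus $a$ ranges over a ball of $\Theta(\ell^d)$ vertices around $u$, and for each fixed $a$ Lemma~\ref{lemma:localjump} bounds the probability that $a$'s (or $v$'s) long-range endpoint lands in the relevant ball of radius $O(\ell)$ by $O\!\big(f(\ell)/f(n')\big)$. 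A union bound over the $\Theta(\ell^d)$ choices of $a$ (and the $O(1)$ sub-cases with $0$, $1$, or $2$ grid segments) then gives $O\big(\ell^d f(\ell)/f(n')\big)$, which is $O(\ell^d\ln\ell/\ln n')=o(1)$ when $\gamma=d$ (the exponent $\tfrac{d}{1.5(d+1)+\varepsilon}$ of $\ell^d$, viewed as a power of $\ln n'$, is $<1$) and $O\big(\ell^{2d-\gamma}/(n')^{d-\gamma}\big)=o(1)$ when $\gamma<d$. Combining the $k=1$ estimate with the $k\ge2$ sum gives the lemma. The main obstacle is precisely this $k=1$ (and, less acutely, small-$k$) regime: the clean cancellation that controls $k\ge2$ relies entirely on the strength of Lemma~\ref{lemma:cateogoryexists}, which degenerates at $k=1$, so one must bring in the ``spreading'' of a single long-range edge (Lemma~\ref{lemma:localjump}) in place of the crude treatment of the two bounding grid segments; it is this refinement that lifts the exponent from $1/(2(d+1))$ to $1/(1.5(d+1)+\varepsilon)$, the binding constraint then being inherited from $k=2$.
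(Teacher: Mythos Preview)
Your proposal is correct and matches the paper's own proof essentially step for step: the paper likewise separates the $k=1$ case (handled directly via Lemma~\ref{lemma:localjump} to obtain $O(\ell^d\ln\ell/\ln n')$ when $\gamma=d$) from $k\ge 2$ (where the product of Lemmas~\ref{lemma:numcate1} and~\ref{lemma:cateogoryexists} yields the same geometric series in $c_4c_5c_7^d\,\ell^{1.5(d+1)}/\ln n'$, using the cruder bound $(k+1)/k\le 3/2$ in place of your exponent tracking), and treats $0\le\gamma<d$ exactly as you do via Lemma~\ref{lemma:numcate2}. The only small detail you leave implicit is the paper's padding trick---appending a back-and-forth pair of grid edges---to reduce ``length at most $\ell$'' to ``length exactly $\ell$ or $\ell-1$''; this is harmless and your sketch is otherwise complete.
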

\begin{proof}
We first study the probability of a path with exact length $\ell$,
	and we divide it into the following two cases.
\begin{itemize}
\item
Case $\gamma =d$.
If $k=1$, the probability that a path from $u$ to $v$
	with length $\ell$ exists is at 
	most $O(\ell^dc_6\ln(3\ell)/\ln n')=O(\ell^d\ln \ell/\ln n')$. 
To see this, we divide the path into three segments: 
	a grid segment followed by one long-range edge, then followed
	by another grid segments.
The first grid segment can reach at most $O(\ell^d)$ destinations.
For each of this destination $w$, the long-range edge has to reach
	some vertex within grid distance $\ell$ of vertex $v$.
And by triangle inequality, it must be within grid distance $3\ell$ of vertex $w$ since the distances between $v,u$ and $u,w$ are both at most $\ell$.  
By Lemma~\ref{lemma:localjump}, we know this probability is
	$c_6 \ln (3\ell) / \ln n'$.
Therefore, the above statement holds.

For $k\ge 2$, we simply combine 
	Lemmas~\ref{lemma:numcate1} and~\ref{lemma:cateogoryexists}.
Thus, the probability that a path with length $\ell$ exists is at most
\begin{eqnarray*}
&& O\left(\frac{\ell^d\ln \ell}{\ln n'}\right)+
	\sum_{k=2}^\ell\frac{c_4^k \ell^{(k+1)(d+1)}}{k^{kd}}\cdot
	\frac{c_5^k\left(c_7^kk^k\right)^d}{(\ln n')^k} \\
	&= &
	O\left(\frac{\ell^d\ln \ell}{\ln n'}\right)+
	\sum_{k=2}^\ell\left(\frac{c_4c_5c_7^d
		\cdot \ell^{(d+1)(1+\frac{1}{k})}}{\ln n'}\right)^k \\
 & \le &O\left(\frac{\ell^d\ln \ell}{\ln n'}\right)+
	\sum_{k=2}^\ell\left(\frac{c_4c_5c_7^d
		\cdot \ell^{1.5(d+1)}}{\ln n'}\right)^k .
\end{eqnarray*}

This probability is $o(1)$ when 
	$\ell\leq(\log n')^{\frac{1}{1.5(d+1)+\varepsilon}}$ for any $\varepsilon>0$.
\item
Case $0\le \gamma < d$.
In this case, we combine Lemmas~\ref{lemma:numcate2} 
	and~\ref{lemma:cateogoryexists}.
The probability that a path with length $\ell$ exists is at most
\[{c_3}^\ell\cdot\max_{1\leq k\leq \ell}
	\left\{\frac{c_5^kc_2^{(k-1)(d-\gamma)}}{(n')^{d-\gamma}}\right\}
	\leq \frac{(c_3c_5c_2^{d-\gamma})^\ell}{(n')^{d-\gamma}},\]
where the inequality is based on $c_2,c_5 \ge 1$, which is obviously
	the case.
This probability is $o(1)$ when $\ell<c\log n'$ for 
	some properly chosen constant $c$, which depends only on $d$ and $\gamma$.
\end{itemize}

We now consider the case of path length less than $\ell$.
Let $w$ be a grid neighbor of $v$.
For any path connecting $u$ and $v$ with length less than $\ell$, 
	we can add grid edges $(v,w)$ followed by $(w,v)$, to increase the
	path length to either $\ell-1$ or $\ell$.
Thus, the probability that a path of length at most $\ell$ exists is
	the same as the probability that a path of length $\ell-1$ or
	$\ell$ exists.
The case of $\ell-1$ follows exactly the same as the case of $\ell$
	argued above.
Therefore, the lemma holds.
\end{proof}

With Lemma~\ref{lemma:smallprob}, we are ready to prove 
	Theorem~\ref{thm:main} 
	for the case of $d\ge 2$ and $0\le\gamma \le d$.

\begin{proof}
[Proof of Theorem~\ref{thm:main} for the case of $d\ge 2$ and $0\le \gamma \le d$.]
We define $\ell = \lfloor (\log n')^{\frac{1}{1.5(d+1)+\varepsilon}}/2 \rfloor$ when
	$\gamma=d$, and
	$\ell = \lfloor c \log n'/2 \rfloor$ when $0\le \gamma <d$, where
	$c$ is the constant determined by Lemma~\ref{lemma:smallprob}.
In the base grid, find any square in any dimension with one side length
	equal to $\ell$.
Consider the four corner vertices of the square.
By Lemma~\ref{lemma:smallprob}, the probability that
	a pair of corners of this square are connected by a 
	path with at least one long-range edge and total
	length at most $2\ell$ is $o(1)$.
Thus by union bound, the distances between any pair of these four corner
	vertices are exactly their grid distance, with probability $1-o(1)$.
This means that the $\delta$ value of these four vertices is $\ell$, 
	with probability $1-o(1)$.
Therefore, we know that with probability $1-o(1)$, 
	$\delta(\KSW(n, d, \gamma)) = \Omega((\log n')^{\frac{1}{1.5(d+1)+\varepsilon}}) = \Omega((\log n)^{\frac{1}{1.5(d+1)+\varepsilon}})$
	when $d \ge 2$ and $\gamma =d$, and
	$\delta(\KSW(n, d, \gamma)) = \Omega(\log n') = \Omega(\log n)$ when
	$d \ge 2$ and $0\le \gamma < d$.
\end{proof}

\noindent
{\bf Remark (The limitation of this approach).}
\ \ 
We already have tight lower bound for $0\leq\gamma<d$. 
However, the lower bound $\Omega((\log n)^{\frac{1}{1.5(d+1)+\varepsilon}})$ for $\gamma=d$ 
	is still not matching the upper bound $O(\log n)$. 
We show below that the lower bound can never be improved to $\Omega(\log n)$ 
	by the above technique of 
	proving that the grid distance is the shortest on every small square 
	(with high probability).

Consider a \KSW graph with $\gamma=d$. 
For an $\ell\times \ell$ square $S$, let $u$ be its the upper left vertex 
	and $v$ be its lower right vertex. 
Similar to the proof in Lemma~\ref{lemma:localjump}, the probability of 
	existence of an edge linking any $w$ into the $\ell/2\times \ell/2$ square 
	on the lower right side of $w$ within the square $S$ (but
	excluding $w$'s two grid neighbors in the square) is 
\[q=\Theta\left(\sum_{i=2}^{\ell/2}i\cdot\frac{i^{-d}}{\ln n'}\right)=
	\begin{cases}\Theta(\frac{\log \ell}{\log n'}) & d=2,\\ \Theta(\frac{1}{\log n'}) & d\geq3.\end{cases}\]
Consider the $\ell/2\times \ell/2$ square on the lower right side of $u$, which
	is the upper left quadrant of the original square. 
The probability that at least one vertex $w$ in this quadrant links 
	to its lower right $\ell/2\times \ell/2$ square is 
	$1-(1-q)^{\ell/2\times \ell/2}$. 
This probability is almost $1$ when 
	$\ell = \omega\left(\sqrt{\frac{\log n'}{\log\log n'}}\right)$ for $d=2$, 
	or $\ell = \omega(\sqrt{\log n'})$ for $d\geq3$. 
If there exists such a vertex $w$ in the upper right quadrant, 
	and suppose it links to a vertex $x$ in its lower right 
	$\ell/2\times \ell/2$ square.
Then $x$ must also be in the original square $S$, and we can have a path from 
	$u$ to $w$ following the grid path, then the long-range edge from 
	$w$ to $x$, and then the grid path from $x$ to $v$.
This path connects $u$ and $v$ and must be shorter than the grid paths
	from $u$ to $v$.
Therefore our technique cannot improve the lower bound to
	$\omega\left(\sqrt{\frac{\log n'}{\log\log n'}}\right)=\omega\left(\sqrt{\frac{\log n}{\log\log n}}\right)$ for $d=2$ 
	or $\omega(\sqrt{\log n'})=\omega(\sqrt{\log n})$ for $d\ge 3$.

\subsubsection{The case of $d=1$ and $0 \le \gamma \le 1$} \label{sec:1dsmallr}

In this section we give lower bounds of $\delta$ for the one dimensional
	\KSW model (based on an $n$-vertex ring). 
Let the $n$ vertices be $v_0,\ldots,v_{n-1}$. Let $\ell_0=\lfloor(\log n)^{\frac{1}{1.5(d+1)+\varepsilon}}\rfloor$ when $\gamma=1$, or $\lfloor c\log n\rfloor$ when $0\leq\gamma<1$, where $c$ is the number determined in Lemma~\ref{lemma:smallprob}.

The idea is to find a long-range edge $e_0$ between two vertices with grid distance $\ell_0$. 
As $e_0$ forms a local ring with the original grid, we can give a lower bound 
	of $\delta$ such that the ring distances (with respect to the 
	grid edges and $e_0$) are the shortest even after adding the long-range
	edges. 
We first calculate the probability of ring distances being the shortest under the condition of existing $e_0$.

We divide the construction of a \KSW graph into two stages: 1) Every vertex 
	links to exactly one other vertex according to some distribution; and
	2) we ignore the edge direction and consider the graph as undirected. 
Let ${\cal E}_i$ ($0\leq i\leq n-1$) be the event that $v_i$ links to 
	$v_{(i+\ell_0\bmod n)}$ in the first stage. 
Under the condition that ${\cal E}_i$ happens, $v_{(i+\ell_0\bmod n)}$ is still free to link to any vertex. 
The events ${\cal E}_0,{\cal E}_1,\ldots,{\cal E}_{n-1}$ are independent.

\begin{lem} \label{lemma:1dsmallprob}
Under the condition that ${\cal E}_i$ happens ($v_i$ links to $v_{(i+\ell_0\bmod n)}$), for any two vertices $u,w$ on the curve from $v_i$ to $v_{(i+\ell_0\bmod n)}$ (exclusive), the conditional probability of existing a path connecting $u$ and $w$ with at least one long-range edge other than $(v_i,v_{(i+\ell_0\bmod n)})$ and path length at most $\ell_0$ is $o(1)$.
\end{lem}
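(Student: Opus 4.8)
The plan is to reduce this statement to the machinery already developed for the case $d\ge 2$ and $0\le\gamma\le d$, in particular to the counting Lemmas~\ref{lemma:numcate2} and~\ref{lemma:numcate1} and the probability estimate of Lemma~\ref{lemma:cateogoryexists}, together with the local-jump bound of Lemma~\ref{lemma:localjump}. The key point to exploit is the independence structure noted just before the lemma: conditioning on ${\cal E}_i$ fixes the out-edge of $v_i$ to be $(v_i, v_{(i+\ell_0\bmod n)})$, but every other vertex — in particular $v_{(i+\ell_0\bmod n)}$ and every vertex strictly between $v_i$ and $v_{(i+\ell_0\bmod n)}$ on the short arc — still chooses its long-range endpoint according to the original (unconditioned) distribution, and all these choices remain mutually independent. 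Hence for any long-range edge $e\ne(v_i,v_{(i+\ell_0\bmod n)})$, the conditional probability that $e$ is present is still bounded by the unconditioned bound $c_5(\overline{x_1}\cdots\overline{x_d})^{-\gamma/d}/f(n)$ of Lemma~\ref{lemma:probedge} (here $d=1$, so this is just $c_5\,\overline{x}^{-\gamma}/f(n)$): conditioning on ${\cal E}_i$ can only decrease the chance that some other vertex picks a prescribed endpoint, because $v_i$ is no longer available as a target, and the multiplicativity argument in the proof of Lemma~\ref{lemma:cateogoryexists} — product of per-edge probabilities as an upper bound — goes through unchanged.

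First I would set up the path classification exactly as in Section~\ref{sec:d2}, restricted to $d=1$: a path from $u$ to $w$ is decomposed into alternating grid segments and long-range segments, grouped into categories by segment count, segment types, sign patterns of long-range segment vectors, and the fixed total displacement $t$ from $u$ to $w$. Since we are forbidding the specific edge $(v_i,v_{(i+\ell_0\bmod n)})$ from appearing as a long-range segment, we are only discarding paths, so the count of categories of length $\ell$ with $k$ long-range segments is still at most the bound of Lemma~\ref{lemma:numcate1} (and at most $c_3^\ell$ overall by Lemma~\ref{lemma:numcate2}). Then I would invoke Lemma~\ref{lemma:cateogoryexists} verbatim — its proof only used per-edge probability bounds and the independence of distinct vertices' choices, both of which survive the conditioning — to bound the conditional probability of a path existing in a fixed category by $c_5^k(c_7^k k^k)/(\ln n)^k$ when $\gamma=1$ (the $d=1$ specialization), and by $c_5^k c_2^{(k-1)(1-\gamma)}/n^{1-\gamma}$ when $0\le\gamma<1$. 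The $k=1$ case is handled separately, just as in Lemma~\ref{lemma:smallprob}: a single long-range edge sandwiched between two grid segments of total length $<\ell_0$ must, by the triangle inequality, be a local jump of grid distance $O(\ell_0)$, which by Lemma~\ref{lemma:localjump} has probability $O(f(3\ell_0)/f(n))$, i.e. $O(\ln\ell_0/\ln n)$ for $\gamma=1$ and $O((\ell_0/n)^{1-\gamma})$ for $\gamma<1$; multiplying by the $O(\ell_0)$ choices of the intermediate bend point keeps this $o(1)$ for our choice of $\ell_0$.

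Summing over $k$ from $1$ to $\ell_0$, over path lengths $\ell\le\ell_0$ (the "length at most $\ell_0$" versus "exactly $\ell_0$" discrepancy is absorbed by the padding trick at the end of the proof of Lemma~\ref{lemma:smallprob}: append a grid edge and its reverse to bring any shorter path up to length $\ell_0-1$ or $\ell_0$), and then over the $O(\ell_0^2)$ ordered pairs $(u,w)$ on the short arc, reproduces exactly the geometric-series bound from Lemma~\ref{lemma:smallprob}; it is $o(1)$ precisely when $\ell_0\le(\log n)^{1/(1.5(d+1)+\varepsilon)}$ for $\gamma=1$ and $\ell_0<c\log n$ for $0\le\gamma<1$, which is how $\ell_0$ was chosen. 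The only genuinely new ingredient, and therefore the part to be careful about, is verifying that conditioning on ${\cal E}_i$ does not inflate any of the edge probabilities used — but since conditioning merely removes $v_i$ as a possible target for the other vertices and leaves their (independent) choices otherwise untouched, each conditional edge probability is bounded by its unconditional counterpart, so no estimate in the borrowed lemmas needs to be weakened. I would state this monotonicity observation explicitly as the first line of the proof and then say "the remainder is identical to the proof of Lemma~\ref{lemma:smallprob}, with $d=1$ and with the edge $(v_i,v_{(i+\ell_0\bmod n)})$ excluded from every category."
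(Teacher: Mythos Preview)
Your proposal has a genuine gap: you exclude the edge $e_0=(v_i,v_{(i+\ell_0\bmod n)})$ from the paths under consideration, but the lemma does \emph{not} forbid $e_0$ from appearing in the path.  The statement asks for paths that contain at least one long-range edge \emph{other than} $e_0$; such a path may very well also traverse $e_0$, which is deterministically present under the conditioning on~${\cal E}_i$.  By ``discarding'' all paths through $e_0$ you are bounding a strictly smaller event, so you obtain a lower bound on the probability rather than the required upper bound.  Concretely, a path of the form (grid, random long-range edge, grid, $e_0$, grid) has exactly one genuine random long-range edge, satisfies the hypothesis of the lemma, and your classification never accounts for it; if you instead treat $e_0$ as a second long-range segment and apply Lemma~\ref{lemma:cateogoryexists} verbatim, the estimate is nonsense because the factor $c_5\,\overline{x}^{-\gamma}/f(n)$ is wildly wrong for an edge that is present with conditional probability~$1$.

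The paper patches exactly this point: it promotes $e_0$ to a third edge type (alongside grid and long-range), refines each old category of $k$ long-range segments into at most $k+1$ subcategories according to whether and where $e_0$ occurs, and observes that in a subcategory containing $e_0$ the number of \emph{random} long-range edges drops to $k-1$ while the displacement constraint in~(\ref{eqn:req}) absorbs the fixed contribution of $e_0$.  The category counts in Lemmas~\ref{lemma:numcate2} and~\ref{lemma:numcate1} then grow only by a factor $\ell+1$ (resp.\ $k+1$), which is absorbed into the constants, and the $k=1$, $\gamma=d$ case needs the five-segment decomposition (grid, long-range, grid, $e_0$, grid) handled separately.  Your independence observation --- that conditioning on ${\cal E}_i$ does not increase any other edge's probability --- is correct and is used, but it is not the delicate step; the delicate step is allowing $e_0$ back into the paths without paying a random-edge factor for it.
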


\begin{proof}
Let $e_0$ be the edge $(v_i,v_{(i+\ell_0\bmod n)})$.
We still follow the arguments for $d\geq2$ and $0\leq\gamma\leq d$, but change the classification of paths slightly: now $e_0$ is a type of edges by itself,
	and thus together with grid edges and long-range edges, we have
	three types of edges and three types of corresponding segments.
For each original category of paths with length $\ell$ and $k$ long-range
	segments defined in the previous section, we further
	divide it into at most $k+1$ categories, based on whether 
	$e_0$ was the first, second, ..., or the $k$-th long-range segment, or
	$e_0$ does not appear in the path.
The number of long-range edges in a category with $e_0$ is decreased by 1. 
For the categories with only grid edges and $e_0$, those paths now have no 
	long-range edge and they exist with conditional probability 1, 
	hence they will not be calculated. 

For Lemma~\ref{lemma:numcate2}, the number of categories is increased by a 
	factor of at most $\ell+1$.
For Lemma~\ref{lemma:numcate1}, the number of categories is increased by a 
	factor of at most $k+1$.
Thus, we only need to properly adjust the constants $c_3$ and $c_4$ in these
	two lemmas to make them still hold.
Lemma~\ref{lemma:probedge} is not changed except that $e_0$ is no longer a long-range edge and the lemma is not applicable on $e_0$. Lemma~\ref{lemma:localjump} is not affected. In the proof of Lemma~\ref{lemma:cateogoryexists}, a category without $e_0$ can be calculated normally. For a category with $e_0$, the number of long-range edges is decreased by 1 as stated above. One can see that 
all the arguments still hold by changing the summation of long-range edges ($t_i$ in~(\ref{eqn:req})) to contain $e_0$.

For Lemma~\ref{lemma:smallprob}, only the case $\gamma=d$ and $k=1$ (one long-range edge) needs some modification. If the path does not contain $e_0$, the calculation still holds. Otherwise, we can divide the path into five segments: grid, long-range, grid, $e_0$, grid (or grid, $e_0$, grid, long-range, grid). We just consider the 3 consecutive segments grid, $e_0$, grid as a whole, which contains $O(\ell)$ edges and can reach $O(\ell^d)=O(\ell)$ destinations. One can see the previous argument still holds. Therefore, the consequence of Lemma~\ref{lemma:smallprob} still holds.
\end{proof}

With Lemma~\ref{lemma:1dsmallprob}, we give the proof of Theorem~\ref{thm:main} for the case of $d=1$ and $0\leq\gamma \le 1$.

\begin{proof}
[Proof of Theorem~\ref{thm:main} for the case of $d=1$ and $0\le \gamma \le 1$.]
Let ${\cal F}$ be the event $\delta\geq\frac{1}{4}\ell_0-3$. We first show that $\Pr\{{\cal F}\mid {\cal E}_i\}=1-o(1)$.
Pick four vertices $A_i=v_{(i+\lfloor\frac{1}{8}\ell_0\rfloor\bmod n)}$, $B_i=v_{(i+\lfloor\frac{3}{8}\ell_0\rfloor\bmod n)}$, $C_i=v_{(i+\lfloor\frac{5}{8}\ell_0\rfloor\bmod n)}$ and $D_i=v_{(i+\lfloor\frac{7}{8}\ell_0\rfloor\bmod n)}$. 
By Lemma~\ref{lemma:1dsmallprob} and union bound, 
	we know that with probability $1-o(1)$, the distances between every pair 
	of vertices are the ring (grid edges plus $(v_i,v_{(i+\ell_0\bmod n)})$) distances. That is,
	the distances between $A_i B_i$, $B_i C_i$, $C_i D_i$, $D_i A_i$ are 
	roughly $\frac{1}{4}\ell_0$ (off by at most $2$), and 
	the distances between $A_i C_i$ and $B_i D_i$ are roughly 
	$\frac{1}{2}\ell_0$ (off by at most $1$). 
Therefore, by considering the four vertices $A_i B_i C_i D_i$, 
	we have $\delta$ at least $\frac{1}{2}(\frac{1}{2}\ell_0-1+\frac{1}{2}\ell_0-1-\frac{1}{4}\ell_0-2-\frac{1}{4}\ell_0-2)=\frac{1}{4}\ell_0-3$ with 
	conditional probability $1-o(1)$.

For every ${\cal E}_i$, the probability that $v_i$ links to $v_{(i+\ell_0\bmod n)}$ is \[\Pr\{{\cal E}_i\}=\Theta(\frac{\ell_0^{-\gamma}}{f(n)})=\begin{cases}\Theta((\log n)^{-\frac{2+\varepsilon}{3+\varepsilon}}) & \gamma=1, \\ \Theta(\log n/n^{1-\gamma}) & 0\leq\gamma<1.\end{cases}\] Define this probability as $q$. We have $\Pr\{{\cal F}\text{ and }{\cal E}_i\}=\Pr\{{\cal F}\mid{\cal E}_i\}\Pr\{{\cal E}_i\}=(1-o(1))q$.

Let $K$ be the random variable denoting the number of ${\cal E}_i$'s that 
	occur.
We define $m=E[K]$, and we have $m=nq$.
One can check that $m=poly(n)$ for both cases $\gamma=1$ and $0\leq\gamma<1$. By Chernoff Bound, $\Pr\{|K-m|\leq m^{0.6}\}>1-2e^{-(m^{-0.4})^2/4\cdot m}=1-2e^{-m^{0.2}/4}$. 
Hence with very high probability, $K$ is close to $m$. Let ${\cal G}$ denote the event that $m-m^{0.6}\leq K \leq m+m^{0.6}$. We show that $\Pr\{{\cal F}\text{ and }{\cal E}_i\text{ and }{\cal G}\}$ is very close to $\Pr\{{\cal F}\text{ and }{\cal E}_i\}$. In fact, $\Pr\{{\cal F}\text{ and }{\cal E}_i\text{ and }{\cal G}\}\geq\Pr\{{\cal F}\text{ and }{\cal E}_i\}-\Pr\{\text{not }G\}=(1-o(1))\Pr\{{\cal F}\text{ and }{\cal E}_i\}$, because $\Pr\{\text{not }{\cal G}\}<2e^{-m^{0.2}/4}=2e^{-poly(n)}$ is much smaller than $\Pr\{{\cal F}\text{ and }{\cal E}_i\}=(1-o(1))q=\omega(1/n)$. On the other hand, it is straightforward that $\Pr\{{\cal F}\text{ and }{\cal E}_i\text{ and }{\cal G}\}\leq\Pr\{{\cal F}\text{ and }{\cal E}_i\}$. Hence $\Pr\{{\cal F}\text{ and }{\cal E}_i\text{ and }{\cal G}\}$ differs from $\Pr\{{\cal F}\text{ and }{\cal E}_i\}$ by a factor at most $(1-o(1))$, and $\Pr\{{\cal F}\text{ and }{\cal E}_i\text{ and }{\cal G}\}=(1-o(1))q$.

For every $r=r_0r_1\cdots r_{n-1}\in\{0,1\}^n$, define ${\cal H}_r$ to be the event that ${\cal E}_j$ happens iff $r_j=1$ for all $j=0,1,\ldots,n-1$. We can see ${\cal H}_r$'s are mutually exclusive for $r\in\{0,1\}^n$. Hence \[\Pr\{{\cal F}\text{ and }{\cal E}_i\text{ and }{\cal G}\}=\sum_{k=m-m^{0.6}}^{m+m^{0.6}}\sum_{r\in\{0,1\}^n\atop r_i=1\text{ and }r\text{ has $k$ $1$'s}}\Pr\{{\cal F}\text{ and }{\cal H}_r\}.\] Therefore
\begin{align*}
\Pr\{{\cal F}\text{ and }{\cal G}\}&=\sum_{k=m-m^{0.6}}^{m+m^{0.6}}\sum_{r\in\{0,1\}^n\atop r\text{ has $k$ $1$'s}}\Pr\{{\cal F}\text{ and }{\cal H}_r\}\\
&=\sum_{k=m-m^{0.6}}^{m+m^{0.6}}\frac{1}{k}\cdot\sum_{i=0}^{n-1}\sum_{r\in\{0,1\}^n\atop r_i=1\text{ and }r\text{ has $k$ $1$'s}}\Pr\{{\cal F}\text{ and }{\cal H}_r\} \\
&>\frac{1}{m+m^{0.6}}\cdot\sum_{i=0}^{n-1}\sum_{k=m-m^{0.6}}^{m+m^{0.6}}\sum_{r\in\{0,1\}^n\atop r_i=1\text{ and }r\text{ has $k$ $1$'s}}\Pr\{{\cal F}\text{ and }{\cal H}_r\} \\
&=\frac{1}{m+m^{0.6}}\sum_{i=0}^{n-1}\Pr\{{\cal F}\text{ and }{\cal E}_i\text{ and }{\cal G}\} \\
&=\frac{n(1-o(1))q}{m+m^{0.6}}=\frac{m(1-o(1))}{m+m^{0.6}}=1-o(1).
\end{align*}
The probability that $\delta\geq\frac{1}{4}\ell_0-3=\Omega(\ell_0)$ is $\Pr\{{\cal F}\}\geq\Pr\{{\cal F}\text{ and }{\cal G}\}=1-o(1)$.
\end{proof}

\subsubsection{The case of $d=1$ and $\gamma > 3$} \label{sec:1dlarger}

We first show that with high probability all long-range edges connect
	two vertices with grid distance $o(n)$, for general $d$ and $\gamma>2d$.

\begin{lem} \label{lemma:shortedge}
In a random graph from $\KSW(n,d,\gamma)$ with $\gamma>2d$,	
	with probability $1-o(1)$
	there is no long-range edge that
	connects two vertices with grid distance
	larger than $n^{\frac{1}{\gamma-d}+\varepsilon}$, where $\varepsilon$ is any positive number.
\end{lem}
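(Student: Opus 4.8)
The plan is a first-moment (union bound) argument over the $n$ long-range choices, exploiting that for $\gamma>d$ the weight series $\sum_i i^{\,d-1-\gamma}$ converges, so a single long-range endpoint lands at small grid distance with overwhelming probability. Concretely, I would fix a threshold $s=n^{\frac{1}{\gamma-d}+\varepsilon}$, bound the probability that one vertex's chosen long-range edge has grid length $>s$, and multiply by $n$.

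\textbf{Step 1 (normalization).} For a fixed vertex $u$, the probability that its long-range edge points to a given vertex at grid distance $i$ is $i^{-\gamma}/Z$, where $Z=\Theta\!\big(\sum_{i=1}^{\Theta(n^{1/d})} i^{\,d-1-\gamma}\big)$, since the base grid (with wrap-around) has $\Theta(i^{d-1})$ vertices at grid distance $i$ for $i$ up to a constant fraction of $n^{1/d}$. Because $\gamma>2d>d$, the exponent satisfies $d-1-\gamma<-1$, so the series converges and $Z=\Theta(1)$.

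\textbf{Step 2 (tail for one vertex).} Hence, for any threshold $s$, the probability that $u$'s chosen long-range edge reaches a vertex at grid distance more than $s$ is
\[
\Theta\!\Big(\tfrac{1}{Z}\textstyle\sum_{i>s} i^{\,d-1-\gamma}\Big)=\Theta\!\Big(\textstyle\sum_{i>s} i^{\,d-1-\gamma}\Big)=\Theta\big(s^{\,d-\gamma}\big),
\]
the last equality being the standard tail bound for a convergent $p$-series. This is exactly the computation behind Lemma~\ref{lemma:localjump}, applied to the complementary (far) range rather than the local range; note that the function $f$ defined earlier only covers $\gamma\le d$, so this short estimate has to be carried out directly, but it is entirely parallel.

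\textbf{Step 3 (union bound).} Each long-range edge of the undirected graph is witnessed by the directed choice of one of its two endpoints, so it suffices to union-bound over the $n$ directed choices (one per vertex): the probability that the graph contains any long-range edge of grid distance exceeding $s$ is at most $O\big(n\,s^{\,d-\gamma}\big)$. Taking $s=n^{\frac{1}{\gamma-d}+\varepsilon}$ gives $s^{\,\gamma-d}=n^{1+\varepsilon(\gamma-d)}$, hence the bound is $O\big(n^{-\varepsilon(\gamma-d)}\big)=o(1)$ because $\gamma-d>0$, which proves the lemma. There is no serious obstacle here: the only care needed is in the sphere-size count $\Theta(i^{d-1})$ on the torus (the matching upper and lower constants hold only up to $i=\Theta(n^{1/d})$, which suffices), and in observing that the argument actually uses only $\gamma>d$ — the stated hypothesis $\gamma>2d$ is more than enough and will be exploited, together with a small $\varepsilon$, in the subsequent lemmas to keep $n^{\frac{1}{\gamma-d}+\varepsilon}$ genuinely sublinear in the relevant scale.
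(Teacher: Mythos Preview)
Your proposal is correct and follows essentially the same approach as the paper: bound the tail probability that a single vertex's long-range edge has grid distance exceeding the threshold by $O(s^{d-\gamma})$ via the convergent $p$-series tail, then union-bound over the $n$ vertices to get $O(n\cdot s^{d-\gamma})=O(n^{-\varepsilon(\gamma-d)})=o(1)$. Your remark that only $\gamma>d$ is actually used here is accurate; the paper's proof likewise does not invoke the stronger hypothesis $\gamma>2d$ in this lemma.
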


\begin{proof}
For a vertex $u$, the probability that the long-range edge from $u$ links to somewhere with distance longer than 
	$\ell_0=n^{\frac{1}{\gamma-d}+\varepsilon}$ from $u$ is 
\[O\left(\sum_{i=\ell_0}^{n^{1/d}}i^{d-1}\frac{i^{-\gamma}}{\sum_{j=1}^{n^{1/d}}j^{d-1}j^{-\gamma}}\right)=O\left(\sum_{i=\ell_0}^\infty i^{d-1-\gamma}\right)=O\left(\ell_0^{d-\gamma}\right).\] By union bound, the probability that such $u$ exists is 
\[O\left(n\ell_0^{d-\gamma}\right)=O\left(n\cdot n^{(\frac{1}{\gamma-d}+\varepsilon)(d-\gamma)}\right)=O\left(n^{\varepsilon(d-\gamma)}\right)=o(1).\] 
Therefore with probability $1-o(1)$ such $u$ does not exist.
\end{proof}

Given a graph $G$ in $\KSW(n,d,\gamma)$, 
	let $\ell_0(G)$ be the largest grid distance of two vertices connected by 
	a long-range edge in $G$.
From the above result, we know that when $\gamma>3d$, 
	$\ell_0(G)<n^{\frac{1}{\gamma-d}+\varepsilon}=o(\sqrt{n^{1/d}})$ with high probability.
For the rest of this section, with $d=1$, 
	we fix $G$ to be any graph in $\KSW(n,1,\gamma)$
	with $\ell_0(G)<n^{\frac{1}{\gamma-1}+\varepsilon}$, and 
	show that $\delta(G) = \Omega(n^c)$
	for some constant $c$.
Since $G$ is fixed, we will use $\ell_0$ to be the short hand of $\ell_0(G)$.


Now go back to the one dimensional grid with wrap-around, which is
	a ring with vertices $v_0,v_1,\ldots, v_{n-1}$.
Notice that in one-dimensional case, the edge vector defined in 
	Section~\ref{sec:d2} degenerates to a scalar value from
	$\{-\lfloor\frac{n}{2}\rfloor,-\lfloor\frac{n}{2}\rfloor+1,
	\ldots,\lfloor\frac{n-1}{2}\rfloor\}$.
We arrange $v_0,v_1,\ldots, v_{n-1}$ clockwise on the ring.
Then a positive edge scalar corresponds to a clockwise hop while a negative
	edge scalar corresponds to a counter-clockwise~hop.

Let $A=v_0$ and $B=v_{\lfloor n/2\rfloor}$ be two specific vertices. 
We define two kinds of paths between $A$ and $B$: a {\em positive path}
	is one in which the summation of edge scalars (not taking module $n$)
	is positive, while a {\em negative path} is one in which the summation
	of edge scalars (not taking module $n$) is negative.





\begin{lem} \label{lemma:ABshort}
There exists a positive path from $A$ to $B$ that does not go through $v_{\lfloor n/2\rfloor+1}$, $v_{\lfloor n/2\rfloor+2}$, \ldots,$v_{n-1}$, and the length is 
	at most $2\ell_0$ longer than the shortest positive path from $A$ to $B$.
Similarly, there exists a negative path from $A$ to $B$ that does not go through $v_1,v_2,\ldots,v_{\lfloor n/2\rfloor-1}$, and the length is at most $2\ell_0$ longer than the shortest negative path from $A$ to $B$.
\end{lem}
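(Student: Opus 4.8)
\emph{Proof plan.} The idea is to lift every walk from $A=v_0$ to $B=v_{\lfloor n/2\rfloor}$ to the integers and to exploit that, since $G$ is fixed with $\ell_0(G)=\ell_0$, each edge of $G$ changes a vertex's index by at most $\ell_0$ in absolute value (grid edges by $\pm1$, long-range edges by at most $\ell_0$). Concretely, given a walk $P=(A=u_0,u_1,\dots,u_L=B)$, set $\tilde w_0=0$ and $\tilde w_{j+1}=\tilde w_j+x_{j+1}$, where $x_{j+1}$ is the signed scalar (taken in $(-n/2,n/2\,]$) of the edge $u_ju_{j+1}$; then $|\tilde w_{j+1}-\tilde w_j|\le\ell_0$, the $j$-th vertex is $v_{\tilde w_j\bmod n}$, and $P$ is a positive path iff $\tilde w_L>0$, in which case necessarily $\tilde w_L\equiv\lfloor n/2\rfloor\pmod n$. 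It therefore suffices to produce a positive path whose lift stays in $[0,\lfloor n/2\rfloor]$ and whose length exceeds that of a shortest positive path by at most $2\ell_0$: such a path uses only $v_0,\dots,v_{\lfloor n/2\rfloor}$ and so avoids $v_{\lfloor n/2\rfloor+1},\dots,v_{n-1}$. (Some positive path exists, e.g.\ the clockwise grid path, of length $\lfloor n/2\rfloor$.) The statement about negative paths is the mirror image, joining $A$ to $B$ counter-clockwise and confining the lift to $[-\lceil n/2\rceil,0]$, so I only treat the positive case.

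\emph{Step 1: a shortest positive path does not wind, i.e.\ $\tilde w_L=\lfloor n/2\rfloor$.} Fix a shortest positive path $P$ and suppose $\tilde w_L\ge\lfloor n/2\rfloor+n$. Let $r$ be the largest index with $\tilde w_r\le\lfloor n/2\rfloor$ (it exists, as $\tilde w_0=0$, and $r<L$). The one-step bound gives $\tilde w_r\ge\lfloor n/2\rfloor+1-\ell_0$, so appending to $u_0,\dots,u_r$ the clockwise grid path from $v_{\tilde w_r}$ to $v_{\lfloor n/2\rfloor}$ yields a positive walk from $A$ to $B$ of length at most $r+\ell_0-1$. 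But the portion $u_r,\dots,u_L$ of $P$ raises the lift by at least $n$, hence uses at least $n/\ell_0$ edges, so $|P|\ge r+n/\ell_0$. Since $\gamma>3$ gives $\ell_0<n^{\frac1{\gamma-1}+\varepsilon}$ with $\tfrac1{\gamma-1}<\tfrac12$, we have $\ell_0-1<n/\ell_0$ for all large $n$, contradicting minimality of $|P|$. Hence $\tilde w_L=\lfloor n/2\rfloor$.

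\emph{Step 2: surgery confining the lift to $[0,\lfloor n/2\rfloor]$.} If the lift of $P$ ever exceeds $\lfloor n/2\rfloor$, let $s\le t$ be the first and last indices $j$ with $\tilde w_j>\lfloor n/2\rfloor$; by Step 1, $t<L$, and the one-step bound puts both $\tilde w_{s-1}$ and $\tilde w_{t+1}$ in $[\lfloor n/2\rfloor+1-\ell_0,\lfloor n/2\rfloor]$, hence at grid distance $\le\ell_0-1$ within the upper arc. Replace the sub-walk $u_{s-1},\dots,u_{t+1}$ by the direct grid path between $v_{\tilde w_{s-1}}$ and $v_{\tilde w_{t+1}}$; the scalar sum is unchanged, so the new walk $P'$ is still a positive path from $A$ to $B$, its lift never exceeds $\lfloor n/2\rfloor$ (the retained pieces are $\le\lfloor n/2\rfloor$ by choice of $s,t$, and the inserted path stays in $[\lfloor n/2\rfloor+1-\ell_0,\lfloor n/2\rfloor]$), and $|P'|\le|P|+\ell_0-1$. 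Running the mirror surgery on $P'$ at level $0$ removes all excursions below $0$ at a further cost $\le\ell_0-1$, giving a walk $P''$ from $A$ to $B$ with scalar sum $\lfloor n/2\rfloor>0$, lift in $[0,\lfloor n/2\rfloor]$, and $|P''|\le|P|+2\ell_0$. Finally, deleting cycles from $P''$ produces a simple path without increasing the length and without changing the scalar sum, since every cycle of $P''$ has lift inside an interval of length $<n$ and hence scalar sum $0$; this is the desired positive path confined to the upper arc, and symmetrically for the negative path.

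\emph{Anticipated main obstacle.} The delicate point is Step 1: a priori a shortest positive path could wind around the ring, and ruling this out is exactly where the hypothesis $\gamma>3$ (equivalently $\ell_0=o(\sqrt n)$) is essential — it is also what makes the level-$\lfloor n/2\rfloor$ surgery well defined, because without $\tilde w_L=\lfloor n/2\rfloor$ the window $[\lfloor n/2\rfloor+1-\ell_0,\lfloor n/2\rfloor]$ need not lie in the upper arc. Once no-winding is established, the surgeries and the checks that they preserve positivity and the lift window are routine bookkeeping.
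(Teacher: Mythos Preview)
Your argument is correct, but the paper's proof is noticeably simpler because it avoids your Step~1 entirely. Instead of first proving that a shortest positive path does not wind and then performing two surgeries, the paper works directly with an arbitrary shortest positive path $\mathcal P$ (winding or not). With $s_i$ the partial sums of edge scalars, it takes $i_1$ to be the \emph{first} index with $s_{i_1}\ge\lfloor n/2\rfloor$ and $i_2$ the \emph{last} index before $i_1$ with $s_{i_2}\le 0$. The one-step bound then forces $s_{i_2+1}\in(0,\ell_0]$ and $s_{i_1-1}\in[\lfloor n/2\rfloor-\ell_0,\lfloor n/2\rfloor)$, and all intermediate partial sums lie in $(0,\lfloor n/2\rfloor)$, so the subwalk from $v_{s_{i_2+1}}$ to $v_{s_{i_1-1}}$ already avoids the lower arc. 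Prepending at most $\ell_0$ grid edges from $A$ and appending at most $\ell_0$ grid edges to $B$ gives the required path with excess length at most $2\ell_0$.

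The practical difference is that the paper extracts a single good subpath and pads its ends, whereas you keep the whole path and trim its two kinds of excursions. Your route costs you the extra no-winding lemma, which is the only place you invoke $\ell_0=o(\sqrt n)$; the paper's proof of this lemma needs nothing about $\gamma>3$ beyond $\ell_0<\lfloor n/2\rfloor$. Both arguments are sound, but the subpath-extraction idea is worth knowing since it makes the $2\ell_0$ bound almost immediate.
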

\begin{proof}
We only give the proof for the positive path case. 
Consider any shortest positive path $\cal P$ from $A$ to $B$. 
We first show the following claim.
Let $S_A = \{v_1,\ldots,v_{\ell_0}\}$ be the set of 
	$\ell_0$ consecutive vertices clockwise to $A$, and
	$S_B  = \{v_{\lfloor n/2\rfloor-\ell_0},v_{\lfloor n/2\rfloor-\ell_0+1},
	\ldots,v_{\lfloor n/2\rfloor-1}\}$ be the set of $\ell_0$ consecutive
	vertices counter-clockwise to $B$.

{\em Claim.} There must exist a subpath in $\cal P$ from 
	a vertex $u \in S_A$ to a vertex $w \in S_B$ that does not go 
	through $v_{\lfloor n/2\rfloor+1}$, $v_{\lfloor n/2\rfloor+2}$, \ldots,$v_{n-1}$. 


Say there are $m$ edges in $\cal P$. Let $s_i$ ($0\le i\le m$) denote the summation of the first $i$ edge scalars in $\cal P$ (not taking module $n$). Initially we have $s_0=0$, and the final value $s_m$ is $kn+\lfloor n/2\rfloor$ for some integer $k\ge 0$. One can also see that the position after going through the first $i$ edges in $\cal P$ is at $v_{(s_i\bmod n)}$. Let $i_1$ be the smallest integer such that $s_{i_1}\geq\lfloor n/2\rfloor$ (exist because $s_m\geq\lfloor n/2\rfloor$), and $i_2$ be the largest integer such that $i_2<i_1$ and $s_{i_2}\leq0$ (exist because $s_0=0$).

We consider the $(i_2+1)$-th edge, which begins at $v_{(s_{i_2}\bmod n)}$ for some $s_{i_2}\leq 0$, and ends at $v_{(s_{i_2+1}\bmod n)}$ for some $s_{i_2+1}>0$. The number $s_{i_2+1}$ is at most $s_{i_2}+\ell_0$ since no edge is longer than $\ell_0$. Therefore, $s_{i_2+1}$ must be a number in $(0,\ell_0]$ and $v_{(s_{i_2+1}\bmod n)}$ must be in $S_A$. We choose $u=v_{(s_{i_2+1}\bmod n)}\in S_A$. Similarly, pick $w=s_{i_1-1}$, which is the beginning point of the $i_1$-th edge, we have $w\in S_B$. The intermediate values $s_{i_2+1},s_{i_2+2},\ldots,s_{i_1-1}$ are all in the interval $(0,\lfloor n/2\rfloor)$ by the definitions of $i_1$ and $i_2$. 
That is, for all $j$ such that $i_2 + 1 \le j \le i_1-1$, 
	$s_j \bmod n = s_j$, and the corresponding vertex $v_{(s_j\bmod n)} = v_{s_j}
	\in \{v_1, v_2, \ldots, v_{\lfloor n/2\rfloor-1}\}$.
Therefore the subpath from $u$ to $w$ does not go through $v_{\lfloor n/2\rfloor+1}$, $v_{\lfloor n/2\rfloor+2}$, \ldots,$v_{n-1}$, and the claim holds.

With the claim, we can construct a positive path $\cal P'$, which
	use ring edges from $A$ to $u$, then use the subpath in the claim
	from $u$ to $w$, and then from $w$ to $B$ using ring edges. 
The length of $\cal P'$ is at most $2\ell_0$ longer than $\cal P$, 
	the shortest positive path from $A$ to $B$.
\end{proof}

We use ${\cal P}_{AB}^+$ and ${\cal P}_{AB}^-$ to denote the two paths 
	stated in the above lemma. According to this lemma, one of ${\cal P}_{AB}^+$ and ${\cal P}_{AB}^-$ is at most $2\ell_0$ longer than the shortest path between $A$ and $B$.

Let $C$ be the middle point of ${\cal P}_{AB}^+$ (take a vertex nearest middle if the path has odd number of edges), and ${\cal P}_{AC}^+$, ${\cal P}_{CB}^+$ be the two subpaths from $A$ to $C$ and $C$ to $B$. Similarly, let $D$ be the middle point of ${\cal P}_{AB}^-$ and ${\cal P}_{AD}^-$, ${\cal P}_{DB}^-$ be the two subpaths.

\begin{lem} \label{lemma:ACshort}
The paths ${\cal P}_{AC}^+$, ${\cal P}_{CB}^+$, ${\cal P}_{AD}^-$ and ${\cal P}_{DB}^-$ are at most $3\ell_0+1$ longer than the shortest paths between corresponding pairs of vertices.
\end{lem}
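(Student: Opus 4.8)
The plan is to argue by contradiction via a splicing argument, exactly as in the proof of Lemma~\ref{lemma:ABshort}. Suppose, say, that $|{\cal P}_{AC}^{+}| > d(A,C) + 3\ell_{0} + 1$, and let $R$ be a shortest path from $A$ to $C$. Replacing the prefix ${\cal P}_{AC}^{+}$ of ${\cal P}_{AB}^{+}$ by $R$ gives a walk $W = R\cdot{\cal P}_{CB}^{+}$ from $A$ to $B$ of length $d(A,C)+|{\cal P}_{CB}^{+}| < |{\cal P}_{AC}^{+}|+|{\cal P}_{CB}^{+}|-(3\ell_{0}+1) = |{\cal P}_{AB}^{+}|-(3\ell_{0}+1)$, so $W$ is more than $2\ell_{0}$ shorter than ${\cal P}_{AB}^{+}$. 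The same construction, using the suffix instead of the prefix and ${\cal P}_{AB}^{-}, D$ in place of ${\cal P}_{AB}^{+}, C$, covers the other three paths. Hence it is enough to show that $W$ refines to a simple \emph{positive} path from $A$ to $B$ (a simple \emph{negative} path, for the two $D$-paths): such a path contradicts the near-minimality, among positive (resp.\ negative) paths, of ${\cal P}_{AB}^{+}$ (resp.\ ${\cal P}_{AB}^{-}$) given by Lemma~\ref{lemma:ABshort}.

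Let $\sigma(P)$ denote the sum of the edge scalars of a path $P$ and $j_{C}$ the index of $C$; since ${\cal P}_{AB}^{+}$ stays on the clockwise arc from $A$ to $B$, we have $0\le j_{C}\le\lfloor n/2\rfloor$, $\sigma({\cal P}_{AC}^{+})=j_{C}$ and $\sigma({\cal P}_{CB}^{+})=\lfloor n/2\rfloor-j_{C}\ge 0$. As $\sigma(R)\equiv j_{C}\pmod n$, there are two cases. (i) $\sigma(R)\ge 0$: then $R$ is a positive $A$--$C$ path and $\sigma(W)=\sigma(R)+(\lfloor n/2\rfloor-j_{C})\ge\lfloor n/2\rfloor>0$, so $W$ is positive and we are done. (ii) $\sigma(R)\le j_{C}-n<0$: then $R$ winds once counterclockwise; since its successive partial sums differ by at most $\ell_{0}$ and run from $0$ down past $-\lceil n/2\rceil$, some vertex $z$ of $R$ lies within ring distance $\ell_{0}$ of $B=v_{\lfloor n/2\rfloor}$. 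Splitting $R$ at $z$ into $R_{1}$ (from $A$ to $z$) and $R_{2}$ (from $z$ to $C$) and using $d(A,B)\le|R_{1}|+\ell_{0}$ and $d(C,B)\le|R_{2}|+\ell_{0}$ gives $d(A,C)=|R_{1}|+|R_{2}|\ge d(A,B)+d(C,B)-2\ell_{0}$. I would then combine this with three structural facts: $|{\cal P}_{AB}^{+}|\ge\lfloor n/2\rfloor/\ell_{0}$ (no edge is longer than $\ell_{0}$); $C$ is the edge-midpoint of ${\cal P}_{AB}^{+}$, so $|{\cal P}_{AC}^{+}|$ and $|{\cal P}_{CB}^{+}|$ each exceed $\tfrac{1}{2}|{\cal P}_{AB}^{+}|-1$, which together with $\ell_{0}=o(\sqrt n)$ forces $d(C,B)=\omega(1)$; and one of ${\cal P}_{AB}^{+},{\cal P}_{AB}^{-}$ has length at most $d(A,B)+2\ell_{0}$. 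Plugging these into $|{\cal P}_{AC}^{+}|\le|{\cal P}_{AB}^{+}|-|{\cal P}_{CB}^{+}|\le|{\cal P}_{AB}^{+}|-d(C,B)$ then yields $|{\cal P}_{AC}^{+}|\le d(A,C)+3\ell_{0}+1$, the desired contradiction; the $D$-paths are symmetric.

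The hard part is case (ii): it is the only situation in which naive splicing produces a walk of the wrong sign, and it is precisely the case in which a shortest path runs ``the long way'' around the ring. Carrying it out requires simultaneously exploiting the near-minimality of \emph{both} ${\cal P}_{AB}^{+}$ and ${\cal P}_{AB}^{-}$, together with the position constraints that $C$ (resp.\ $D$) inherits from being a midpoint rather than an arbitrary vertex, and it is here that the hypothesis $\ell_{0}=o(\sqrt n)$ on the longest long-range edge (guaranteed by Lemma~\ref{lemma:shortedge} when $\gamma>3$) is needed to exclude degenerate configurations. The remaining points — that grid edges never change which paths exist, that $R$ and the spliced walk may be taken simple without affecting their sign, and the arithmetic with floors, ceilings and the additive $\pm1$'s — are routine.
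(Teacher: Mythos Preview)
Your contradiction-by-splicing strategy is the right idea, but case~(ii) as sketched does not close, and the ``routine'' claim that simplification of $W$ preserves the sign in case~(i) is not justified. In case~(ii) you derive $d(A,C)\ge d(A,B)+d(C,B)-2\ell_0$ and want to conclude $|{\cal P}_{AC}^+|\le d(A,C)+3\ell_0+1$; but tracing the inequalities you list, this requires an upper bound of the form $|{\cal P}_{AB}^+|\le d(A,B)+O(\ell_0)$, which you do \emph{not} have --- Lemma~\ref{lemma:ABshort} only says that ${\cal P}_{AB}^+$ is within $2\ell_0$ of the shortest \emph{positive} path, and the global shortest $A$--$B$ path could well be negative. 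Invoking ${\cal P}_{AB}^-$ and $\ell_0=o(\sqrt n)$ does not bridge this gap. In case~(i), if $R$ happens to wind once clockwise ($\sigma(R)=j_C+n$), then $W$ may share a long-range edge with ${\cal P}_{CB}^+$; shortcutting removes a closed sub-walk whose $\sigma$ is a non-zero multiple of $n$, and the resulting path can become negative.

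The paper's proof sidesteps both issues by working not with the sign $\sigma$ but with the geometric property ``stays in the arc $\{v_0,\dots,v_{\lfloor n/2\rfloor}\}$'', which is what ${\cal P}_{AB}^+$ actually satisfies. Concretely: take any shortest path ${\cal P}_{AC}^*$ and look at the \emph{last time} it enters this arc (it must, since $C$ lies there). Because every edge has length at most $\ell_0$, the entry point lies within $\ell_0$ of either $A$ or $B$, and the suffix from that point to $C$ stays entirely in the arc. If the entry is near $A$, prepend at most $\ell_0$ ring edges to get an in-arc replacement for ${\cal P}_{AC}^+$ that is more than $2\ell_0$ shorter. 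If the entry is near $B$, reverse the suffix and append at most $\ell_0$ ring edges to $B$ to get an in-arc replacement for ${\cal P}_{CB}^+$; here the midpoint condition $|{\cal P}_{CB}^+|\ge|{\cal P}_{AC}^+|-1>d(A,C)+3\ell_0$ is exactly what makes the saving exceed $2\ell_0$. Either way one contradicts Lemma~\ref{lemma:ABshort} directly, without ever needing to control signs of spliced walks or to compare $|{\cal P}_{AB}^+|$ with $d(A,B)$.
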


\begin{proof}
We only give the proof for ${\cal P}_{AC}^+$. Suppose that it is not true, and the shortest path ${\cal P}_{AC}^*$ from $A$ to $C$ is at least $3l_0+2$ shorter than ${\cal P}_{AC}^+$. The path ${\cal P}_{AC}^*$ must be at least $3\ell_0+1$ shorter than ${\cal P}_{CB}^+$ because $C$ is the point nearest middle of ${\cal P}_{AB}^+$.

Consider the last time that the path ${\cal P}_{AC}^*$ gets into the range $\{v_0, v_1, \ldots, v_{\lfloor n/2\rfloor}\}$, the subpath of ${\cal P}_{AC}^*$ 
	from that point to $C$ must be one of the following cases.

\begin{itemize}
\item It is a subpath from a vertex $A'\in\{v_0,v_1,\ldots,v_{\ell_0}\}$ to $C$ not going through $v_{\lfloor n/2\rfloor+1}$, $v_{\lfloor n/2\rfloor+2}$, \ldots,$v_{n-1}$. Replace ${\cal P}_{AC}^+$ by the path from $A$ to $A'$ through ring edges concatenated with the subpath of ${\cal P}_{AC}^*$ 
	from $A'$ to $C$. This will cause the length of ${\cal P}_{AB}^+$ to decrease by at least $3\ell_0+2-\ell_0>2\ell_0$, which is impossible by Lemma~\ref{lemma:ABshort}.

\item It is a subpath from a vertex $B'\in\{v_{\lfloor n/2\rfloor-\ell_0},v_{\lfloor n/2\rfloor-\ell_0+1},\ldots,v_{\lfloor n/2\rfloor}\}$ to $C$ that does not go through $v_{\lfloor n/2\rfloor+1}$, $v_{\lfloor n/2\rfloor+2}$, \ldots,$v_{n-1}$. Replace ${\cal P}_{CB}^+$ by the reverse of this subpath of ${\cal P}_{AC}^*$ from $C$ to $B'$ 
	concatenated with ring edges from $B'$ to $B$. 
This will cause the length of ${\cal P}_{AB}^+$ to decrease by at least $3\ell_0+1-\ell_0>2\ell_0$, which is impossible by Lemma~\ref{lemma:ABshort}.
\end{itemize}
Therefore the lemma holds.
\end{proof}

Then we consider the shortest path between $C$ and $D$.

\begin{lem}
Either the concatenation of ${\cal P}_{CB}^+$ and reversed ${\cal P}_{DB}^-$, or the concatenation of reversed ${\cal P}_{AC}^+$ and ${\cal P}_{DA}^-$ is at most $8\ell_0+2$ longer than the shortest path between $C$ and $D$.
\end{lem}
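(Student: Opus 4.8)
The plan is to reduce the statement to a single geometric observation: \emph{every shortest path between $C$ and $D$ in $G$ contains a vertex $z$ with $d_G(z,A)\le\ell_0$ or $d_G(z,B)\le\ell_0$.} Given this, the bound follows by a short triangle-inequality computation on top of Lemma~\ref{lemma:ACshort}. Indeed, suppose a shortest $C$--$D$ path passes through a vertex $z$ with, say, $d_G(z,B)\le\ell_0$. Since $z$ lies on a geodesic, $d_G(C,z)+d_G(z,D)=d_G(C,D)$, so the triangle inequality gives $d_G(C,B)+d_G(B,D)\le d_G(C,z)+d_G(z,D)+2\ell_0=d_G(C,D)+2\ell_0$. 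By Lemma~\ref{lemma:ACshort}, $|{\cal P}_{CB}^+|\le d_G(C,B)+3\ell_0+1$ and $|{\cal P}_{DB}^-|\le d_G(D,B)+3\ell_0+1$, so the concatenation of ${\cal P}_{CB}^+$ with the reverse of ${\cal P}_{DB}^-$ has length at most $d_G(C,D)+8\ell_0+2$. If instead $d_G(z,A)\le\ell_0$, the same computation with $A$ in place of $B$ shows that the $C$--$A$--$D$ path obtained from the reverse of ${\cal P}_{AC}^+$ followed by ${\cal P}_{DA}^-$ has length at most $d_G(C,D)+8\ell_0+2$. Thus the two advertised paths cover precisely the two poles $A$ and $B$ through which a $C$--$D$ geodesic can pass.

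To prove the observation I would view the $n$ vertices as points on a circle and use the crucial fact that every edge of $G$, ring edge or long-range edge, joins two vertices at ring distance at most $\ell_0$ (ring edges have ring distance $1$; long-range edges have ring distance at most $\ell_0=\ell_0(G)$ by the choice of $G$). Deleting the poles $A=v_0$ and $B=v_{\lfloor n/2\rfloor}$ leaves two open arcs, $U=\{v_1,\ldots,v_{\lfloor n/2\rfloor-1}\}$ and $L=\{v_{\lfloor n/2\rfloor+1},\ldots,v_{n-1}\}$; by Lemma~\ref{lemma:ABshort} the path ${\cal P}_{AB}^+$ stays out of $L$ and ${\cal P}_{AB}^-$ stays out of $U$, so in the generic case $C\in U$ and $D\in L$. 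Any $C$--$D$ path then has an edge $(y,z)$ with $y\notin L$ and $z\in L$. If $y\in\{A,B\}$, that edge already puts $z$ within ring distance $\ell_0$ of a pole. Otherwise $y\in U$, so the edge $(y,z)$ crosses from $U$ to $L$; since its two endpoints are only $\ell_0=o(n)$ apart on the ring, the crossing occurs ``past $A$'' or ``past $B$'' but not both, and a short check on the wrap-around positions then shows that \emph{both} endpoints lie within ring distance $\ell_0$ of that single pole. Since ring distance dominates graph distance, we obtain the required $z$.

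The degenerate cases, where $C$ or $D$ coincides with $A$ or $B$, occur only for small $n$ (each of ${\cal P}_{AB}^+$ and ${\cal P}_{AB}^-$ accumulates displacement about $n/2$ in steps of size at most $\ell_0$, hence has $\Omega(n/\ell_0)$ edges, so its midpoint is far from both poles), and they are trivial: one of the two advertised concatenations then reduces, up to reversal, to a single one of the four paths already controlled by Lemma~\ref{lemma:ACshort}, whose $3\ell_0+1$ slack is more than enough. The one step I expect to require real care is the ``crosses past exactly one pole'' claim, namely the case analysis of how an edge of ring length at most $\ell_0$ can jump between the two half-rings $U$ and $L$, with all positions taken modulo $n$; everything else is bookkeeping layered on Lemmas~\ref{lemma:ABshort} and~\ref{lemma:ACshort}.
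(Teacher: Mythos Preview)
Your proposal is correct and follows essentially the same route as the paper: both arguments hinge on the observation that any shortest $C$--$D$ path must contain a vertex within ring distance $\ell_0$ of $A$ or of $B$, after which the bound drops out of the triangle inequality combined with Lemma~\ref{lemma:ACshort}. The paper simply asserts the ``passes through a neighborhood of $A$ or $B$'' fact and then splits the geodesic at the point $B'$ (your $z$), whereas you supply the arc-crossing justification explicitly; the arithmetic that follows is identical.
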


\begin{proof}
The shortest path from $C$ to $D$ (say ${\cal P}_{CD}^*$) must go through either $B$'s neighborhood $v_{\lfloor n/2\rfloor}$, $v_{\lfloor n/2\rfloor+1}$, \ldots, $v_{\lfloor n/2\rfloor+\ell_0}$ or $A$'s neighborhood $v_0,v_1,\ldots,v_{\ell_0}$. Without loss of generality, we assume that it goes through the point $B'$ in $B$'s neighborhood. Use ${\cal P}_{CB'}^*$ and ${\cal P}_{B'D}^*$ to denote the two subpaths from $C$ to $B'$ and $B'$ to $D$ respectively. They must also be shortest paths of $CB'$ and $B'D$.

The path ${\cal P}_{CB'}^*$ is at most $\ell_0$ shorter than the shortest path between $C$ and $B$, otherwise the path ${\cal P}_{CB'}^*$ concatenated with ring edges from $B'$ to $B$ would be shorter than the shortest path. Similarly, ${\cal P}_{B'D}^*$ is at most $\ell_0$ shorter than the shortest path between $B$ and $D$. Therefore the shortest path between $C$ and $D$ is at most $2\ell_0$ shorter than the concatenation of shortest paths of $CB$ and $BD$. Then by Lemma~\ref{lemma:ACshort}, the summation of ${\cal P}_{CB}^+$ and ${\cal P}_{DB}^-$ is at most $2(3\ell_0+1)+2\ell_0=8\ell_0+2$ longer than the shortest path between $C$ and $D$.
\end{proof}

We have the following corollary since the two paths in this lemma differ by at most 2 considering the length.

\begin{coro} \label{cor:CDshort}
The concatenation of ${\cal P}_{CB}^+$ and reversed ${\cal P}_{DB}^-$, and the concatenation of reversed ${\cal P}_{AC}^+$ and ${\cal P}_{DA}^-$ are both at most $8\ell_0+4$ longer than the shortest path between $C$ and $D$.
\end{coro}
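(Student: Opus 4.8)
The plan is to read the corollary off the lemma immediately preceding it, the only extra ingredient being that the two candidate $C$–$D$ paths named there have nearly equal length. Write $P_B$ for the concatenation of ${\cal P}_{CB}^+$ with the reverse of ${\cal P}_{DB}^-$ (a walk from $C$ to $D$ passing through $B$'s neighborhood) and $P_A$ for the concatenation of the reverse of ${\cal P}_{AC}^+$ with ${\cal P}_{AD}^-$ (the path written ${\cal P}_{DA}^-$ in the lemma; same length). Their lengths are $|{\cal P}_{CB}^+|+|{\cal P}_{DB}^-|$ and $|{\cal P}_{AC}^+|+|{\cal P}_{AD}^-|$ respectively, and both are genuine walks from $C$ to $D$, hence at least $d_G(C,D)$ long.

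The first step is to bound $\big||P_A|-|P_B|\big|$. Since $C$ was chosen as a vertex nearest the middle of ${\cal P}_{AB}^+$, the subpaths ${\cal P}_{AC}^+$ and ${\cal P}_{CB}^+$ have lengths differing by at most $1$ (they are equal when ${\cal P}_{AB}^+$ has an even number of edges, and differ by exactly $1$ otherwise); symmetrically, because $D$ is a vertex nearest the middle of ${\cal P}_{AB}^-$, the lengths of ${\cal P}_{AD}^-$ and ${\cal P}_{DB}^-$ differ by at most $1$. Adding these two estimates via the triangle inequality for integers gives $\big||P_A|-|P_B|\big|\le 2$.

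The second step is to combine this with the preceding lemma, which asserts that one of $P_A$, $P_B$ exceeds the shortest-path length $d_G(C,D)$ by at most $8\ell_0+2$. By the length comparison just established, the other path then exceeds $d_G(C,D)$ by at most $(8\ell_0+2)+2 = 8\ell_0+4$; hence both do, which is exactly the statement of the corollary.

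I do not expect any real obstacle here: the argument is a short parity bookkeeping on path lengths, and the additive slack of $2$ — equivalently, the weakening from $8\ell_0+2$ to $8\ell_0+4$ — arises solely from the ``nearest middle vertex'' convention invoked when ${\cal P}_{AB}^+$ or ${\cal P}_{AB}^-$ has an odd number of edges. The only point to phrase carefully is the gluing: one must check that reversing the correct subpaths makes $P_A$ and $P_B$ well-defined walks from $C$ to $D$, so that each is automatically an upper bound for $d_G(C,D)$ and the lemma applies verbatim.
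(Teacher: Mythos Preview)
Your proposal is correct and matches the paper's own argument essentially verbatim: the paper's entire proof is the one-line observation that ``the two paths in this lemma differ by at most $2$ considering the length,'' which is precisely your midpoint bookkeeping $\big||P_A|-|P_B|\big|\le 2$ combined with the $8\ell_0+2$ bound from the preceding lemma. If anything, you have spelled out the reason for the length-difference bound more explicitly than the paper does.
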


Now we can prove the lower bound of $\delta$ for this case.

\begin{proof}
[Proof of Theorem~\ref{thm:main} for the case of $d=1$ and $\gamma>3$.]
Consider the four points $A$, $B$, $C$ and $D$ defined above. 
Let $d(x,y)$ denote the distance between vertices $x$ and $y$.
We can see the following consequences about pairwise distances: 
	(a) $d(A,B) \ge \lfloor n/2\rfloor/\ell_0$; 
	(b) $d(C,D) \ge |{\cal P}_{CB}^+| +  |{\cal P}_{DB}^-| - (8\ell_0+4)
	\ge |{\cal P}_{AC}^+|-1 +  |{\cal P}_{DB}^-| - (8\ell_0+4)
	\ge d(A,C) + d(D,B) - (8\ell_0+5)$, where the first inequality is
	due to Corollary~\ref{cor:CDshort}; and
	(c) similarly, $d(C,D) \ge d(A,D) + d(C,B) - (8\ell_0+5)$.
Therefore, we have both 
	$d(A,B) + d(C,D) \ge  d(A,C) + d(D,B) + \lfloor n/2\rfloor/\ell_0 
	- (8\ell_0+5)$, and 
	$d(A,B) + d(C,D) \ge  d(A,D) + d(C,B) + \lfloor n/2\rfloor/\ell_0 
	- (8\ell_0+5)$.

For $\ell_0<n^{\frac{1}{\gamma-1}+\varepsilon}$ with any sufficiently small
	$\varepsilon > 0$ and sufficiently large $n$, we have 
	$\lfloor n/2\rfloor/\ell_0>>8\ell_0+5$,  
	and thus $d(A,B) + d(C,D)$ is the largest
	distance pair. 
In this case, $\delta\geq\lfloor n/2\rfloor/\ell_0-(8\ell_0+5)$. By Lemma~\ref{lemma:shortedge}, with probability $1-o(1)$ there is $\ell_0<n^{\frac{1}{\gamma-1}+\varepsilon}$. Therefore, with probability $1-o(1)$, 
	$\delta(\KSW(n,1,\gamma))=\Omega(n/n^{\frac{1}{\gamma-1}+\varepsilon})=
	\Omega(n^{\frac{\gamma-2}{\gamma-1}-\varepsilon})$ for $d=1$, $\gamma>3$ and 
	any sufficiently small $\varepsilon>0$.
Since for any $\varepsilon' > \varepsilon > 0$, 
	$n^{\frac{\gamma-2}{\gamma-1}-\varepsilon} = 
	\Omega(n^{\frac{\gamma-2}{\gamma-1}-\varepsilon'})$, we have
	$\delta(\KSW(n,1,\gamma))= \Omega(n^{\frac{\gamma-2}{\gamma-1}-\varepsilon})$
	for any $\varepsilon > 0$.
%
%
%
%
\end{proof}

\subsection{Extensions of the \KSW model} \label{sec:sworldextension}

Our analysis also holds for some variants of the \KSW model. In this section, we study one variant of the underlying structure: grid without wrap-around; and two variants of edge linking: multiple edges for each vertex and linking edges independently.

\vspace{\topsep}

\noindent {\bf Grid without wrap-around.} We modify our analysis so that the first two results of Theorem~\ref{thm:main} still hold. For the case of $d\geq2$ and $0\leq\gamma\leq d$ (Section~\ref{sec:d2}), the changes are as follows. For a path from $u$ to $v$, we divide it into segments as before. Elements in an edge vector are in $\{-n',-n'+1,\ldots,n'-1,n'\}$ now. Recall the last condition that we define two paths from $u$ to $v$ belong to the same category: the summations (not module $n'$) of all segment vectors in the two paths are equal. This is always satisfied for grid without wrap-around, because the summation of all segment vectors depends only on the positions of $u$ and $v$. In the proofs of Lemma~\ref{lemma:numcate2} and Lemma~\ref{lemma:numcate1}, the summation of all segment vectors is fixed rather than $(2\ell+1)^d$ or $(2k+1)^d$ choices respectively. Hence the upper bounds given in Lemmas~\ref{lemma:numcate2} and~\ref{lemma:numcate1} still hold. In Lemma~\ref{lemma:probedge}, an edge between $u$ and $v$ can be from $u$ to $v$ or from $v$ to $u$. The probabilities of the two cases may differ by a constant factor on grid without wrap-around. Hence the probability of existing an edge between $u$ and $v$ is changed by at most a constant factor, and Lemma~\ref{lemma:probedge} still holds. One can verify the rest analysis in Section~\ref{sec:d2} still hold. For the case of $d=1$ and $0\leq\gamma\leq1$ (Section~\ref{sec:1dsmallr}), the only change is that event  ${\cal E}_i$, which 
	is the event that $v_i$ links to $v_{i+\ell_0}$, only applies when
	$i=0,1,\ldots,n-\ell_0-1$. 
Since $\ell_0=O(\log n)$ is much smaller than $n$, there are still almost $n$ events ${\cal E}_i$ and the argument has no significant change. 
Hence Theorem~\ref{thm:main} still holds for the cases that $d\geq1$ and $0\leq\gamma\leq d$.

\vspace{\topsep}

\noindent {\bf Multiple edges for each vertex.} In this model, each vertex links a constant, say $d_0$, number of edges according to the same distribution that $u$ links to $v$ with probability $\frac{d_B(u,v)^{-\gamma}}{\sum_{v'}d_B(u,v')^{-\gamma}}$. We show that all our analysis still hold with some slight changes. For the case of $d\geq2$ and $1\leq\gamma\leq d$ (Section~\ref{sec:d2}), Lemma~\ref{lemma:probedge} still holds since the probability of the edge $(u,v)$ is increased by at most $d_0$ times using union bound. And the proof of Lemma~\ref{lemma:localjump} still works, because $O(i^{d-1}\frac{i^{-\gamma}}{f(n')})$ is an upper bound of the probability that $u$ links to some vertex at distance $i$ by union bound. For the case of $d=1$ and $0\leq\gamma\leq1$ (Section~\ref{sec:1dsmallr}), we define ${\cal E}_i$ as the event that at least one of $v_i$'s edges links to $v_{(i+\ell_0\bmod n)}$. One can see $\Pr\{{\cal E}_i\}$ is still $\Theta(\frac{\ell_0^{-\gamma}}{f(n)})$ and the rest argument also holds. For the case of $d=1$ and $\gamma>3$ (Section~\ref{sec:1dlarger}), Lemma~\ref{lemma:shortedge} still holds for the same reason as Lemma~\ref{lemma:localjump}, that $O(i^{d-1}\frac{i^{-\gamma}}{\sum_{j=1}^{n^{1/d}}j^{d-1}j^{-\gamma}})$ is an upper bound of the probability that $u$ links to some vertex at distance $i$. One can verify all results of Theorem~\ref{thm:main} still hold under this change.

\vspace{\topsep}

\noindent {\bf Linking edges independently.} In this model, all edges exist independently. The edge between $(u,v)$ exists with probability $\frac{d_0\cdot d_B(u,v)^{-\gamma}}{\sum_{i=1}^{n^{1/d}}i^{d-1-\gamma}}$, where $d_0$ is some constant. We also give the changes in our analysis. Lemma~\ref{lemma:probedge} is straightforward in this model. Lemmas~\ref{lemma:localjump} and~\ref{lemma:shortedge} still hold for the same reason as above. In the proof of Lemma~\ref{lemma:cateogoryexists}, the first line of Eq~(\ref{eqn:probc}), which uses the multiplication of edges' probabilities for an upper bound of the path's probability, still holds because it is now just the multiplication of independent events. For the case of $d=1$ and $0\leq\gamma\leq1$ (Section~\ref{sec:1dsmallr}), we define ${\cal E}_i$ to be the event that the edge $(v_i,v_{(i+\ell_0\bmod n)})$ exists, one can see the analysis still works. All results of Theorem~\ref{thm:main} still hold under this change.

\vspace{\topsep}

In summary, Theorem~\ref{thm:main} of the case $d\geq1$ and $0\leq\gamma\leq d$ still holds for grid without wrap-around, and Theorem~\ref{thm:main} of all cases still holds for both variants of edge linking. The variants of edge linking can be combined with grid without wrap-around, for which Theorem~\ref{thm:main} of the case $d\geq1$ and $0\leq\gamma\leq d$ still holds.

\section{$\delta$-hyperbolicity of ringed trees} \label{sec:ringedtree}

In this section, we consider the $\delta$-hyperbolicity of graphs constructed 
according to a variant of the small-world graph model, in which long-rang 
edges are added on top of a base graph that is a binary tree or tree-like 
low-$\delta$ graph.
In particular, we will analyze the effect on the $\delta$-hyperbolicity of 
adding long-range links to a ringed tree base graph; and then we will 
consider several related extensions, including an extension to the binary 
tree.

\begin{defi}[Ringed tree]
A {\em ringed tree} of level $k$, denoted $RT(k)$, is a fully binary tree 
with $k$ levels (counting the root as a level), in which all vertices at the 
same level are connected by a ring.
More precisely, we can use a binary string to represent each vertex
	in the tree, such that the root (at level $0$) 
	is represented by an empty string,
	and the left child and the right child of a vertex with string $\sigma$ are
	represented as $\sigma 0$ and $\sigma 1$, respectively.
Then, at each level $i=1,2,\ldots, k-1$, we connect two vertices
	$u$ and $v$ represented by binary strings $\sigma_u$ and $\sigma_v$
	if $(\sigma_u + 1) \mod 2^i = \sigma_v$, where the addition treats
	the binary strings as the integers they represent. 
As a convention, we say that a level is higher if it has a smaller 
	level number and thus is closer to the root.
\end{defi}

\noindent
Figure~\ref{fig:disk}(d) illustrates the ringed tree $RT(6)$.
Note that the diameter of the ringed tree $RT(k)$ is $\Theta(\log n)$, where 
$n=2^k-1$ is the number of vertices in $RT(k)$, and we will use $RT(\infty)$ 
to denote the infinite ringed tree when $k$ in $RT(k)$ goes to infinity.
Thus, a ringed tree may be thought of as a soft version of a binary tree; and 
to some extent, one can view a ringed tree as an idealized 
	picture reflecting the hierarchical
	structure in real networks coupled with local neighborhood connections,
	such as Internet autonomous system (AS) networks, which has both 
	a hierarchical structure of different level of AS'es, and peer connections
	based on geographical proximity. 






\subsection{Results and their implications}
\label{sec:rtresult}

A visual comparison of the ringed tree of Figure~\ref{fig:disk}(d) with
the tessellation of Poincar\'{e} disk (Figure~\ref{fig:disk}(b)) suggests 
that the ringed tree can been seen as an approximate tessellation or 
coarsening of the Poincar\'{e} disk.
Our first result in this section makes this precise; in particular, 
we show that the 
infinite ringed tree and the Poincar\'{e} disk are quasi-isometric.
\begin{thm} \label{thm:isometry}
The infinite ringed tree $RT(\infty)$ and the Poincar\'{e} disk
	are quasi-isometric.
\end{thm}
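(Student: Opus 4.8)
The plan is to construct an explicit quasi-isometry $f: RT(\infty) \to D$ from the infinite ringed tree to the Poincar\'{e} disk, and then to verify the two-sided distance bound directly from the metrics. First I would fix a natural coordinate system on each side. On the ringed-tree side, a vertex at level $i$ with binary label $\sigma \in \{0,1\}^i$ can be assigned the polar-type coordinate $(i, \sigma)$, and I would map it to the point of $D$ at Euclidean radius $r_i$ and angle $2\pi (\sigma + 1/2)/2^i$, where $r_i$ is chosen so that the hyperbolic distance from the origin to that point is proportional to $i$ — concretely, $r_i = \tanh(c i / 2)$ for a suitable constant $c$, using that the Poincar\'{e}-disk distance from $0$ to a point of Euclidean norm $r$ is $2\,\mathrm{arctanh}(r)$. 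This makes the ``radial'' part of the map an isometry up to scaling by construction. The origin of $D$ can be taken as the image of the root.

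Next I would estimate graph distances in $RT(\infty)$ in a form that matches the disk metric. The key combinatorial fact is that for two leaves-level vertices $u,v$ at the same level $i$ with ring distance $\Delta$ (the number of ring steps between them on level $i$), the graph distance $d_{RT}(u,v)$ is $\Theta(\log(\Delta+1))$ when $\Delta \le 2^i$ — you either walk along the ring, or climb toward the root until the two subtrees merge (which happens after about $\log_2\Delta$ levels) and come back down. More generally, for arbitrary $u$ at level $i$ and $v$ at level $j$, $d_{RT}(u,v) = \Theta(|i - m| + |j - m| + \log(\delta_m + 1))$ where $m$ is the highest level through which a near-geodesic passes and $\delta_m$ is the residual ring offset at that level; this is the ``tree-plus-small-ring-correction'' picture. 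On the disk side, the hyperbolic distance between the images behaves, by the standard estimate $d_D(x,y) \asymp \bigl| \log\frac{1-\|x\|}{1-\|y\|}\bigr| + \log\bigl(1 + \frac{\text{angular gap}}{\max(1-\|x\|,1-\|y\|)}\bigr)$, exactly like the right-hand side above with $|i-m|+|j-m|$ playing the role of the radial term and $\log(\delta_m+1)$ the role of the angular term. Matching these two asymptotic formulas, term by term, gives the $(\lambda,\epsilon)$ bounds for some absolute constants.

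Finally I would check that $f$ is coarsely surjective: every point of $D$ lies within bounded hyperbolic distance of the image of some vertex. This follows because the images of level-$i$ vertices form a $2\pi/2^i$-spaced angular net at radius $r_i$, and the hyperbolic arclength of that circle is $\Theta(2^i r_i / (1-r_i^2)) \asymp 2^i \cdot$(something growing like $e^{ci}$ scaled), so consecutive images along a level are at bounded hyperbolic distance, and consecutive levels are at bounded radial distance by the choice of $r_i$; hence the images form a coarsely dense net. Alternatively, and perhaps more cleanly, I would invoke that the map is a quasi-isometric embedding and that $D$ with the metric pushed forward is ``filled'' by the net of vertices, so the $\epsilon$-neighborhood of $f(RT(\infty))$ is all of $D$.

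The main obstacle I expect is the careful verification of the graph-distance formula in $RT(\infty)$ — in particular, pinning down the multiplicative constants in $d_{RT}(u,v) = \Theta(|i-m|+|j-m|+\log(\delta_m+1))$ and proving there are no ``shortcuts'' through lower levels that beat the climb-and-descend strategy by more than an additive constant. One has to argue that any path which dips to a level below $\min(i,j)$ and uses the longer rings there cannot help asymptotically, because ring steps at level $\ell$ move you by only $2\pi/2^\ell$ in angle while a single climb step halves the number of angular units to the target; this is a short but slightly fiddly exchange argument. Once that estimate is in hand, the rest is bookkeeping: the disk-side estimate is standard, and combining the two gives Theorem~\ref{thm:isometry} with Proposition~\ref{prop:quasi} then yielding constant hyperbolicity of $RT(\infty)$ as an immediate corollary.
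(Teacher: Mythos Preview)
Your proposal is correct and follows essentially the same strategy as the paper: define an explicit polar embedding of $RT(\infty)$ into the Poincar\'{e} disk, estimate distances on both sides, and match them, then check coarse density of the image. The paper's execution differs only in tactical details --- it uses the radius $r_k=\sqrt{1-2^{-k}}$ rather than your $\tanh(ci/2)$, and in place of your asymptotic disk-distance formula it carries out an explicit five-case computation (splitting on whether $k=0$, $k=\ell$, $m=0$, $m<2^{\ell-k}$, or $m\ge 2^{\ell-k}$) to bound $d_P(f(k,0),f(\ell,m))$, together with the canonical-geodesic description of $d_{RT}$ that gives exactly your ``climb, take $O(1)$ ring steps, descend'' picture; the exchange argument you anticipate is handled there via Lemmas~\ref{lem:ringed-dist-growth} and~\ref{lem:ringed-tree-flipping}.
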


\noindent
Thus, by Proposition~\ref{prop:quasi}, we immediately have the following result.

\begin{coro}\label{cor:ringed-tree-constant}
There exists a constant $c$ 
s.t.,
for all $k$,
	ringed tree $RT(k)$ is $c$-hyperbolic.
\end{coro}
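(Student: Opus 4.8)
The plan is to prove Theorem~\ref{thm:isometry} by constructing an explicit quasi-isometry between $RT(\infty)$ and the Poincar\'e disk, and then Corollary~\ref{cor:ringed-tree-constant} follows immediately: the Poincar\'e disk is a classical $\delta$-hyperbolic space (with some absolute constant $\delta_0$, which one can extract from its being a $\mathrm{CAT}(-1)$ space), so Proposition~\ref{prop:quasi} hands us a constant $\delta'$ depending only on $\delta_0$ and the quasi-isometry parameters $(\lambda,\epsilon)$ such that $RT(\infty)$ is $\delta'$-hyperbolic. Since the inclusion $RT(k)\hookrightarrow RT(\infty)$ is an isometric embedding on vertex sets (a geodesic between two vertices of $RT(k)$ stays within $RT(k)$ — this needs a short check, but levels below $\min$ of the two endpoints' levels never help), $\delta(RT(k))\le\delta(RT(\infty))\le\delta'$ for all $k$, giving the uniform constant $c=\delta'$.

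The bulk of the work is Theorem~\ref{thm:isometry} itself, so let me sketch that. First I would set up a coordinate system on $RT(\infty)$: a vertex at level $i$ carries an integer label in $\{0,\ldots,2^i-1\}$, and I would map it to a point of the disk at Euclidean radius $r_i = 1 - 2^{-i}$ (or better, at hyperbolic distance $\approx i\ln 2$ from the origin along the geodesic) and at angle $2\pi (\text{label})/2^i$. So the map $f$ sends level-$i$ vertices to a roughly uniform angular grid on the hyperbolic circle of radius $\Theta(i)$. Then I would verify the two quasi-isometry inequalities by comparing graph distance in $RT(\infty)$ with hyperbolic distance. The key geometric facts to exploit: (i) a ring edge at level $i$ connects angularly adjacent points at radius $\Theta(i)$, and since circumference grows like $2^i$ but the hyperbolic length of an arc of angular width $2\pi/2^i$ at hyperbolic radius $i\ln2$ is $\Theta(1)$, ring edges map to bounded-length hyperbolic segments; (ii) tree edges connect level $i$ to level $i+1$ at (nearly) the same angle, again mapping to bounded-length hyperbolic segments. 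Conversely, the shortest path in $RT(\infty)$ between two vertices is governed by "go up toward the common-ancestor level where the angular gap becomes $O(1)$ rings, traverse a few rings, come back down," which is exactly the structure of geodesics in the disk (geodesics bend toward the origin); so $d_{RT}(u,v) = \Theta(d_{\mathrm{disk}}(f(u),f(v))) + O(1)$. The density condition — that every point of the disk is within bounded hyperbolic distance of $f(RT(\infty))$ — is clear since the images of level-$i$ vertices form an $O(1)$-net of the hyperbolic circle of radius $\Theta(i)$, and these circles are $O(1)$-spaced.

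The main obstacle I anticipate is the lower bound on $d_{RT}(u,v)$ in terms of hyperbolic distance, i.e. showing the graph metric does not contract relative to the disk metric. One has to rule out "shortcuts" through lower levels and argue that to decrease the angular separation of two leaves by a factor of two you genuinely must climb one level, so that connecting two level-$i$ vertices whose labels differ by $\Delta$ costs $\Omega(\log \Delta)$ when $\Delta$ is large and $\Omega(\Delta)$ when $\Delta=O(1)$ — and this must be matched against the hyperbolic distance formula, which for two points at radius $\approx i\ln2$ separated by angle $\phi$ behaves like $2i\ln2 + 2\ln\sin(\phi/2)$ for moderate $\phi$ and like $\phi e^{i\ln2}$-ish for tiny $\phi$. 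Getting these two piecewise estimates to line up, uniformly in $i$ and in the level difference between $u$ and $v$, is the delicate calculation; the cleanest route is probably to define a "combinatorial distance" on $RT(\infty)$ by an explicit formula (levels plus $\log$ of angular gap) and prove separately that (a) it is bi-Lipschitz-plus-additive-constant to the graph metric and (b) it is bi-Lipschitz-plus-additive-constant to the pulled-back disk metric, so the two estimates never have to be confronted simultaneously.
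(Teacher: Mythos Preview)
Your proposal is correct and follows essentially the same route as the paper: construct an explicit quasi-isometry from $RT(\infty)$ to the Poincar\'e disk (the paper uses radius $\sqrt{1-2^{-k}}$ rather than your $1-2^{-k}$, but either choice works) and then invoke Proposition~\ref{prop:quasi}, with the isometric embedding $RT(k)\hookrightarrow RT(\infty)$ passing the constant down to every finite $k$. One small slip in your parenthetical: what you need is that levels \emph{deeper} than both endpoints never help (not ``below $\min$''); the paper also supplies an independent direct argument via the Rips condition giving the explicit bound $\delta_{Rips}(RT(k))\le 5$, but for the corollary as stated your route is precisely the one the paper presents.
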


\noindent
Alternatively, we also provide a direct proof of this corollary
	(Section~\ref{sec:directrthyperbolicity}) 
to show that the ringed tree $RT(k)$ is Rips $5$-hyperbolic,
	and Gromov's $40$-hyperbolic
	in terms of the four point condition.
Our direct analysis also provides important properties of ringed trees that
	are used by later analyses.


Next, we address the question of whether long-range edges added at each 
level of the ring maintains or destroys the hyperbolicity of the base graph.
Given two vertices $u$ and $v$ at some level $t$ of the ringed tree, 
	we define the ring distance between $u$ and $v$, denoted $d_R(u,v)$,
	to be the length of the shorter path connecting $u$ and $v$ purely
	through the ring edges at the level $t$.
Given any function $f$ from positive integers to positive integers, 
	let $RT(k,f)$ denote the class of graphs constructed by adding
	long-range edges on the ringed tree $RT(k)$, such that for 
	each long-range edge $(u,v)$ connecting vertices $u$ and $v$ at the
	same level, $d_R(u,v) \le f(n)$, where $n=2^{k}-1$ is the number of
	vertices in the ringed tree $RT(k)$.
Since long-range edges do not reduce distances from root to any other
	vertices, the diameter of any graph in $RT(k,f)$ is still 
	$\Theta(\log n)$.
Define $\delta(RT(k,f)) = \max_{G\in RT(k,f)} \delta(G)$.

Our second result 
(used in the proof of the first part of our next result, but explicitly 
stated here since it is also of independent interest) 
is the following.

\begin{thm}\label{thm:random-ringed-tree-delta}
$\delta(RT(k,f)) = O(\log f(n))$, for any positive function $f$ and positive 
	integer $k$, where $n=2^{k}-1$ is the number of
	vertices in the ringed tree $RT(k)$.
\end{thm}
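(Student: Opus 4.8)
The plan is to reduce to the constant hyperbolicity of the base ringed tree (Corollary~\ref{cor:ringed-tree-constant}, and in fact the Rips $5$-hyperbolicity and the structural description of $RT(k)$-geodesics produced by the direct analysis of Section~\ref{sec:directrthyperbolicity}), exploiting the key observation that every long-range edge of a graph $G \in RT(k,f)$ is \emph{short in $RT(k)$}. \textbf{Step 1 (locality of long-range edges).} I would first show: for every long-range edge $(u,v)$ of any $G\in RT(k,f)$, $d_{RT(k)}(u,v)\le c_1(\lceil\log_2 f(n)\rceil+1)$ for an absolute constant $c_1$. Here $u,v$ lie on a common level $t$ with $d_R(u,v)\le f(n)$; taking $j=\lceil\log_2 f(n)\rceil+1$, either $t\le j$, in which case routing $u\to\text{root}\to v$ costs $\le 2j+O(1)$, or $t>j$, in which case the level-$(t-j)$ ancestors $u',v'$ of $u,v$ have ring distance $\le\lceil f(n)/2^{j}\rceil+1=O(1)$ and can be joined by $O(1)$ ring edges, so again $d_{RT(k)}(u,v)\le 2j+O(1)$. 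Write $\ell:=c_1(\lceil\log_2 f(n)\rceil+1)=O(\log f(n))$. (If $f(n)=\Omega(n^{\varepsilon})$ the theorem is trivial since $\delta(G)\le D(G)/2=O(\log n)$, so assume $f(n)$ small.)

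\textbf{Step 2 (geodesic structure: the crux).} The heart of the argument is to show that for all vertices $x,y$, every $G$-geodesic $\sigma_{xy}$ and every $RT(k)$-geodesic $\tau_{xy}$ lie within Hausdorff distance $O(\ell)$ of each other in the $RT(k)$ metric (hence also in the $G$ metric, since $d_G\le d_{RT(k)}$). For this I would invoke the structural lemmas of Section~\ref{sec:directrthyperbolicity}: an $RT(k)$-geodesic climbs monotonically from $x$ to a minimum level $\ell^{*}_{RT}$ that is determined up to $O(1)$ by $x$ and $y$, traverses $O(1)$ ring edges there, and descends monotonically to $y$, and any two such geodesics are $O(1)$-Hausdorff-close. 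A $G$-geodesic may additionally use long-range edges, but these never change the tree level, so a $G$-geodesic still ``funnels'' through a single highest region; and since a long-range edge substitutes for a ring traversal by a factor at most $f(n)$ — equivalently saves at most $\log_2 f(n)$ levels of climbing — the minimum level $\ell^{*}_{G}$ of $\sigma_{xy}$ satisfies $\ell^{*}_{RT}\le\ell^{*}_{G}\le\ell^{*}_{RT}+O(\ell)$, while $\sigma_{xy}$ and $\tau_{xy}$ agree outside an $O(\ell)$-neighbourhood of this common funnel. Turning this picture into a proof — in particular controlling $G$-geodesics that are not literally monotone in level (up to an additive $O(1)$ one may assume they are) — is where essentially all the work sits.

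\textbf{Step 3 (transfer via the Rips condition).} Granting Step 2, let $\Delta$ be a geodesic triangle in $G$ with $G$-geodesic sides $\sigma_{xy},\sigma_{yz},\sigma_{zx}$, and form the auxiliary triangle on the $RT(k)$-geodesics $\tau_{xy},\tau_{yz},\tau_{zx}$; the latter is a geodesic triangle in the space $RT(k)$, hence $5$-slim there. Given $p\in\sigma_{xy}$, choose $p'\in\tau_{xy}$ with $d_{RT(k)}(p,p')=O(\ell)$ (Step 2); then $p'$ is within $RT(k)$-distance $5$ of, say, $\tau_{yz}$ at a point $q'$; and $q'$ is within $RT(k)$-distance $O(\ell)$ of $\sigma_{yz}$ at a point $q$ (Step 2 again). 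Since $d_G\le d_{RT(k)}$, the triangle inequality gives $d_G(p,q)=O(\ell)$, so every $G$-triangle is $O(\ell)$-slim; thus $\delta_{Rips}(G)=O(\log f(n))$, and by $\delta(G)\le 8\,\delta_{Rips}(G)$ we obtain $\delta(RT(k,f))=O(\log f(n))$, which is Theorem~\ref{thm:random-ringed-tree-delta}.

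\textbf{Main obstacle.} The genuinely delicate point is obtaining a \emph{linear} dependence on $\log f(n)$. Step 1 yields a $(\,O(\log f(n)),0\,)$-quasi-isometry between $RT(k)$ and $G$ (namely $d_G\le d_{RT(k)}\le\ell\cdot d_G$), and one is tempted to simply quote Proposition~\ref{prop:quasi}; but the hyperbolicity constant it produces is polynomial (quadratic, via the Morse-lemma constant) in the quasi-isometry factor, giving only $O((\log f(n))^{2})$ — too weak for the downstream application to exponentially decaying edge probabilities, where $f(n)=\mathrm{polylog}(n)$ and one needs logarithmic (not poly-loglog) hyperbolicity. So one cannot treat $RT(k)$ as a black box: Step 2 must be done by hand, using that the only extra power $G$ has over $RT(k)$ is a factor-$f(n)$ speed-up of ring traversal, which translates to exactly $\log_2 f(n)$ in climb depth and thus to an $O(\log f(n))$ — not $O(\log^{2} f(n))$ — perturbation of the thin-triangle estimate for ringed trees.
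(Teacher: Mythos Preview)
Your proposal is correct and follows essentially the same architecture as the paper: prove that $G$-geodesics and canonical $RT(k)$-geodesics are within Hausdorff distance $O(\log f(n))$ (your Step~2, the paper's Lemma~\ref{lem:loglocal}), then sandwich any $G$-triangle between canonical triangles, which are $3$-slim (Lemma~\ref{lem:canonical-geodesic-triangle-thin}), and conclude via the Rips condition exactly as in your Step~3. Your diagnosis of the ``main obstacle'' --- that the black-box quasi-isometry estimate of Proposition~\ref{prop:quasi} loses a quadratic factor and so one must exploit the special structure of ringed trees directly --- is exactly the point.

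The one place where the paper is more concrete than your sketch is the mechanism behind Step~2. You phrase it as a comparison of minimum levels $\ell^{*}_{RT}$ versus $\ell^{*}_{G}$ together with an assertion that $G$-geodesics may be assumed level-monotone up to $O(1)$; the paper does not establish (or need) that monotonicity claim. Instead its engine is Lemma~\ref{lem:limited-jumps}: if a $G$-geodesic segment between two level-$j$ vertices never climbs above level $j$, then combining the lower bound $d_G \ge d_R/f(n)$ (each step covers at most $f(n)$ ring distance) with the $RT(k)$ upper bound $d_G \le 2\log_2 d_R + 2$ forces that segment to have length $O(\log f(n))$. This single inequality is what turns the ``factor-$f(n)$ speed-up'' heuristic in your sketch into a hard bound, and the paper then runs a level-by-level induction (Lemma~\ref{lem:loglocal}) on the ring distance between the first level-$\ell$ points of $[u,v]$ and $\langle u,v\rangle$, rather than reasoning globally about the apex level. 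If you were to complete your write-up, replacing the monotonicity assertion by this ``bounded-excursion'' lemma would make Step~2 go through cleanly.
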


\noindent
This result 
indicates that if the long-range edges added do not
	span far-away vertices, then the graph should have good hyperbolicity.
In particular, if we take $f(n)=\log n$, then the theorem implies that
	the class $RT(k,f)$ is logarithmically hyperbolic.
The theorem covers all (deterministic) graphs in the class $RT(k,f)$.
We can extend it to random graphs, such that if we can show that 
	with high probability the random graph is in the class $RT(k,f)$, then
	we know that the hyperbolic $\delta$ of the random graph is
	$O(\log f(n))$ with high probability.
The first result in the next theorem is proven via this approach.

Next, we consider adding random edges between two vertices at the outermost
level, i.e., level $k-1$, such that the probability connecting two vertices 
	$u$ and $v$ is
	determined by a function $g(u,v)$.
Let $V_{k-1}$ denote the set of vertices at level $k-1$, i.e., the
	leaves of the original binary tree.
Given a real-valued positive function $g(u,v)$, 
	let $RRT(k,g)$ denote a random graph constructed as follows.
We start with the ringed tree $RT(k)$, and then for each 
	vertex $v\in V_{k-1}$, we add one long-range edge to a vertex 
	$u$ with
	probability proportional to $g(u,v)$, that is,
	with probability $g(u,v)\rho_v^{-1}$ where 
	$\rho_v = \sum_{u\in V_{k-1}} g(u,v)$.

We study three families of functions $g$, each of which has the characteristic
	that vertices closer to one another (by some measure) are more likely
	to be connected by a long-range edge.
The first two families use the ring distance $d_R(u,v)$ as the closeness
	measure.
In particular, the first family uses an exponential decay function
	$g_1(u,v) = e^{-\alpha d_R(u,v)}$.
The second family uses a power-law decay function
	$g_2(u,v) = d_R(u,v)^{-\alpha}$, where $\alpha > 0$.
The third family uses the height of the lowest common ancestor
	of $u$ and $v$, denoted as $h(u,v)$, as the closeness measure,
	and the function is $g_3=2^{-\alpha h(u,v)}$.
Note that this last probability function matches the function used by 
Kleinberg in a small-world model based on the tree structure~\cite{Kle01}.
Moreover, although $g_3$ and $g_2$ are similar, 
	in the ringed tree they are not the same, 
	since for two leaf nodes $u$ and $v$, $d_R(u,v)$ may not be the same as $2^{h(u,v)}$.
For example, let $u$ be the rightmost leaf of the left subtree of the root 
	(i.e. $u$ is represented as the string $01\ldots 1$) and $v$ be the leftmost leaf
	of the right subtree of the root (i.e. $v$ is represented as the string $10\ldots 0$),
	then $d_R(u,v)=1$ while $h(u,v)= \Theta(\log n)$.
The following theorem summarizes the hyperbolicity behavior of these three 
families of random ringed trees.

\begin{thm} \label{thm:randomRT}
Considering the follow families of functions (with $u$ and $v$
	as the variables of the function) for random ringed trees
	$RRT(k,g)$, for any positive integer $k$ and
	positive real number $\alpha$, 
	with probability $1-o(1)$ (when $n$ tends to infinity), 
	we have
\begin{enumerate}
\item
$\delta(RRT(k,e^{-\alpha d_R(u,v)}))=O(\log \log n)$;
\item
$\delta(RRT(k,d_R(u,v)^{-\alpha}))=\Theta(\log n)$;
\item
$\delta(RRT(k,2^{-\alpha h(u,v)}))=\Theta(\log n)$;
\end{enumerate}
where $n=2^{k}-1$ is the number of
	vertices in the ringed tree $RT(k)$.
\end{thm}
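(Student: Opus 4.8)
The plan is to handle the three families separately, using the structural facts about ringed trees established in the direct hyperbolicity analysis (in particular that $RT(k)$ is constantly hyperbolic and that vertex-to-root distances are essentially level numbers). For part (1), the strategy is to show that with probability $1-o(1)$ every long-range edge $(u,v)$ added at level $k-1$ has ring distance $d_R(u,v) = O(\log n)$: since $g_1(u,v)=e^{-\alpha d_R(u,v)}$ decays exponentially, $\Pr[d_R(u,v)\ge t] = O(e^{-\alpha t})$ for each $v$, so choosing $t = c\log n$ with $c$ large enough and taking a union bound over the $\Theta(n)$ leaves makes the bad event vanish. Conditioned on this, the random graph lies in the class $RT(k,f)$ with $f(n)=O(\log n)$, and Theorem~\ref{thm:random-ringed-tree-delta} immediately gives $\delta = O(\log\log n)$. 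This part should be routine.

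For the lower bounds in parts (2) and (3) I would exhibit, with probability $1-o(1)$, four leaf vertices $A,B,C,D$ spaced roughly evenly around the outermost ring (at ring-distance $\Theta(n)$, $\Theta(n/2)$ apart) and argue that the pairwise distances among them in the augmented graph are still $\Theta(\log n)$ with the ``tree-like'' pattern that forces a large $\delta(A,B,C,D)$ — concretely, that $d(A,C)\approx d(B,D)\approx d(A,B)\approx\cdots$ are all $\Theta(\log n)$ while any genuine shortcut between an ``opposite'' pair is blocked. The key point is that a single random long-range edge at level $k-1$ only moves you a bounded number of levels ``up'' the tree toward a common ancestor, so to connect two far-apart leaves one still needs $\Omega(\log n)$ steps; and with a power-law (resp. $g_3$) link distribution, the probability that any leaf near one of the four anchor points gets a long-range edge that reaches near an ``opposite'' anchor is $o(1)$ — because under $g_2$ the total weight $\rho_v = \sum_u d_R(u,v)^{-\alpha}$ is dominated by nearby vertices, so $\Pr[\text{jump of length}\ge s] = O(s^{-(\alpha-1)})$ or $O(\log s / s)$ depending on whether $\alpha>1$, and a union bound over a neighborhood of the anchors of size $\mathrm{polylog}(n)$ still gives $o(1)$; and under $g_3$ the analogous computation on the lowest-common-ancestor height gives a similar bound. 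The matching upper bounds $\delta = O(\log n)$ for (2) and (3) are free, since $\delta(G)\le D(G)/2 = O(\log n)$ for every graph in $RRT(k,g)$ (long-range edges never shorten root-to-vertex distances, so the diameter stays $\Theta(\log n)$).

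The main obstacle I anticipate is the lower-bound argument for parts (2) and (3): unlike the small-world case on a pure grid, here the base graph already contains the ring and tree edges, so I cannot simply say ``grid distance is preserved.'' I need a robust way to certify that, after adding the random long-range edges, the four anchor leaves still realize their tree/ring distances up to an additive $O(1)$ (or at least that any shortcut saves only $o(\log n)$). The natural approach is a \emph{levelwise} argument: show that from a leaf, even using long-range edges, you can decrease your level by at most a constant per edge with high probability over the relevant neighborhood, hence reaching a vertex that is $\Theta(\log n)$ levels higher (a near-common-ancestor of two anchors) costs $\Omega(\log n)$ edges. Making this precise — identifying exactly which ``bad'' long-range edges would allow a shortcut between opposite anchors, bounding their total probability by $o(1)$ via the tail estimates above, and then verifying that in the absence of such edges the four-point $\delta$ is indeed $\Omega(\log n)$ — is where the real work lies; the exponential/power-law/LCA tail computations themselves are straightforward once the right ``bad event'' is isolated.
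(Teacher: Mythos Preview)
Your plan for part~(1) matches the paper's proof exactly: bound the tail of the exponential so that with probability $1-o(1)$ every long-range edge has ring distance $O(\log n)$, then invoke Theorem~\ref{thm:random-ringed-tree-delta} with $f(n)=O(\log n)$.

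Your plan for the lower bounds in parts~(2) and~(3), however, is backwards and would not work. You propose to fix four leaves $A,B,C,D$ evenly spaced on the outer ring and argue that their pairwise distances are \emph{preserved} after adding the random edges, hoping this forces $\delta(A,B,C,D)=\Omega(\log n)$. But the base graph $RT(k)$ is constantly hyperbolic (Corollary~\ref{cor:ringed-tree-constant}), so any four points in it---in particular your four anchors---already have $\delta(A,B,C,D)=O(1)$. Concretely, for leaves at ring positions $0,\;n/8,\;n/4,\;3n/8$ one computes $d(A,C)+d(B,D)\approx d(A,B)+d(C,D)\approx d(A,D)+d(B,C)$ up to an additive constant, because all these canonical geodesics climb to nearly the same level. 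Preserving those distances therefore preserves a \emph{constant} $\delta$, not a logarithmic one. Your ``levelwise'' intuition that long-range edges move you up only a bounded number of levels is also off: the added edges are all at level $k-1$ and move you horizontally, not up.

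The paper's argument is the opposite of yours: it shows that with probability $1-o(1)$ there \emph{exists} a long-range edge $(u,v)$ with $d_R(u,v)=\Omega(n^{c})$ for some constant $c>0$ (this is where the power-law and $h(u,v)$ tail computations are used---to lower-bound, not upper-bound, the chance of a long jump). That single edge becomes a geodesic of length~$1$, and together with the two canonical tree paths $[u,w]$, $[v,w]$ to the top $w$ of $\langle u,v\rangle$ it forms a geodesic triangle whose midpoint on $[u,w]$ sits $\Theta(\log n)$ away from the other two sides (Lemma~\ref{lem:large-jumping}). So the non-hyperbolicity is \emph{caused} by a long random edge, not witnessed in its absence. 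For part~(3) there is one extra step: a pair with $h(u,v)=\Theta(\log n)$ need not have large ring distance, so the paper first shows such a pair exists w.h.p.\ and then argues that, conditioned on $h(u,v)\ge c\log_2 n$, with probability $1-o(1)$ one also has $d_R(u,v)\ge n^{c/2}$, reducing to the same lemma.
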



\noindent
This theorem states that,
when the random long-range edges are selected
	using exponential decay function based on the ring distance
	measure, the resulting graph
	is logarithmically hyperbolic, i.e., the constant hyperbolicity of
the original base graph is degraded only slightly;
but when a power-law decay function
	based on the ring distance measure or an exponential decay
	function based on common ancestor measure
	is used, then hyperbolicity is destroyed and the resulting graph is 
not hyperbolic.
One may notice that the function form in (1) and (3) above is similar
	but the result is different.
This is because with height $h(u,v)$ the subtree covers actually 
	$\Theta(2^{h(u,v)})$ leaves, and thus (3) is naturally closer to
	the power-law function of (2).
Intuitively, when it is more likely for a long-range edge 
	to connect two far-away vertices, such an edge 
	creates a shortcut for many internal tree nodes so that many shortest
	paths will go through this shortcut instead of traversing through
	tree nodes. (In Internet routing this is referred to as
	{\em valley routes}).

Finally, as a comparison, we also study the hyperbolicity of
	random binary trees $RBT(k,g)$, which are the same as
	random ringed trees $RRT(k,g)$ except that we remove all ring
	edges.

\begin{thm} \label{thm:randomRBT}
Considering the follow families of functions (with $u$ and $v$
	as the variables of the function) for random 
	binary trees $RBT(k,g)$, for any positive integer $k$ and
	positive real number $\alpha$, 
	with probability $1-o(1)$ (when $n$ tends to infinity), 
	we have
\[
\delta(RBT(k,e^{-\alpha d_R(u,v)}))=\delta(RBT(k,d_R(u,v)^{-\alpha}))=
	\delta(RBT(k,2^{-\alpha h(u,v)}))=\Theta(\log n),
\]
where $n=2^{k}-1$ is the number of
	vertices in the binary tree $RBT(k,g)$.
\end{thm}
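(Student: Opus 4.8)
The proof splits into an easy upper bound and a more substantial lower bound, and the lower bound is handled uniformly across the three families once the right random event is in place.

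\medskip
\noindent\textbf{Upper bound.} Every long-range edge of an $RBT(k,g)$ graph joins two leaves (level-$(k-1)$ vertices), so no long-range edge can lie on a shortest path from the root to any vertex; hence $d_G(\text{root},v)$ equals the tree-depth of $v$ for all $v$, the diameter stays $\Theta(\log n)$, and $\delta(G)\le D(G)/2=O(\log n)$. This holds for every graph and every $g$, so it remains to show $\delta=\Omega(\log n)$ with probability $1-o(1)$.

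\medskip
\noindent\textbf{A long chord exists with high probability.} The plan is to exhibit, with probability $1-o(1)$, one long-range edge that acts as a long chord of the binary tree. Fix a scale $h=\lfloor c\log n\rfloor$ with a small constant $c>0$ to be chosen depending on the family and on $\alpha$. For each of the $2^{k-1-h}$ subtrees $T$ of height $h$, let $u_T$ be the rightmost leaf of $T$'s left half and $v_T$ the leftmost leaf of $T$'s right half; these leaves are ring-adjacent, so $d_R(u_T,v_T)=1$, yet their lowest common ancestor is $T$'s root, at height $h$, so $h(u_T,v_T)=h$. The events $E_T=\{u_T\text{'s long-range edge points to }v_T\}$ for distinct $T$ depend on the independent long-range choices of distinct leaves, hence are independent, and $\Pr[E_T]=g(u_T,v_T)/\rho_{u_T}$. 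One computes $\Pr[E_T]=e^{-\alpha}/\Theta(1)=\Theta(1)$ for $g_1$; $\Pr[E_T]=1/\rho$ with $\rho=\Theta(1),\Theta(\log n),\Theta(n^{1-\alpha})$ for $g_2$ according as $\alpha>1,=1,<1$; and $\Pr[E_T]=2^{-\alpha h}/\rho$ with the same three behaviours of $\rho$ for $g_3$. In every case $c$ can be fixed (any $c<1$ for $g_1$; $c$ of order $1/(1+\alpha)$ or $\alpha/(1+\alpha)$ for $g_3$ and for $g_2$ with $\alpha<1$) so that $2^{k-1-h}\cdot\Pr[E_T]\to\infty$, whence $\Pr[\bigcup_T E_T]=1-o(1)$. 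Fix a graph $G$ on this event and one edge $(u,v)=(u_T,v_T)$ witnessing it, with $a$ the common ancestor of $u,v$ at height $h$.

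\medskip
\noindent\textbf{From the chord to $\delta$.} Consider the cycle $Z$ made of the tree path $u\rightsquigarrow a\rightsquigarrow v$ (length $2h$) together with the long-range edge $(u,v)$; it has $2h+1$ vertices. I claim $Z$ is isometrically embedded in $G$, i.e.\ $d_G(x,y)=d_Z(x,y)$ for $x,y\in Z$. The engine is a height-budget lemma: in any $RBT$-graph, a path between two vertices $X,Y$ that uses at least one long-range edge has length at least $\mathrm{height}_T(X)+1+\mathrm{height}_T(Y)$ (leaves having height $0$), because leaving $X$ one needs $\ge\mathrm{height}_T(X)$ edges to reach any leaf, symmetrically for $Y$, and the sub-path between the first and last leaf visited contains all the long-range edges of the path, hence at least one edge. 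Comparing this bound with the tree distances and with the single chord shows that, for any two vertices of $Z$, some geodesic lies entirely on the tree path or uses only the chord, which is precisely $d_Z$. Since $Z$ is an isometric cycle of length $2h+1=\Theta(\log n)$ and $\delta(\cdot)$ of a four-point subset depends only on its six pairwise distances, $\delta(G)\ge\delta(C_{2h+1})=\Omega(\log n)$; concretely, the ancestors of $u$ at heights $\lfloor h/4\rfloor,\lfloor 3h/4\rfloor$ and of $v$ at heights $\lfloor 3h/4\rfloor,\lfloor h/4\rfloor$ have three two-pair sums equal to $h$, $2h$, $h+1$, giving $\delta(G)\ge(h-1)/2$. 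Together with the upper bound and $D(G)=\Theta(\log n)$ this yields $\delta(RBT(k,g))=\Theta(\log n)$ for all three families.

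\medskip
\noindent\textbf{Main obstacle.} The crux is the height-budget/isometric-embedding step: one must rule out that the (possibly many, possibly long-ranging) other long-range edges shortcut the vertices of $Z$ — in particular that they neither pull the two ``diagonal'' distances below $\approx h$ nor the ``side'' distances below $h/2$, while the chord $(u,v)$ is allowed to do exactly its intended job of bringing the two ``cross'' vertices within $\approx h/2$. This works cleanly because every long-range edge is anchored at leaves, which are far in tree distance from the high internal nodes of $Z$; the only additional care is the elementary, case-dependent calibration of the scale $h=\Theta(\log n)$ for $g_2$ and $g_3$ with small $\alpha$, so that the seam edge still occurs with probability $1-o(1)$.
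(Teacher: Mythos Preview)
Your proof is correct and essentially complete; the height-budget lemma is exactly the right tool, and the isometric-cycle argument goes through cleanly.

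Your route differs from the paper's in two respects worth noting. First, the paper does not treat the three families uniformly: for $g_1$ it uses seam pairs between \emph{adjacent} height-$h/2$ subtrees (yielding $\Theta(\sqrt{n})$ independent constant-probability events), while for $g_2$ and $g_3$ it recycles the probabilistic analysis from the ringed-tree theorem to produce an edge with $d_R(u,v)=\Omega(n^c)$, hence $h(u,v)=\Omega(\log n)$. Your single ``seam-inside-a-subtree'' construction with a calibrated height $h=\lfloor c\log n\rfloor$ handles all three cases at once, which is cleaner and more self-contained. Second, the paper extracts the $\Omega(\log n)$ lower bound via the Rips condition (showing the triangle $\Delta(u,v,w)$, with $w$ the lowest common ancestor, is not $(h/2-1)$-slim because the midpoint of $[u,w]$ is far from the other two sides), whereas you embed an isometric $(2h{+}1)$-cycle and invoke the four-point condition directly. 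The underlying metric fact---that any path through a long-range edge must pay the sum of the endpoints' heights---is the same in both arguments; you make it explicit as a lemma, the paper leaves it as a one-line observation inside the Rips computation. Your approach buys a unified, reusable statement; the paper's buys brevity by leaning on earlier lemmas.
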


\noindent
Thus, in this case, the original hyperbolicity of the base graph 
($\delta=0$ for the binary tree) is destroyed.
Comparing with Theorem~\ref{thm:randomRT}, 
	our results above suggest that the ``softening'' of the hyperbolicity 
provided by the rings is essential in maintaining
	good hyperbolicity: with rings, random ringed trees with
	exponential decay function (depending on the ringed distance) are logarithmically hyperbolic, but
	without the rings 
	the resulting graphs are not hyperbolic.

\subsection{Outline of the analysis} 
\label{sec:rtoutline}

In this subsection, we provide a summary of the proof of the four theorems 
in Section~\ref{sec:rtresult}.
For Theorem~\ref{thm:isometry}, we provide an embedding of the ringed tree
	to the Poincar\'{e} disk, intuitively similar to the picture we show
	in Figure~\ref{fig:disk}(d), and prove that it is a quasi-isometry.

For the analysis of $\delta$-hyperbolicity, we apply the Rips condition,
	which is equivalent to the Gromov's four point condition
	up to a constant factor. 
For any two vertices $u$ and $v$ on the ringed tree $RT(k)$, we define
	the canonical geodesic $\langle u, v \rangle$ to be the geodesic from
	$u$ to $v$ such that the geodesic always goes up first, then follows ring
	edges, and then goes down (any of these segments may be omitted).
We show that the canonical geodesic $\langle u, v \rangle$ and any other
	geodesic $[u,v]$ are within distance $1$ of each other, and 
	any triangle $\Delta(u,v,w)$ formed by three canonical geodesics
	$\langle u, v \rangle$, $\langle u, w \rangle$, and $\langle v, w \rangle$
	(called {\em canonical triangle}) are $3$-slim.
This immediately implies that any geodesic triangles in $RT(k)$ is
	$5$-slim, which is a direct proof that ringed trees are constantly
	hyperbolic.

For Theorem~\ref{thm:random-ringed-tree-delta}, we inductively prove that
	any geodesic $[u,v]$ in $RT(k,f)$ is within $O(\log f(n))$ distance
	from the canonical geodesic $\langle u,v\rangle$, and vice versa.
Together with the result that any canonical triangle is $3$-slim, it follows
	know that any geodesic triangle is $O(\log f(n))$-slim.
For Theorem~\ref{thm:randomRT}, Part~(1), we show that with high probability
	the long-range edges only connect vertices within ring distance
	$O(\log n)$, and then we can apply 
	Theorem~\ref{thm:random-ringed-tree-delta} to achieve the
	$O(\log\log n)$ bound.
For Theorem~\ref{thm:randomRT}, Part~(2), the key is to show that 
	(a) with high probability some long-range link connects two vertices
	at ring distance $\Theta(n^c)$ for some constant $c$; and 
	(b) if such a long-range edge $(u,v)$ exists, then we consider the
	geodesic triangle $\Delta(u,v,r)$ where $r$ is a point with lowest layer number on the canonical geodesic between $u,v$, and show that
	the middle point of $[r,u]$ is $\Theta(\log n)$ away from the union
	of $[r,v]$ and $[v,u]$.
For Theorem~\ref{thm:randomRT}, Part~(3), we first show that with high probability some pair of vertices $u,v$ with $h(u,v)\geq c \log_2 n$ for some constant $c>0$, and then we observe that, in such configuration, the ring distance $d_R(u,v)$ has high probability to be $\Omega(n^{c/2})$, and results follows exactly the same analysis in the previous part.

For Theorem~\ref{thm:randomRBT}, part (2) and (3) follow a similar
	strategy as those of Theorem~\ref{thm:randomRT}. 
For part (1), we know that two ``would-be'' ring neighbors $u$ and $v$
	have constant probability of having a long-range connection.
However, since we do not have ring edges, the alternative path between $u$
	and $v$ through the tree may be $\Theta(\log n)$ in length.
We show that there are at least $\Omega(\sqrt{n})$ such pairs, so with
	high probability at least
	one pair is connected, generating a bad $\delta$ of $\Omega(\log n)$.


\subsection{Detailed analysis on ringed trees} 
\label{sec:rtproof}

\subsubsection{Properties of ringed tree}

We start by some properties of ringed tree, which will be repeatedly used in the following analysis on ringed tree related graphs and which may be of independent interest.

We define the \emph{ring distance} $d_R(u,v)$ of $u$ and $v$ 
	on the same level to be their distance on the ring. 
Ringed trees have the following fundamental property.

\begin{lem}\label{lem:ringed-dist-growth}
Let $u$ and $v$ be two vertices on the same level,  
	and $u'$ and $v'$ be their parents respectively.
We have $d_R(u',v') \leq (d_R(u,v)+1)/2$.
\end{lem}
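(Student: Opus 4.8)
The plan is to work with the natural identification of the level-$i$ vertices of $RT(k)$ with the cyclic group $\mathbb{Z}/2^i\mathbb{Z}$: in the notation of the definition, the vertex with binary string $\sigma$ is identified with the integer it represents, the ring edges at level $i$ become exactly the edges of the cycle $C_{2^i}$ on $\{0,1,\dots,2^i-1\}$, and $d_R$ restricted to level $i$ is precisely the circular graph distance in $C_{2^i}$. Under this identification the map sending a level-$i$ vertex to its level-$(i-1)$ parent is $x \mapsto \lfloor x/2\rfloor \bmod 2^{i-1}$, since $2m$ and $2m+1$ are siblings with common parent $m$. So the statement to prove becomes: if $a,b$ are level-$i$ vertices at circular distance $d = d_R(u,v)$ in $C_{2^i}$, then $\lfloor a/2\rfloor$ and $\lfloor b/2\rfloor$ are at circular distance at most $(d+1)/2$ in $C_{2^{i-1}}$.

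The key step is the elementary arithmetic fact that for all integers $a,b$ one has $\lvert \lfloor a/2\rfloor - \lfloor b/2\rfloor\rvert \le \lceil \lvert a-b\rvert/2\rceil$; this follows from $\lfloor x/2\rfloor \in \{(x-1)/2,\ x/2\}$, which gives $\lvert \lfloor a/2\rfloor - \lfloor b/2\rfloor - (a-b)/2\rvert \le 1/2$, and since $\lfloor a/2\rfloor - \lfloor b/2\rfloor$ is an integer this forces $\lvert \lfloor a/2\rfloor - \lfloor b/2\rfloor\rvert \le \lfloor \lvert a-b\rvert/2 + 1/2\rfloor = \lceil \lvert a-b\rvert/2\rceil$. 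Equivalently one can argue combinatorially: contracting each sibling pair $\{2m,2m+1\}$ collapses $C_{2^i}$ onto $C_{2^{i-1}}$, an edge $\{j,j+1\}$ of $C_{2^i}$ being collapsed to a single vertex when $j$ is even and mapped to an edge of $C_{2^{i-1}}$ when $j$ is odd; since a shortest path in a cycle is a monotone arc, a shortest $u$--$v$ arc of length $d$ uses edges whose left endpoints form $d$ consecutive integers, at most $\lceil d/2\rceil$ of which are odd, so its image is a walk of length at most $\lceil d/2\rceil$ from $u'$ to $v'$.

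To finish, I would choose integer representatives of $u$ and $v$ that realize the circular distance as $d = \lvert a-b\rvert$; if the short arc wraps around the ring, replace one representative by its translate by $\pm 2^i$, which translates its parent by $\pm 2^{i-1}$ and hence still represents $u'$ or $v'$ modulo $2^{i-1}$. Then the arithmetic fact gives $d_R(u',v') \le \lvert \lfloor a/2\rfloor - \lfloor b/2\rfloor\rvert \le \lceil d/2\rceil \le (d+1)/2$, using $\lceil t/2\rceil \le (t+1)/2$ for integers $t\ge 0$. The only delicate point — and the place to be careful — is this wraparound bookkeeping: one must verify that the representatives chosen for $u$ and $v$ project to representatives of $u'$ and $v'$ differing by the correct multiple of $2^{i-1}$, so that bounding $\lvert \lfloor a/2\rfloor - \lfloor b/2\rfloor\rvert$ really bounds the circular distance at level $i-1$. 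Everything else is routine, and in fact the argument yields the slightly stronger bound $d_R(u',v') \le \lceil d_R(u,v)/2\rceil$.
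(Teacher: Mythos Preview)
Your proof is correct and takes essentially the same approach as the paper: both arguments observe that the parent map $x\mapsto\lfloor x/2\rfloor$ sends a length-$d$ arc of consecutive ring vertices to an arc of at most $\lceil d/2\rceil+1\le (d+1)/2+1$ consecutive parents, hence of ring distance at most $(d+1)/2$. The paper phrases this as a one-line vertex count, whereas you spell out the arithmetic with floor functions and are more explicit about the wraparound bookkeeping; the underlying idea is identical.
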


\begin{proof}
On the ring, there are $d_R(u,v)+1$ vertices on segment between $u$ and $v$, 
	which belong to at most $(d_R(u,v)+1)/2+1$ parents, which
	correspond to at most $(d_R(u,v)+1)/2+1$ vertices on the ring segment 
	between $u'$ and $v'$. This concludes the proof.
\end{proof}

For a geodesic $[u,v]$ on the ringed tree $RT(k)$, 
	we call its \emph{level sequence} the sequence of levels it passes by
	from $u$ to $v$. 
Lemma \ref{lem:ringed-dist-growth} implies that the level sequence 
	of any geodesic must be reversed unimodal: it first decreases, 
	and then increases (but the increasing or decreasing segment
	may be omitted).
The following lemma further characterizes geodesics in ringed-trees.

\begin{lem}\label{lem:ringed-tree-flipping}
Let $u,v$ be two vertices, $u$ be on level $\ell$, and $u'$ be
	the parent of $u$ at level $\ell-1$. 
Suppose $[u,v]$ intersects level $\ell-1$, and let $t$ be the intersection closest to $u$. Then $d(u,t) \leq 2$, the segment $[u,t]$ of $[u,v]$ and 
	$\{ u,u'\}$ are within distance $1$ to each other.
\end{lem}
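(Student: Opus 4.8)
The plan is to combine the structural fact that geodesic level sequences in $RT(k)$ are reversed unimodal (a consequence of Lemma~\ref{lem:ringed-dist-growth}) with a single shortcut argument based again on Lemma~\ref{lem:ringed-dist-growth}. First I would pin down the exact shape of the initial segment $[u,t]$ of $[u,v]$. Since $[u,v]$ meets level $\ell-1$, its level sequence starts at $\ell$, is weakly decreasing until it attains its minimum, and then weakly increasing; because consecutive vertices on a path differ by at most one level and $t$ is the \emph{first} vertex of $[u,v]$ on level $\ell-1$, every vertex of $[u,v]$ strictly before $t$ must sit on level exactly $\ell$. Hence $[u,t]$ consists of a path of ring edges on level $\ell$ from $u$ to some vertex $w$, followed by one tree edge from $w$ up to its parent $t$. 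Being a sub-geodesic, the ring portion is a shortest ring path, so it has length $d_R(u,w)$ and $d(u,t)=d_R(u,w)+1$.

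\textbf{Bounding $d(u,t)$.} Next I would show $d_R(u,w)\le 1$. Suppose instead $d_R(u,w)\ge 2$. Consider the alternative $u$--$t$ path that takes the single edge $u\to u'$ and then a shortest ring path on level $\ell-1$ from $u'$ to $t$; here $u'$ is the parent of $u$ and $t$ is the parent of $w$. By Lemma~\ref{lem:ringed-dist-growth}, $d_R(u',t)\le\lfloor (d_R(u,w)+1)/2\rfloor$, so this path has length $1+d_R(u',t)\le 1+\lfloor(d_R(u,w)+1)/2\rfloor$, which for $d_R(u,w)\ge 2$ is at most $d_R(u,w)$, strictly less than $d_R(u,w)+1=d(u,t)$ --- contradicting that $[u,t]$ is a geodesic. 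Therefore $d_R(u,w)\le 1$ and $d(u,t)=d_R(u,w)+1\le 2$.

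\textbf{Proximity to $\{u,u'\}$.} Since $u\in[u,t]$, the vertex $u'$ is within distance $1$ of $[u,t]$ via the edge $u'u$, and $u$ itself lies on $[u,t]$; this gives one direction. For the other direction, every vertex of $[u,t]$ is one of $u$, $w$ (present only when $d_R(u,w)=1$), and $t$: the vertex $u$ lies in $\{u,u'\}$; the vertex $w$, a ring-neighbour of $u$, is within distance $1$ of $u$; and $t$, the parent of $w$, satisfies $d(t,u')\le d_R(u',t)\le\lfloor(d_R(u,w)+1)/2\rfloor\le 1$, again by Lemma~\ref{lem:ringed-dist-growth}. Thus $[u,t]$ and $\{u,u'\}$ are within distance $1$ of each other. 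The degenerate cases ($w=u$, or $\ell-1=0$ so that $t=u'$ is the root and the level-$(\ell-1)$ ``ring'' is a single point) only make the bounds easier.

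\textbf{Main obstacle.} The delicate point is the first step: rigorously extracting the shape of $[u,t]$ from reversed unimodality. One must use carefully both that $t$ is the \emph{first} hit of level $\ell-1$ and that no excursion to a level deeper than $\ell$ can occur before that first hit --- both follow from unimodality together with the one-level-per-edge property, but each needs an explicit sentence. Once the shape of $[u,t]$ is fixed, the rest is short arithmetic with ring distances using Lemma~\ref{lem:ringed-dist-growth}.
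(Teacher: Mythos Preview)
Your proposal is correct and follows essentially the same approach as the paper's proof. Both arguments first use reversed unimodality of the level sequence to conclude that the initial segment $[u,t]$ consists of ring edges on level $\ell$ followed by a single tree edge up to $t$ (you call the last level-$\ell$ vertex $w$; the paper calls it $t'$), and then both compare $d(u,t)=d_R(u,w)+1$ against the alternative route $u\to u'\to t$ of length $1+d_R(u',t)\le 1+(d_R(u,w)+1)/2$ via Lemma~\ref{lem:ringed-dist-growth}; you phrase this as a contradiction while the paper combines the two inequalities directly, but the content is identical.
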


\begin{proof}
Let $t'$ be the node just before $t$ to $u$ on $[t,u]$. $d(t,u) \leq 1 + d_R(t,u') \leq 1 + (d_R(t',u)+1)/2$ by Lemma \ref{lem:ringed-dist-growth}. As $t$ is the closest node on level $\ell-1$ on the geodesic to $u$, $d(t,u) \geq 1 + d_R(t',u)$. 
We get $d(t',u) \le 1$ and $d(t,u) \leq 2$ by combining these two inequalities. 
The segment $[u,t]$ of $[u,v]$ and $\{ u,u'\}$ are within distance $1$ to 
	each other since $d(t,u') \leq 1$.
\end{proof}




For two vertices $u$ and $v$, we define the {\em canonical geodesic} $\langle u,v \rangle$ in a recursive fashion.
\begin{enumerate}
\item For $u,v$ on the same level and $d_R(u,v) \leq 3$, $\langle u,v \rangle$ is the path on the ring from $u$ to $v$.
\item For $u,v$ on the same level but $d_R(u,v) > 3$, let $u', v'$ be parents of $u,v$ respectively, then $\langle u,v \rangle = 
	[u,u'] \cup \langle u', v' \rangle \cup [v',v]$.
\item 
For $u,v$ on different levels, supposing $u$ on upper level, let $v'$ be parent of $v$, then $\langle u,v \rangle = \langle u, v' \rangle \cup 
	[v',v]$.
\end{enumerate}
This is a well-founded definition. At each level of recursion, either the difference of levels of nodes decreases, or in the case of nodes on the same level, ring distance decreases by Lemma \ref{lem:ringed-dist-growth}, until we reach the base case, where $d_R(u,v) \leq 3$.

We prove now that canonical geodesics are really geodesics.

\begin{lem}\label{lem:canonical-geodesics-well-defined}
For any $u,v$, $\langle u,v \rangle$ is a geodesic between $u,v$.
\end{lem}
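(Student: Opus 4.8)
The plan is to prove that $\langle u, v\rangle$ is a geodesic by induction on the recursive depth of its definition, which is exactly the number of recursion steps in the construction (equivalently, it decreases because either the level gap decreases in case (3), or the ring distance shrinks in case (2) by Lemma~\ref{lem:ringed-dist-growth}, until we hit the base case (1)). The induction hypothesis is that $\langle u', v'\rangle$ is a geodesic for every pair $(u',v')$ appearing strictly deeper in the recursion. The base case is immediate: when $u,v$ lie on the same level with $d_R(u,v)\le 3$, the ring path of length $d_R(u,v)$ clearly realizes the graph distance (no shorter path can exist, since any path of length $<d_R(u,v)$ would have to leave the level and come back, which by Lemma~\ref{lem:ringed-dist-growth} costs at least as much — one step up, at least $\lceil (d_R(u,v)-1)/2\rceil$ on the parent ring after the step up is already nonzero for $d_R\ge 2$, and one step down; a quick case check on $d_R\in\{1,2,3\}$ handles it).

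For the inductive step I would treat the two recursive cases. In case (3), $u$ is on an upper level and $\langle u,v\rangle = \langle u,v'\rangle \cup [v',v]$ where $v'$ is the parent of $v$. Here I claim $d(u,v) = d(u,v') + 1$. The inequality $d(u,v)\le d(u,v')+1$ is trivial, and $d(u,v')+1 = |\langle u,v'\rangle| + 1 = |\langle u,v\rangle|$ by the induction hypothesis, so it suffices to show $d(u,v)\ge d(u,v')+1$, i.e. that no geodesic from $u$ to $v$ can be shorter. This follows from Lemma~\ref{lem:ringed-tree-flipping} (or directly from the reversed-unimodality of level sequences implied by Lemma~\ref{lem:ringed-dist-growth}): since $u$ is strictly above $v$'s level, any geodesic $[u,v]$ must pass through $v$'s parent level, and in fact its last step into level$(v)$ lands at $v$ or a ring-neighbor of $v$; one shows the portion up to the last visit to level$(v)-1$ already has length $\ge d(u,v')$, so the whole path has length $\ge d(u,v')+1$. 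Case (2) is the symmetric ``both endpoints descend'' situation: $\langle u,v\rangle = [u,u']\cup\langle u',v'\rangle\cup[v',v]$ with $d_R(u,v)>3$, and I claim $d(u,v) = d(u',v') + 2$. The upper bound is clear; for the lower bound, take any geodesic $[u,v]$, note its level sequence is reversed-unimodal and, since $d_R(u,v)>3$ forces the geodesic to leave the current level (a within-level path has length $d_R(u,v)>3$, whereas going up two levels and using Lemma~\ref{lem:ringed-dist-growth} twice gives a shorter bound), it must visit level $\ell-1$; pick the first and last such visits $t_1$ (closest to $u$) and $t_2$ (closest to $v$), apply Lemma~\ref{lem:ringed-tree-flipping} to get $d(u,t_1)\ge 1$ and $d(v,t_2)\ge 1$ with $t_1,t_2$ within distance $1$ of $u',v'$ respectively, and conclude $d(u,v)\ge 1 + d(u',v') + 1$ after absorbing the within-distance-$1$ slack — care is needed here, so I would instead argue more cleanly that $d(u,v)\ge 2 + d_R(u',v')$ when $d_R(u,v)>3$ directly from Lemma~\ref{lem:ringed-dist-growth} applied along the level sequence, and that $d_R(u',v') = d(u',v')$ for the relevant range (or just appeal to the induction hypothesis $|\langle u',v'\rangle| = d(u',v')$ and the fact that the canonical geodesic for $(u',v')$ has the same first-step structure).

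The main obstacle I anticipate is the lower-bound direction in case (2): showing cleanly that a shortest $u$–$v$ path must ``pay'' a full step up at each end before it can exploit the shorter parent-level ring, without off-by-one losses from the slack in Lemma~\ref{lem:ringed-tree-flipping}. The cleanest route is probably to avoid Lemma~\ref{lem:ringed-tree-flipping} for the exact count and instead prove a sharp statement: for $u,v$ on level $\ell$ with parents $u',v'$, $d(u,v) = \min\{\, d_R(u,v),\ 2 + d(u',v')\,\}$, by induction, using only Lemma~\ref{lem:ringed-dist-growth} and the reversed-unimodality of geodesic level sequences. Once that recurrence is established, $\langle u,v\rangle$ is seen to realize exactly the right-hand side in every case, and the lemma follows.
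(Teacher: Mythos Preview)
Your proposal is correct and follows essentially the same inductive strategy as the paper: recurse on the definition of $\langle u,v\rangle$ and use Lemma~\ref{lem:ringed-tree-flipping} to control what happens when a geodesic first leaves the current level. The one place where you hedge unnecessarily is the off-by-one worry in case~(2): Lemma~\ref{lem:ringed-tree-flipping} gives not merely $d(u,t)\ge 1$ but $d(u,t)\le 2$, so when $t\ne u'$ you have $d(u,t)=2$ exactly, and the replacement path $u\to u'\to t$ has the \emph{same} length as $[u,t]$; this makes the substitution lossless and the lower bound $d(u,v)\ge 2+d(u',v')$ drops out without any slack to absorb, so your detour through the recurrence $d(u,v)=\min\{d_R(u,v),\,2+d(u',v')\}$ is not needed.
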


\begin{proof}
For $u,v$ on the same level and $d_R(u,v) \leq 3$, we can check that $\langle u,v \rangle$ is a geodesic between $u,v$.

For $u,v$ on the same level but $d_R(u,v) > 3$, let $u', v'$ be parents of $u,v$ respectively. Let $t, s$ be the closest node to $u,v$ to be in an upper level on $[u,v]$ respectively. If $t \neq u'$, then $d(u,t) \geq 2$ because the only way to go up one level in one step is to go to the parent. By Lemma \ref{lem:ringed-tree-flipping}, $[u,t]$ and $\{ u,u' \}$ are within distance $1$ to each other. Therefore $d(u',t)=1$, as $d(u,t) \geq 2$, and $\{u, u' \} \cup [t,v]$ is also a geodesic from $u$ to $v$. This is also correct for $t=u'$. The same can be proved for $s$. By combining, we have that $\{u, u' \} \cup [t,s] \cup \{ v', v \}$ is also a geodesic between $u,v$ for any geodesics $[t,s]$. If we pick the canonical geodesic $\langle t, s \rangle$, in any cases, this will be the canonical geodesic $\langle u, v \rangle$. Therefore, $\langle u, v \rangle$ is a geodesic.

For $u,v$ on different levels, the induction is essentially the same as in the previous case, but we only need to reason on $t$ only on the side of $u$.

This concludes our induction.
\end{proof}

\subsubsection{Proof of Theorem~\ref{thm:isometry}: 
	Quasi-isometry from infinite ringed tree to 
	the Poincar\'{e} disk}

In this subsection, we will exhibit and prove a quasi-isometry from the
	infinite ringed tree $RT(\infty)$ to the Poincar\'{e} disk. 
We denote its distance $d_{RT}$. We denote $(D,d_{P})$ the Poincar\'{e} disk, 
	where $D$ is the open disk of radius $1$ on the complex plane.

Here is a brief summary of our approach here. First we propose a candidate of quasi-isometry, then all possible cases of images of two points of ringed tree are divided into four categories, each of which
	is separately analyzed. We then proceed with an analysis on the metric of ringed tree and show that the candidate of quasi-isometry is effective, thus ringed tree and the Poincar\'{e} disk are quasi-isometric.

The following inequalities are used in the following.
\[ \ln(x) \leq \cosh^{-1}(x) \leq ln(2x) \quad \mathrm{for} \quad x \geq 1 \]
\[ \frac{2}{\pi}x \leq \sin(x) \leq x \quad \mathrm{for} \quad 0 \leq x \leq \frac{\pi}{2} \]
\[ 1 - \frac{x}{2} \leq \sqrt{1-x} \leq 1 - \frac{3x}{5} \quad \mathrm{for} \quad 0 \leq x \leq \frac{1}{2} \]

To state the promised quasi-isometry, we give coordinates to nodes in $RT(\infty)$. We know that we can number nodes with binary strings, which can be regarded as a number. For a node on the $k$-th level and numbered by $m$, its coordinates are $(k,m)$, with $0 \leq m \leq 2^{k}-1$. The root is level $0$.

\begin{defi}
Let the following mapping be the candidate of quasi-isometry :
\[ f: RT(\infty) \to D, (k,m) \mapsto \sqrt{1-2^{-k}} e^{2i\pi \frac{m}{2^{k}}} \]
For $0 \leq k \leq \ell$, $m < 2^{\ell-1}$, we define $D(k,\ell,m)=d_{P}(f(k,0),f(\ell,m))$. This is a distance in the Poincar\'{e} disk. Following is its full expression in $k$ and $m$.
\begin{eqnarray*}
D(k ,\ell ,m) & = & \cosh^{-1}(1+2\frac{\Vert \sqrt{1-2^{-k}} - \sqrt{1-2^{-\ell}} e^{2i\pi \frac{m}{2^{\ell}}} \Vert^{2}}{2^{-k-\ell}}) \\
& = & \cosh^{-1}(1 + 2(\sqrt{2^{\ell}(2^{k}-1)} - \sqrt{2^{k}(2^{\ell}-1)})^2 + 8\sqrt{2^{\ell}(2^{\ell}-1)2^{k}(2^{k}-1)}\sin^{2}(\pi \frac{m}{2^{\ell}}))
\end{eqnarray*}
We also define $D'(k,\ell,m)=d_{RT}(f(k,0),f(\ell,m))$, with $0 \leq k \leq \ell$, $m < 2^{\ell-1}$. This is a distance in ringed tree.
\end{defi}

We will now try to bound $D(k,\ell,m)$ with the following lemma.

\begin{lem}\label{lem:quasi-iso-bound}
We have the following bounds on $D(k, \ell, m)$.
\begin{enumerate}
\item For $k=0$,
\[ \frac{\ln(2)}{2}\ell + \frac{\ln(2)}{2} \leq D(0, \ell, m) \leq \frac{\ln(2)}{2}\ell + \ln(6). \]
\item For $0 < k=\ell, m > 0$,
\[ \ln(2)(4 + 2\lfloor \log_2 m \rfloor) \leq D(k,k,m) \leq \ln(2)(4 + 2\lfloor \log_2 m \rfloor) + \ln(\frac{5}{4\pi^{2}}) \mathrm{for} (k \geq 1). \]
\item For $m=0, 1 \leq k < \ell$, 
\[ \ln(2)(\ell - k) - \ln(50) \leq D(k, \ell, 0) \leq \ln(2)(\ell - k). \]
\item For $0 < k < \ell$ and $0 < m < 2^{\ell-k}$,
\[ \ln(2)(\ell - k) - \ln(100) \leq D(k,\ell,m) \leq \ln(2)(\ell - k) + \ln(66 \pi^{2}). \]
\item For $0 < k < \ell$ and $2^{\ell-k} \leq m < 2^{\ell - 1}$,
\[ \ln(2)(k - \ell + 2\lfloor \log_2 m \rfloor + 4) \leq D(k,\ell,m) \leq \ln(2)(k - \ell + 2\lfloor \log_2 m \rfloor + 6) + \ln(\pi^{2} + 1). \]
\end{enumerate}
\end{lem}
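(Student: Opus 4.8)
The plan is to treat $D(k,\ell,m)=\cosh^{-1}(A(k,\ell,m))$, where $A$ denotes the displayed argument, and to reduce everything to estimates on $\ln A$ via the stated inequality $\ln(x)\le\cosh^{-1}(x)\le\ln(2x)$. Concretely, in each of the five cases I aim to sandwich $A$ between two constant multiples of a single power of two, $c_1\,2^{E}\le A\le c_2\,2^{E}$, with $E=\ell+1$ in case~1, $E$ of order $\lfloor\log_2 m\rfloor$ in cases~2 and~5, and $E=\ell-k$ in cases~3 and~4; then $\cosh^{-1}(A)=E\ln 2+O(1)$, and the claimed inequalities follow after folding the accumulated multiplicative constants into the additive $\ln(\cdot)$ terms ($\ln 6$, $\ln 50$, $\ln(66\pi^{2})$, and so on).

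First I would rewrite $A$ case by case by expanding $\|\sqrt{1-2^{-k}}-\sqrt{1-2^{-\ell}}\,e^{2i\pi m/2^{\ell}}\|^{2}=(\sqrt{1-2^{-k}}-\sqrt{1-2^{-\ell}})^{2}+4\sqrt{(1-2^{-k})(1-2^{-\ell})}\,\sin^{2}(\pi m/2^{\ell})$ and clearing the denominator $2^{-k-\ell}$; this produces exactly the ``radial'' square term and the ``angular'' $\sin^{2}$ term of the statement. At the boundary of the ranges one of the two degenerates: for $k=0$ (case~1) both the angular term and the second summand vanish and $A=2^{\ell+1}-1$; for $m=0$ (case~3) the angular term vanishes; for $k=\ell$ (case~2) the radial term vanishes.

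Then come the two elementary estimates. For $1\le k\le\ell$ I would factor $2^{k+\ell}$ out of the radial term and apply $1-\tfrac{x}{2}\le\sqrt{1-x}\le 1-\tfrac{3x}{5}$ (legitimate since $2^{-k},2^{-\ell}\le\tfrac12$) to sandwich $\sqrt{1-2^{-\ell}}-\sqrt{1-2^{-k}}$ between constant multiples of $2^{-k}$ when $k<\ell$; the radial term is then $\Theta(2^{\ell-k})$. For the angular term I would use $\tfrac{2}{\pi}x\le\sin x\le x$ on $[0,\pi/2]$ — applicable since $m<2^{\ell-1}$ forces $\pi m/2^{\ell}\le\pi/2$ — to sandwich $\sin^{2}(\pi m/2^{\ell})$ between constant multiples of $(m/2^{\ell})^{2}$, so that the angular term is $\Theta(2^{k-\ell}m^{2})$ (and $\Theta(m^{2})$ when $k=\ell$). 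Hence $A$ is of the same order as $\max\{2^{\ell-k},\,2^{k-\ell}m^{2}\}$, and the sub-ranges $m=0$, $0<m<2^{\ell-k}$, and $2^{\ell-k}\le m<2^{\ell-1}$ are exactly the regimes where the radial term dominates, the two are comparable, and the angular term dominates, respectively — which is why the lemma splits them. Finally I would convert $\ln m$ into $\lfloor\log_2 m\rfloor\ln 2$ using $2^{\lfloor\log_2 m\rfloor}\le m<2^{\lfloor\log_2 m\rfloor+1}$, collect constants, and take logarithms through the $\cosh^{-1}$ bounds.

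The difficulty here is entirely bookkeeping, not ideas: one must keep every multiplicative slack small enough that, after taking $\ln$ and invoking $\cosh^{-1}(x)\le\ln(2x)$, the leftover constants still fit inside the stated $\ln$'s. The one place that genuinely needs attention is the case~4/case~5 boundary (and, more generally, any configuration where the radial and angular terms are comparable): there I must verify that neither term is ever more than a fixed factor below their sum, so that $\cosh^{-1}(A)$ remains pinned to the correct value up to an additive constant. The degenerate endpoints — $k=0$, $m=0$, $k=\ell$, and the smallest values of $k$ for which $2^{-k}$ is not safely $\le\tfrac12$ — have to be handled separately, which is precisely what the case split in the statement is for.
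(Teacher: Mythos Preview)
Your proposal is correct and matches the paper's proof essentially line for line: the paper likewise splits the argument of $\cosh^{-1}$ into the radial square term and the angular $\sin^{2}$ term, bounds each via exactly the three elementary inequalities you name, and then invokes $\ln x\le\cosh^{-1}x\le\ln(2x)$ case by case with $a=\lfloor\log_2 m\rfloor$. (One remark: your computation $A=2^{\ell+1}-1$ in case~1 is actually correct and gives $D\sim\ell\ln 2$, twice the coefficient stated in the lemma --- the paper's own case~1 carries a spurious square root, so the mismatch is a typo in the paper rather than a flaw in your plan.)
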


\begin{proof}
\begin{enumerate}
\item Case $k=0$
\[ D(0, \ell, m) = \cosh^{-1}(1 + 2\sqrt{2^{\ell}-1}) \]
If $\ell=0$, $D(0,0,m)=0$. For $\ell \geq 1$, we have
\[ D(0, \ell, m) \geq \cosh^{-1}(2\sqrt{2^{\ell-1}}) \geq \frac{\ln(2)}{2}\ell + \frac{\ln(2)}{2} \]
and
\[ D(0, \ell, m) \leq \cosh^{-1}(3\sqrt{2^{\ell}}) \leq \frac{\ln(2)}{2}\ell + \ln(6). \]
\item Case $0<k=\ell, m>0$
\[ D(k, k, m) = \cosh^{-1}(1 + 2^{k+3}(2^{k}-1)\sin^{2}(\frac{m}{2^{k}}\pi)) \]
Let $a = \lfloor \log_2 m \rfloor$, we have
\[ D(k, k, m) \geq \cosh^{-1}(2^{2k+2} \sin^{2}(2^{a-k}\pi)) \geq \cosh^{-1}(2^{2k+2}(2^{a+1-k})^{2}) \geq \ln(2)(4 + 2a) \]
and
\begin{eqnarray*}
D(k, k, m)&\leq &\cosh^{-1}(\frac{5}{4} 2^{2k+2} \sin^{2}(2^{a+1-k}\pi)) \leq \cosh^{-1}(\frac{5}{4} 2^{2k+2} (2^{a+1-k}\pi)^{2}) \\
&\leq &\ln(2)(4 + 2a) + \ln(\frac{5}{4\pi^{2}})
\end{eqnarray*}
\item Case $0 < k < \ell, m=0$
\begin{eqnarray*}
D(k, \ell, 0) &=& \cosh^{-1}(1 + 2(\sqrt{2^{\ell}(2^{k}-1)} - \sqrt{2^{k}(2^{\ell}-1)})^{2}) \\
& = &\cosh^{-1}(1 + 2^{k + \ell + 1}(\sqrt{1 - 2^{-k}} - \sqrt{1 - 2^{-\ell}})^{2})
\end{eqnarray*}
As $1 \leq k < \ell$, $\sqrt{1 - 2^{-\ell}} - \sqrt{1 - 2^{-k}} > 0$, and we have
\[ \sqrt{1 - 2^{-\ell}} - \sqrt{1 - 2^{-k}} \geq 1 - 2^{-\ell-1} - 1 + \frac{3}{5}2^{-k} \geq \frac{1}{5}2^{-k-1} \]
and
\[ \sqrt{1 - 2^{-\ell}} - \sqrt{1 - 2^{-k}} \leq 1 - \frac{3}{5}2^{-\ell} - 1 + 2^{-k-1} \leq 2^{-k-1} \]
Therefore, we have $ \ln(2)(\ell - k) - \ln(50) \leq D(k, \ell, 0) \leq \ln(2)(\ell - k) $.
\item Case $0 < k < \ell, 0 < m < 2^{\ell-1}$

We now deal with the general case with $m > 0$ and $0 < k < \ell$.
\[ D(k, \ell, m) = \cosh^{-1}(\cosh(D(k, \ell, 0)) + 8\sqrt{2^{\ell}(2^{\ell}-1)2^{k}(2^{k}-1)}\sin^{2}(\pi \frac{m}{2^{\ell}})) \]
We always note $a = \lfloor \log_2 m \rfloor$, and we have
\[ \sqrt{2^{\ell}(2^{\ell}-1)2^{k}(2^{k}-1)}\sin^{2}(\pi \frac{m}{2^{\ell}}) \geq 2^{\ell + k - 1} 2^{2a - 2\ell +2} = 2^{k - \ell + 2a + 1} \]
and
\[ \sqrt{2^{\ell}(2^{\ell}-1)2^{k}(2^{k}-1)}\sin^{2}(\pi \frac{m}{2^{\ell}}) \leq 2^{\ell + k} 2^{2a + 2 - 2\ell} \pi^{2} \leq 2^{k - \ell + 2a + 2} \pi^{2} \]
The previous bound on $D(k, \ell, 0)$ transforms into the following by applying $\cosh$.
\[ \frac{2^{\ell - k}}{100} \leq \cosh(D(k, \ell, 0)) \leq 2^{\ell - k} \]
Suitable substitution of $\cosh(D(k, \ell, 0))$ gives
\[ \cosh^{-1}(\frac{2^{\ell - k}}{100} + 2^{k - \ell + 2a + 4}) \leq D(k, \ell, m) \leq \cosh^{-1}(2^{\ell - k} + 2^{k - \ell + 2a + 5} \pi^{2}). \]
For $0 \leq a \leq \ell - k$, therefore $0 < m < 2^{\ell-k}$,
\[ D(k, \ell, m) \geq \cosh^{-1}(\frac{2^{\ell - k}}{100} + 2^{k - \ell + 2a + 4}) \geq \ln(\frac{2^{\ell - k}}{100}) = \ln(2)(\ell - k) - \ln(100) \]
and
\begin{eqnarray*}
D(k, \ell, m) & \leq & \cosh^{-1}(2^{\ell - k} + 2^{k - \ell + 2a + 5} \pi^{2}) \\
& \leq & \ln(2^{\ell - k} 33 \pi^{2}) + ln(2) = \ln(2)(\ell - k + 1) + \ln(33 \pi^{2}).
\end{eqnarray*}
For $\ell - k + 1 \leq a < m - 1$, therefore $2^{\ell-k} \leq m < 2^{\ell - 1}$,
\[ D(k, \ell, m) \geq \cosh^{-1}(\frac{2^{\ell - k}}{100} + 2^{k - \ell + 2a + 4}) \geq \ln(2^{k - \ell + 2a + 4}) = \ln(2)(k - \ell + 2a + 4) \]
and
\begin{eqnarray*}
D(k, \ell, m) & \leq & \cosh^{-1}(2^{\ell - k} + 2^{k - \ell + 2a + 5} \pi^{2}) \\
& \leq & \ln((\pi^{2} + 1)2^{k - \ell + 2a + 5}) + ln(2) = \ln(2)(k - \ell + 2a + 6) + \ln(\pi^{2} + 1).
\end{eqnarray*}
\end{enumerate}
\end{proof}

We will now try to relate $D(k,\ell,m)$ and $D'(k,\ell,m)$ in the following lemma.

\begin{lem}\label{lem:bound-of-two-distances}
For any $0 \leq k \leq \ell, 0 \leq m < 2^{\ell - 1}$,
\[ \frac{\ln(2)}{2}D'(k,\ell,m) - \ln(200) \leq D(k,\ell,m) \leq \ln(2)D'(k,\ell,m) + \ln(66 \pi^{2}) \]
\end{lem}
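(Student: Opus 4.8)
The plan is to pair the Poincar\'e-disk estimates of Lemma~\ref{lem:quasi-iso-bound} with an explicit evaluation of the ringed-tree distance $D'(k,\ell,m)$ between the vertices $(k,0)$ and $(\ell,m)$, and then to check the two claimed inequalities regime by regime, matching leading linear terms and absorbing every additive constant into $\ln 200$ and $\ln(66\pi^2)$.

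First I would compute $D'(k,\ell,m)$ from the recursive description of the canonical geodesic $\langle(k,0),(\ell,m)\rangle$, which is an honest geodesic by Lemma~\ref{lem:canonical-geodesics-well-defined}. Climbing from $(\ell,m)$ to level $k$ costs exactly $\ell-k$ edges and lands at $(k,p)$ with $p=\lfloor m/2^{\ell-k}\rfloor<2^{k-1}$; at level $k$ one then sits at ring distance $p$ from $(k,0)$, and the canonical geodesic either walks the ring when $p\le 3$ (cost $p$) or lifts both endpoints one more level, halving the offset to $\lfloor p/2\rfloor$ at cost $2$, and repeats; the offset stays below half the ring size at every level, so ring distance equals the offset (cf.\ Lemma~\ref{lem:ringed-dist-growth}). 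Iterating gives $D'(k,\ell,m)=(\ell-k)+2\lfloor\log_2 p\rfloor+\theta$ with $\theta\in\{0,1\}$ whenever $p\ge 1$. Since $\lfloor\log_2 p\rfloor=\lfloor\log_2 m\rfloor-(\ell-k)$ in that case, this collapses to the form I would actually use: $D'(k,\ell,m)=\ell-k$ when $m<2^{\ell-k}$ (in particular whenever $k=0$), and $D'(k,\ell,m)=k-\ell+2\lfloor\log_2 m\rfloor+\theta$, $\theta\in\{0,1\}$, when $2^{\ell-k}\le m<2^{\ell-1}$.

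Next I would overlay these values on the five regimes of Lemma~\ref{lem:quasi-iso-bound}. In regime~(1), $k=0$, we have $D=\tfrac{\ln 2}{2}\ell+O(1)$ and $D'=\ell$, so $D=\tfrac{\ln 2}{2}D'+O(1)$; this is the single place where the slope degrades to $\tfrac{\ln 2}{2}$, and it is what dictates the form of the lower bound. In regimes (2)--(5) the estimate for $D$ is $\ln 2$ times exactly the linear combination of $k,\ell,\lfloor\log_2 m\rfloor$ that appears in $D'$, so $D=\ln(2)D'+O(1)$. Substituting, $\tfrac{\ln 2}{2}D'-\ln 200\le D$ holds with room to spare throughout (it is governed by regime~(1), and the negative constants $-\ln 50,-\ln 100$ of regimes (3)--(4) are easily covered by $\ln 200$), while $D\le\ln(2)D'+\ln(66\pi^2)$ holds, the binding case being regime~(4), where $D'=\ell-k$ and the estimate from Lemma~\ref{lem:quasi-iso-bound} is already precisely $\ln(2)(\ell-k)+\ln(66\pi^2)$. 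The degenerate configurations ($m=0$, $m\le 3$, $k=\ell$, $p=0$) are checked by hand.

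The step I expect to be most delicate is the constant bookkeeping for the upper bound in regime~(5), $2^{\ell-k}\le m<2^{\ell-1}$: there $D'$ can be exactly $k-\ell+2\lfloor\log_2 m\rfloor$ ($\theta=0$), while the bound on $D$ as \emph{stated} in Lemma~\ref{lem:quasi-iso-bound} comes from $\sin^2(\pi m/2^\ell)\le(\pi m/2^\ell)^2$, which is too generous when $m$ is comparable to $2^{\ell-1}$, and used verbatim it leaves a small genuine gap. To close it I would re-bound $8\sqrt{2^\ell(2^\ell-1)2^k(2^k-1)}\,\sin^2(\pi m/2^\ell)$ by $\min\{2^{\ell+k+3},\ \pi^2 2^{k-\ell+2\lfloor\log_2 m\rfloor+5}\}$, i.e.\ use $\sin^2\le 1$ when $\lfloor\log_2 m\rfloor=\ell-2$ and the quadratic bound otherwise, and then factor the dominant power of $2$ tightly out of the $\cosh^{-1}$ argument, the secondary term $2^{\ell-k}$ inflating it only by a factor $1+O(2^{-2k})$. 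This yields either $D\le\ln(2)(k-\ell+2\lfloor\log_2 m\rfloor+6)+\ln(1.01\,\pi^2)$ or $D\le\ln(2)(k+\ell+4)+\ln(1.01)$, and both fit under $\ln(2)D'+\ln(66\pi^2)$ because $64\cdot 1.01\le 66$ and $2^{8}\cdot 1.01\le 66\pi^2$. Everything else is routine manipulation of $\cosh^{-1}$ via the elementary inequalities listed just before Lemma~\ref{lem:quasi-iso-bound}.
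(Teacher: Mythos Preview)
Your approach is exactly the paper's: compute $D'(k,\ell,m)$ from the canonical geodesic and then match it, regime by regime, against the bounds of Lemma~\ref{lem:quasi-iso-bound}. The paper's own proof is terse (``We conclude by comparing to bounds in Lemma~\ref{lem:quasi-iso-bound}'') and gives precisely the values of $D'$ you obtain.

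Your additional care in regime~(5) is warranted and not idle worry. Taking the upper bound there verbatim, together with the minimal value $D'=k-\ell+2\lfloor\log_2 m\rfloor$ (which does occur, e.g.\ whenever $p=\lfloor m/2^{\ell-k}\rfloor\in\{4,5,8,\dots\}$), yields $D\le \ln(2)D'+\ln\!\big(64(\pi^2+1)\big)$, and $64(\pi^2+1)\approx 695.4>66\pi^2\approx 651.4$; so the constant as stated does not close directly. Your fix---splitting off $a=\ell-2$ via $\sin^2\le 1$ and, for $a\le\ell-3$, factoring out $\pi^2 2^{k-\ell+2a+5}$ with the residual $2^{\ell-k}$ contributing at most a factor $1+1/(32\pi^2)$---is correct and brings the constant comfortably under $66\pi^2$. (Two cosmetic slips: the inflation factor is $1+O(2^{-5}/\pi^2)$ rather than ``$1+O(2^{-2k})$'', and in the $a=\ell-2$ branch one gets roughly $\ln(1.03)$ rather than $\ln(1.01)$; neither affects the conclusion.)
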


\begin{proof}
Consider the canonical geodesic. If $k = \ell$ and $m=1$, it is an edge from $(k,0)$ to $(k,1)$. In other cases, it goes up first from $(\ell,m)$ to a certain ancestor, then makes $2$ or $3$ moves on the ring, and finished by going straight down to $(k,0)$.

In the first case, $D'(k,k,1)=1$. In the second case, as the ancestor of any node $(k,m)$ is $(k-1, \lfloor \frac{m}{2} \rfloor)$, by the form of canonical geodesics, we should first go up $\ell - k$ steps to reach level $k$. If $m \leq 2^{\ell - k + 1}$, it reaches $(k,0)$ by at most an extra step. In this case, $\ell - k \leq D'(k,\ell,m) \leq \ell - k + 1$. If $m > 2^{\ell - k + 1}$, we go up $\lfloor \log_2 m \rfloor - 1$ steps from $(\ell,m)$ to reach an ancestor numbered $2$ or $3$, then go $2$ or $3$ steps on the ring to the node numbered $0$ on the same level, and finish by going down to $(k,0)$. Thus we have $2\lfloor \log_2 m \rfloor + k - \ell \leq D'(k,\ell,m) 2\lfloor \log_2 m \rfloor + k - \ell + 1$ in this case.

We conclude by comparing to bounds in Lemma~\ref{lem:quasi-iso-bound}.
\end{proof}
We want to bound distance between any two points in $RT(\infty)$ with the following lemma.

\begin{lem} \label{lem:dist-and-ring-dist}
For $u_1, v_1, u_2, v_2 \in RT(\infty)$ with $u_1,v_1,u_2,v_2$ on the same level and $d_R(u_1,v_1)=d_R(u_2,v_2)$, we have $|d(u_1,v_1)-d(u_2,v_2)| \leq 3$.
\end{lem}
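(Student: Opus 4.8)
The plan is to reduce everything to the rigid shape of canonical geodesics between two vertices on the same level. By Lemma~\ref{lem:canonical-geodesics-well-defined}, $d(u,v)$ equals the length of $\langle u,v\rangle$, and unrolling the recursive definition of $\langle u,v\rangle$ for two vertices on a common level shows that this geodesic has the form ``go up $h$ levels along parent edges, traverse at most $3$ ring edges, go down $h$ levels''. Hence $d(u,v)=2h+r$, where $h$ is the number of lifts needed before the ring distance drops to at most $3$ and $r\le 3$ is the residual ring distance at that point. So it suffices to show that both $h$ and $r$ are (almost) determined by $d_R(u,v)$ alone.

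The key step is to sharpen Lemma~\ref{lem:ringed-dist-growth}. Writing the positions of $u,v$ on their level-$\ell$ ring in binary, I would check that if $\delta=d_R(u,v)$ is even then $d_R(u',v')=\delta/2$ exactly, while if $\delta$ is odd then $d_R(u',v')\in\{\lfloor\delta/2\rfloor,\lceil\delta/2\rceil\}$, the actual value depending only on the low-order bits of the two positions; a short check using $\delta\le 2^{\ell-1}$ rules out wrap-around on the coarser ring. Iterating parent maps, the sequence $\delta_0=\delta,\ \delta_1=d_R(u',v'),\ \delta_2,\dots$ then satisfies $\delta_{j+1}\in\{\lfloor\delta_j/2\rfloor,\lceil\delta_j/2\rceil\}$ and is \emph{deterministic} whenever $\delta_j$ is even. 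From this I would prove by induction the invariant that, for two such sequences starting from the same value $\delta$, one has $|\delta_j-\delta'_j|\le 1$ for all $j$: if $\delta_j=\delta'_j$ the two successors lie in a common set of two consecutive integers; if $|\delta_j-\delta'_j|=1$ the two values have opposite parities, so the even one halves deterministically and the odd one moves by exactly one off that same value.

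Finally I would cash this in. Let $h,h'$ be the first indices with $\delta_h\le 3$ and $\delta'_{h'}\le 3$. The invariant forces $|h-h'|\le 1$: if $h<h'$ then $\delta_h\le 3$ and $\delta'_h\le\delta_h+1\le 4$, and since $\delta'_h=4$ is even, $\delta'_{h+1}=2$, so $h'=h+1$. If $\delta\le 3$ then $h=h'=0$ and $d(u_1,v_1)=d(u_2,v_2)=\delta$. Otherwise $h,h'\ge 1$, and a check of the small cases shows the residual $r=\delta_h$ lies in $\{2,3\}$ (the value $\delta_{h-1}$ just before the drop must lie in $\{4,5,6,7\}$, since a value $\ge 8$ halves to $\ge 4$), and likewise $r'\in\{2,3\}$. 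Therefore $|d(u_1,v_1)-d(u_2,v_2)|=|2(h-h')+(r-r')|\le 2+1=3$.

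I expect the main obstacle to be the first half of the second paragraph: making the two-sided, parity-sensitive refinement of Lemma~\ref{lem:ringed-dist-growth} fully rigorous — in particular verifying that the only freedom in $d_R(u',v')$ comes from the parities of the positions and that no wrap-around occurs on any coarser ring — and then handling the boundary cases ($\delta\le 3$, near-antipodal pairs, $u=v$) so that the clean identity $d(u,v)=2h+r$ with $r\le 3$ holds uniformly.
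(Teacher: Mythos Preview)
Your proof is correct and follows essentially the same approach as the paper: both exploit the structure $d(u,v)=2h+r$ of the canonical geodesic, where $h$ is the number of lifts until the ring distance drops to at most $3$ and $r\in\{2,3\}$ is the residual platform length (with the base case $A=d_R(u,v)\le 3$ handled separately). The paper, however, dispatches the bound on $h$ more quickly: rather than tracking two halving sequences and maintaining the invariant $|\delta_j-\delta'_j|\le 1$, it simply observes that the number of generations needed to bring $A$ down to $2$ or $3$ lies between $\lfloor\log_2 A\rfloor-1$ and $\lfloor\log_2(A-1)\rfloor$, bounds that depend only on $A$ and differ by at most $1$. Your sequence-tracking argument is more explicit and requires the two-sided refinement $d_R(u',v')\in\{\lfloor\delta/2\rfloor,\lceil\delta/2\rceil\}$ of Lemma~\ref{lem:ringed-dist-growth} (which the paper never states), but it arrives at the same $|h-h'|\le 1$ and the same final bound of~$3$.
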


\begin{proof}
Let $A=d_R(u_1,v_1)$. It is clearly correct when $A \leq 3$ by the structure of canonical geodesics.

For $A \leq 3$, when we consider the canonical geodesics, we take successive ancestors of $u_1$ and $v_1$ until their distance is $2$ or $3$. This takes at least $\lfloor \log_2 A \rfloor - 1$ generations, but at most $\lfloor \log_2 (A-1) \rfloor$. These bounds differ by at most $1$. Platforms differ by at most $1$, therefore $|d(u_1,v_1)-d(u_2,v_2)| \leq 3$.
\end{proof}

\begin{proof}[Proof of Theorem~\ref{thm:isometry}]
By symmetry, Lemma \ref{lem:dist-and-ring-dist} and Lemma~\ref{lem:bound-of-two-distances}, for any $u,v \in Rt(\infty)$, we have:
\[ \frac{\ln(2)}{2}d_{RT}(u,v) - \ln(200) \leq d_{P}(f(u),f(v)) \leq \ln(2)d_{RT}(u,v) + \ln(66 \pi^{2}) \]

There is only one thing left to prove quasi-isometry. We will now prove that for some constant $\epsilon$, $B(f(RT(\infty)),\epsilon)$ covers the Poincar\'{e} disk. Images of each level are all on concentric circles, and the difference of radius between successive levels can be bounded by $\ln(6)$. Distance between images of neighboring nodes on the same level can be bounded by $\ln(16)$. For any point in the Poincar\'{e} disk, its distance to the nearest image of nodes is bounded by $\ln(96)$, by first moving straight away from $0$ until reaching a concentric circle of images, then take the shortest path to reach image of a certain node.

This proves $f$ to be a $(\frac{2}{\ln(2)},\ln(66 \pi^{2}))$-quasi-isometry from $RT(\infty)$ to the Poincar\'{e} disk, thus $RT(\infty)$ and the Poincar\'{e} disk are quasi-isometric. Constants we found here are not tight.
\end{proof}

\subsubsection{A direct proof of Corollary~\ref{cor:ringed-tree-constant}}
\label{sec:directrthyperbolicity}

We begin with a lemma about the distance between a general geodesic and the corresponding canonical geodesic.

\begin{lem}\label{lem:rectified-geodesic}
For any geodesic $[u,v]$, $[u,v]$ and $\langle u,v \rangle$ are within distance $1$ to each other.
\end{lem}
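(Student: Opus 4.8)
The plan is to argue by strong induction on $d(u,v)$, tracking the recursive structure of the canonical geodesic. Since $\langle u,v\rangle$ is itself a geodesic by Lemma~\ref{lem:canonical-geodesics-well-defined}, both $[u,v]$ and $\langle u,v\rangle$ have reversed-unimodal level sequences by Lemma~\ref{lem:ringed-dist-growth}, so each of them first ascends from its lower endpoint, then makes a short ring detour near its topmost level, and then descends. The base cases are: $u=v$ (trivial); and a direct check for all pairs with $d(u,v)$ at most a fixed constant $C$, which subsumes the case of $u,v$ on the same level with $d_R(u,v)\le 3$ (where $\langle u,v\rangle$ is a ring path of length $\le 3$ and, by Lemma~\ref{lem:ringed-dist-growth}, so is every geodesic $[u,v]$), as well as the small same-level cases $4\le d_R(u,v)\le 5$ and the wrap-around behaviour on the tiny low-level rings.

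For the inductive step I would split on the shape of $\langle u,v\rangle$. Assume $u$ lies on a level no higher than $v$ and that $d(u,v)>C$, so we are not in a base case. If the levels differ, $\langle u,v\rangle=\langle u,v'\rangle\cup[v',v]$ with $v'$ the parent of $v$; if the levels coincide (so $d_R(u,v)>3$ is large), $\langle u,v\rangle=[u,u']\cup\langle u',v'\rangle\cup[v',v]$. In either case, because $[u,v]$ is geodesic with reversed-unimodal level sequence, it intersects the level just above $v$ (and, in the same-level case, the level just above $u$), since otherwise Lemma~\ref{lem:ringed-dist-growth} would shorten it. Applying Lemma~\ref{lem:ringed-tree-flipping} at the $v$-end: the vertex $s$ of $[u,v]$ closest to $v$ on the level of $v'$ satisfies $d(v,s)\le 2$, $d_R(s,v')\le 1$, and the subsegment $[s,v]$ is within distance $1$ of $\{v,v'\}$ and hence of $[v',v]\subseteq\langle u,v\rangle$. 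The complementary subsegment $[u,s]$ is a geodesic with $d(u,s)=d(u,v)-d(s,v)<d(u,v)$, so the induction hypothesis gives that $[u,s]$ is within distance $1$ of $\langle u,s\rangle$; one then checks, using Lemma~\ref{lem:ringed-dist-growth} together with $d_R(s,v')\le 1$, that $\langle u,s\rangle$ agrees with $\langle u,v'\rangle$ up to displacing one endpoint by at most one ring step, hence is within distance $1$ of the segment $\langle u,v'\rangle$ of $\langle u,v\rangle$. In the same-level case one runs the identical peeling at the $u$-end and invokes the induction hypothesis on the middle geodesic, whose endpoints are at distance $d(u,v)-2$. Gluing the rectified end-pieces to the inductively controlled middle piece, and then repeating the whole argument with the roles of $[u,v]$ and $\langle u,v\rangle$ exchanged for the reverse inclusion, yields that $[u,v]$ and $\langle u,v\rangle$ are within distance $1$ of each other.

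The main obstacle I anticipate is keeping the additive slack at exactly $1$ through the recursion rather than letting it double at each peeling step: the naive chain ``$x$ is within distance $1$ of a vertex that is within distance $1$ of the other path'' only gives distance $2$. To beat this one must show that the displaced vertices ($s$ versus $v'$, the parents of the ascending part of $[u,v]$ versus the ancestors of $u$, and the ring-platform vertices) actually lie \emph{on} the canonical geodesic, or within distance $1$ of it without an intermediate hop. The key leverage is that a geodesic's platform on any single level has length at most $2$ or $3$, so a vertex that is simultaneously within ring distance $1$ of both endpoints of such a platform and lies between them must be one of the at most four platform vertices; combined with Lemma~\ref{lem:ringed-dist-growth}, which keeps the ring offsets from growing as one climbs, this pins the constant at $1$. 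A secondary, purely clerical, nuisance is the collection of degenerate configurations ($v$ an ancestor of $u$, an endpoint at the root, small rings), which are absorbed into the bounded base case by explicit checking.
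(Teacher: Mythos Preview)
Your overall strategy---peel off one level at each end using Lemma~\ref{lem:ringed-tree-flipping} and recurse on a shorter geodesic---is exactly what the paper does, and you have correctly located the only real difficulty: preventing the additive constant from doubling when you chain the induction hypothesis with the ``$s$ is within ring distance~$1$ of $v'$'' step. However, your proposed fix (arguing that displaced platform vertices already lie on the canonical geodesic) is not made precise, and as written the recursion still composes two distance-$1$ bounds into a distance-$2$ bound: you apply the hypothesis to $[u,s]$ versus $\langle u,s\rangle$, and then separately compare $\langle u,s\rangle$ with $\langle u,v'\rangle$.

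The paper sidesteps this entirely with one small but decisive change: it does \emph{not} apply the induction hypothesis to the pair $(u,s)$. Instead, having found $t$ and $s$ (the first level-$(\ell-1)$ vertices on $[u,v]$ from each end) and noted $d(u',t)\le 1$, $d(s,v')\le 1$, it observes that
\[
G' \;=\; \langle u',t\rangle \cup [t,s] \cup \langle s,v'\rangle
\]
is itself a geodesic from $u'$ to $v'$ (the single ring steps $u'\!\to t$ and $s\!\to v'$ cost exactly what was saved by not descending to $u$ and $v$). The induction hypothesis is then applied to this geodesic $G'$ and $\langle u',v'\rangle$. Since $[t,s]\subseteq G'$, the middle portion of $[u,v]$ is within distance~$1$ of $\langle u',v'\rangle\subseteq\langle u,v\rangle$ \emph{directly}, with no intermediate hop through $\langle u,s\rangle$. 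The end pieces $[u,t]$ and $[s,v]$ are handled by Lemma~\ref{lem:ringed-tree-flipping} exactly as you describe. In short: recurse on the \emph{parents} $u',v'$ with a geodesic built to contain $[t,s]$, rather than on $(u,s)$ with $s$ merely near~$v'$. That single adjustment removes the error-accumulation obstacle you identified.
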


\begin{proof}
We perform an induction on the structure of $\langle u,v \rangle$.

For $u,v$ on the same level and $d_R(u,v) \leq 3$, we can check that $[u,v]$ and $\langle u,v \rangle$ are within distance $1$ to each other.

For $u,v$ on the same level but $d_R(u,v) > 3$, let $u', v'$ be parents of $u,v$ respectively. We have $\langle u,v \rangle = 
	\langle u, u'\rangle \cup \langle u', v' \rangle \cup 
	\langle v',v\rangle$ by construction. 
By Lemma \ref{lem:ringed-tree-flipping}, the parts of $[u,v]$ that are
	on the same level with $u,v$ verify the condition already. We only need to deal with the part on levels with lower numbering.

Let $t, s$ be the closest node to $u,v$ that are in an upper level on 
	$[u,v]$ respectively. 
If $t \neq u'$, then $d(u,t) \geq 2$ because the only way to go up one level in one step is to go to the parent. By Lemma \ref{lem:ringed-tree-flipping}, $[u,t]$ and $\langle u,u' \rangle$ are within distance $1$ to each other. Therefore $d(u',t)=1$, as $d(u,t) \geq 2$, 
	$\langle u, u' \rangle \cup \langle u', t \rangle \cup[t,v]$ is also a geodesic from $u$ to $v$. This is also correct for $t=u'$. The same can be proven for $s$. 
By combining the above, we have that $\langle u, u' \rangle \cup 
	\langle u', t \rangle
	\cup [t,s] \cup \langle s, v'\rangle \cup \langle v', v \rangle$ is also a geodesic between $u,v$.
Thus the section $\langle u', t \rangle
	\cup [t,s] \cup \langle s, v'\rangle$ is also a geodesic from $u'$ to $v'$. By induction hypothesis, it is within distance $1$ to $\langle u', v' \rangle$. However, this geodesic contains the part of $[u,v]$ on levels with lower numbering than that of $u,v$, which is precisely $[t,s]$. We conclude that $[u,v]$ and $\langle u,v \rangle$ are within distance $1$ to each other.

For $u,v$ on different levels, the induction is essentially the same as in the previous case, but we only need to reason on $t$ on the side of $u$.

This concludes our induction.
\end{proof}

\begin{lem} \label{lem:smallcan}
Let $u$ and $v$ be two vertices at the same level $\ell$ such that 
	$\langle u,v\rangle$ stays at level $\ell$.
Then for any vertex $w$, $\langle u,w\rangle$ and 
	$\langle v, w\rangle$ are within distance $3$ of each other.
\end{lem}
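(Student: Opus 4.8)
The plan is to reduce to an inductive statement and to exploit the fact that, by the recursive definition of the canonical geodesic, the hypothesis ``$\langle u,v\rangle$ stays at level $\ell$'' is exactly the condition $d_R(u,v)\le 3$. So I would prove, by induction on the number of edges of $\langle\tilde u,w\rangle$, the following slightly repackaged claim: \emph{for any vertices $\tilde u,\tilde v$ on a common level with $d_R(\tilde u,\tilde v)\le 3$ and any vertex $w$, the canonical geodesics $\langle\tilde u,w\rangle$ and $\langle\tilde v,w\rangle$ are within distance $3$ of each other.} The lemma is the instance of this claim in which $d_R(\tilde u,\tilde v)\le 3$ is supplied by the hypothesis.

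For the inductive step I would unwind the recursion defining $\langle\cdot,w\rangle$ on the two geodesics in parallel, maintaining the invariant that the current ``anchor'' vertices on the $\tilde u$-side and on the $\tilde v$-side lie on a common level at ring distance $\le 3$; this invariant is preserved by Lemma~\ref{lem:ringed-dist-growth}, which in fact drives the ring distance down to $\le 1$ after two steps of ascent. Two reductions handle all non-base situations. First, if $w$ lies strictly below the anchors, both geodesics begin with the \emph{same} straight descent, from the ancestor of $w$ on the anchors' level down to $w$; peeling off this common suffix leaves anchors on $w$'s level, still at ring distance $\le 3$. Second, in the remaining non-base situations the anchors make matching ``go up'' moves to $p(\tilde u)$ and $p(\tilde v)$: the two anchor-side steps lie within distance $3$ of each other (their endpoints satisfy $d(\tilde u,\tilde v)\le 3$ and $d(p(\tilde u),p(\tilde v))\le 2$), on the $w$-side the two geodesics delete identical pieces (nothing, if $w$ is strictly above the anchors, or the common edge $[p(w),w]$, if $w$ is on the anchors' level with $d_R(\tilde u,w),d_R(\tilde v,w)>3$), and we recurse with anchors at ring distance $\le 2$. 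Each reduction strictly shortens $\langle\tilde u,w\rangle$, so the induction terminates. The base cases are where $w$ sits on the anchors' level: if both $d_R(\tilde u,w)\le 3$ and $d_R(\tilde v,w)\le 3$ the two geodesics are ring paths on that level, and each lies within distance $3$ of the other through $w$; the only remaining case is when exactly one of them, say $\langle\tilde u,w\rangle$, still ascends.

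That last asymmetric base case is where I expect the real work to be, since it does not reduce to a cleaner sub-instance. One direction is immediate: every vertex of $\langle\tilde v,w\rangle$ is within ring distance $3$ of $w\in\langle\tilde u,w\rangle$. For the other direction I would first note that $d_R(\tilde u,w)\le d_R(\tilde u,\tilde v)+d_R(\tilde v,w)\le 6$, so $\langle\tilde u,w\rangle$ ascends by exactly one level (because $d_R(p(\tilde u),p(w))\le\lceil 7/2\rceil=3$) and consists of $\tilde u$, a ring arc of length $\le 3$ one level up, and $w$, while $\langle\tilde v,w\rangle$ is a ring arc of length $\le 3$ on the anchors' level. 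The distance relations among $\tilde u,\tilde v,w$ force $\tilde u$ and $\tilde v$ onto the same arc side of $w$ in cyclic order, so I would place all vertices by ring position and bound the distance from each of the $\le 4$ vertices of the higher-level arc (and from $\tilde u$ itself) to a suitably chosen vertex of $\langle\tilde v,w\rangle$, checking the extreme configuration $d_R(\tilde u,w)=6$, $d_R(\tilde v,w)=3$ by hand. It is precisely here that the stated constant $3$ — rather than the sharper $1$ or $2$ one might hope for elsewhere — provides exactly the slack needed to close the estimate.
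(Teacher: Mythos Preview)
Your proposal is correct and follows essentially the same route as the paper's proof. Both arguments reduce by pushing the pair $(u,v)$ (your ``anchors'') up to their parents, using Lemma~\ref{lem:ringed-dist-growth} to maintain $d_R\le 3$, and both bottom out at the configuration where $w$ (or its level-$\ell$ ancestor $w'$) satisfies $d_R(u,w')\le 3$; the paper dispatches this base case with ``enumerate all the possible cases,'' whereas you spell out the asymmetric sub-case $d_R(\tilde u,w)>3$, $d_R(\tilde v,w)\le 3$ explicitly and correctly. One cosmetic point: your induction variable ``number of edges of $\langle\tilde u,w\rangle$'' is a bit awkward once you pass to ``WLOG $\tilde u$ ascends'' in the asymmetric case (since the roles may have swapped during the descent); inducting on $|\langle\tilde u,w\rangle|+|\langle\tilde v,w\rangle|$, or simply on the level $\ell$ as the paper does, would be cleaner, but the argument is sound as written.
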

\begin{proof}
Let $\ell_0$ and $\ell_0'$ be the highest levels reached by
	$\langle u,w\rangle$ and $\langle v, w\rangle$, respectively.
Without loss of generality, we assume that $\ell_0 \ge \ell_0'$.
Thus $\ell \ge \ell_0$.
We prove the lemma by an induction on $\ell$.

Consider the base case of $\ell = \ell_0$.
Let $w'$ be the ancestor of $w$ at level $\ell_0$.
Thus we know that both $\langle w', u\rangle$ and $\langle u,v\rangle$
	stay at level $\ell_0$.
By definition $d_R(w',u)\le 3$ and $d_R(u,v)\le 3$.
We can enumerate all the possible cases of $u,v,w'$ arrangement to check
	that $\langle w', v \rangle$ is at most one level higher than
	$\ell_0$ (i.e., $\ell_0' \ge \ell_0 -1 $), and $\langle w', v \rangle$ and
	$\langle w', u\rangle$ are within distance $3$ of each other.
Since $\langle u,w\rangle$ and $\langle v, w\rangle$ share the portion
	$\langle w, w'\rangle$, the lemma holds for the case of $\ell=\ell_0$.

For the induction step, consider $\ell > \ell_0$.
Let $u'$ and $v'$ be the parents of $u$ and $v$, respectively.
By Lemma~\ref{lem:ringed-dist-growth} and $d_R(u,v)\le 3$, we know that
	$d_R(u',v') \le 2$.
Then $\langle u',v' \rangle$ must stay at level $\ell-1$.
By induction hypothesis, $\langle u',w \rangle$ and $\langle v',w \rangle$
	are within distance $3$ of each other.
Since $\langle u,w \rangle = \langle u,u' \rangle \cup
	\langle u',w \rangle$, $\langle v,w \rangle = \langle v,v' \rangle \cup
	\langle v',w \rangle$, and $d_R(u,v)\le 3$, we know that the 
	lemma holds in this case.
\end{proof}

We define {\em canonical triangle} $\hat{\Delta}(u,v,w)$ to be 
	a geodesic triangle in which all sides are canonical geodesics. 

\begin{lem}\label{lem:canonical-geodesic-triangle-thin}
Any canonical triangle $ \hat{\Delta}(u,v,w)$ in $RT(k)$ is $3$-slim.
\end{lem}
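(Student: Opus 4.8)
The plan is to prove the statement by induction on $\ell$, the largest level number reached by any of $u,v,w$ (the \emph{deepest} level), the base case $\ell=0$ (all three equal to the root) being trivial. For the inductive step I split into two situations according to whether some two of the three vertices sit on a common level at ring distance at most $3$.

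\emph{Situation A: two of the vertices, say $u,v$, lie on a common level with $d_R(u,v)\le 3$.} Then $\langle u,v\rangle$ stays on that level, so Lemma~\ref{lem:smallcan} gives that $\langle u,w\rangle$ and $\langle v,w\rangle$ are within distance $3$ of each other; this settles slimness for the sides $\langle u,w\rangle$ and $\langle v,w\rangle$. For the remaining side, $\langle u,v\rangle$ is a ring path of length at most $3$, so each of its vertices is within distance $1$ of $\{u,v\}$, and $u\in\langle w,u\rangle$, $v\in\langle v,w\rangle$. No induction is needed here.

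\emph{Situation B: no two vertices share a level at ring distance $\le 3$.} Let $\ell$ be the deepest level. The idea is that the three canonical geodesics all ``peel'' through the parents of the vertices that attain level $\ell$, reducing to a triangle one level shallower. If a unique vertex, say $u$, attains level $\ell$, then since $u$ is strictly deeper than $v$ and $w$, the recursive definition of the canonical geodesic gives $\langle u,v\rangle=[u,u']\cup\langle u',v\rangle$ and $\langle u,w\rangle=[u,u']\cup\langle u',w\rangle$, where $u'$ is the parent of $u$; I then apply the induction hypothesis to $\hat\Delta(u',v,w)$, whose deepest level is $\le\ell-1$. If two vertices $u,v$ attain level $\ell$, then $d_R(u,v)>3$ (we are in Situation B), so $\langle u,v\rangle=[u,u']\cup\langle u',v'\rangle\cup[v',v]$; if $w$ is strictly shallower we likewise have $\langle u,w\rangle=[u,u']\cup\langle u',w\rangle$ and $\langle v,w\rangle=[v,v']\cup\langle v',w\rangle$, while if $w$ too lies on level $\ell$ (hence all three pairwise ring distances exceed $3$) every side splits as a stub edge, a canonical geodesic between parents, and a stub edge. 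In every subcase the induction hypothesis applies to the triangle on the relevant parents, all of which lie on level $\le\ell-1$. To finish: any point on a ``big'' side either lies on one of the short stub edges $[x,x']$, which is also contained in an adjacent big side, or lies on the corresponding side of the reduced triangle, where by the induction hypothesis it is within distance $3$ of the union of the other two reduced sides — and those are contained in the other two big sides. Hence $\hat\Delta(u,v,w)$ is $3$-slim.

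The routine points are: $d_R(u,v)>3$ forces $u'\ne v'$, so the reduced triangles are genuine, degenerate reduced triangles (two vertices coinciding) making the big triangle trivially $0$-slim; and the elementary fact that every vertex of a ring path of length $\le 3$ is within distance $1$ of an endpoint. The main obstacle is organizing Situation B so that the claimed decomposition of each $\langle\cdot,\cdot\rangle$ through the parent of a deepest endpoint holds uniformly across the subcases; this is exactly where the recursive definition of canonical geodesics and Lemma~\ref{lem:ringed-dist-growth} enter, and Situation A (handled separately via Lemma~\ref{lem:smallcan}) is precisely the one configuration — two deepest vertices at ring distance $\le 3$ — in which this peeling fails.
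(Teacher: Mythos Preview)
Your proof is correct and follows essentially the same approach as the paper's: induction on the deepest level $\ell$, handling the ``close pair'' configuration via Lemma~\ref{lem:smallcan} and otherwise peeling off the deepest vertices through their parents. The only difference is organizational---the paper first splits on how many of $u,v,w$ sit at level $\ell$ and then on whether a level-$\ell$ pair has $d_R\le 3$, whereas you split first on whether any same-level pair has $d_R\le 3$ (your Situation~A) and then on the number of deepest vertices; both case analyses cover the same ground.
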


\begin{proof}
Without loss of generality, let $u$ be the vertex at 
	the lowest level, which is $\ell$.
We prove the lemma by an induction on $\ell$.
The base case of $\ell = 0$ is trivial.

Consider the induction step with $\ell > 0$.
If neither $v$ nor $w$ is at level $\ell$, then 
	both $\langle u, v\rangle$ and $\langle u,w\rangle$ go through
	$u$'s parent $u'$.
By induction hypothesis, we know that $ \hat{\Delta}(u',v,w)$ is
	$3$-slim.
Adding $u$ in this case does not change the distance among the sides, and
	thus $ \hat{\Delta}(u,v,w)$ is also $3$-slim.
Suppose now that $v$ or $w$ or both are at level $\ell$.
In the first case, suppose that at least one pair, say $u$ and $v$, is such that
	$\langle u, v\rangle$ stays at level $\ell$.
By Lemma~\ref{lem:smallcan}, we know that $\langle u, w\rangle$ and
	$\langle v, w\rangle$ are within distance $3$ of each other.
Since $\langle u, v\rangle$ has length at most $3$, we know that
	$ \hat{\Delta}(u,v,w)$ is $3$-slim.
In the second case, suppose that all pairs at level $\ell$ have
	their canonical geodesics into level $\ell -1$.
For each vertex at level $\ell$, we take its parent, together with perhaps
	another vertex already within level $\ell-1$, we can apply the
	induction hypothesis and show that their canonical triangle is
	$3$-slim.
Since for every vertex at level $\ell$, its canonical geodesics
	to the other two vertices all go through its parent, 
	the vertices at level $\ell$ do not change the distance between
	any pair of sides of the canonical triangle.
Therefore, $ \hat{\Delta}(u,v,w)$ is $3$-slim.
\end{proof}



With Lemmas~\ref{lem:rectified-geodesic} 
	and~\ref{lem:canonical-geodesic-triangle-thin}, we can provide
	a direct proof of Corollary~\ref{cor:ringed-tree-constant}.
\begin{proof}[Direct proof of Corollary~\ref{cor:ringed-tree-constant}]
We use Rips condition here. 
For any $u,v,w$, 
By Lemma \ref{lem:rectified-geodesic}, we have
	$[u,v] \subseteq B(\langle u,v \rangle,1)$.
By Lemma \ref{lem:canonical-geodesic-triangle-thin}, we have
	$\langle u,v \rangle \subseteq B(\langle u,w \rangle \cup 
	\langle v,w \rangle , 3)$.
By Lemma \ref{lem:rectified-geodesic} again, we have
	$\langle u,w \rangle \cup \langle v,w \rangle \subseteq 
	B([u,w] \cup [v,w],1)$. 
Therefore, we conclude that $[u,v] \subseteq B([u,w] \cup [v,w],5)$, 
	and thus $\delta_{Rips}(RT(k))\le 5$.
Since $\delta(RT(k)) \le 8\delta_{Rips}(RT(k))$, we have
	$\delta(RT(k))\le 40$.
\end{proof}


\subsubsection{Proof of Theorem~\ref{thm:random-ringed-tree-delta}}

We now apply Rips condition to analyze the hyperbolicity of ringed trees
	with limited long-range edges $RT(k,f)$.
First we show that a geodesic in $RT(k,f)$ 
	cannot make too many hops at the same level of the ringed tree.
For any two vertices in a graph $G \in RT(k,f)$, let $[u,v]$
	denote any one of the geodesics between $u$ and $v$ in $G$, while
	$\langle u, v \rangle$ still denote the canonical 
	geodesic between $u$ and $v$ in
	the base ringed tree $RT(k)$.


\begin{lem}\label{lem:dist-estm-by-ring-dist}
For any $u,v$ on the same level in $RT(k)$ with $d(u,v) > 1$, $2\log_2 d_R(u,v) \leq d(u,v) \leq 2\log_2 (d_R(u,v)-1) + 2$.
\end{lem}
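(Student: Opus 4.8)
The plan is to prove both inequalities simultaneously by induction on the common level $\ell$ of $u$ and $v$ (equivalently, on $d_R(u,v)$), using three facts already in hand: Lemma~\ref{lem:ringed-dist-growth}, which says the ring distance at most halves (up to an additive $1$) when passing to parents; Lemma~\ref{lem:ringed-tree-flipping}, which lets us assume that a geodesic leaving level $\ell$ does so through the parent edges of $u$ and of $v$; and Lemma~\ref{lem:canonical-geodesics-well-defined}, which identifies the canonical geodesic $\langle u,v\rangle$ as an actual geodesic. The \emph{base case} is the situation in which some geodesic $[u,v]$ stays entirely on level $\ell$: there $d(u,v)=d_R(u,v)$ (the shorter arc of the ring), and both inequalities reduce to the elementary estimates $2\log_2 x \le x \le 2\log_2(x-1)+2$ for the relevant range of integers $x=d_R(u,v)\ge 2$.

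For the \emph{upper bound} I would argue along the canonical geodesic. By construction $\langle u,v\rangle$ climbs from $u$ and from $v$ to ancestors $\hat u,\hat v$ on a common level $\ell-h$, joins them by at most $3$ ring edges, and descends; by repeated application of Lemma~\ref{lem:ringed-dist-growth} the ancestor ring distance satisfies $d_R(\hat u_i,\hat v_i)-1 \le (d_R(u,v)-1)/2^i$, so the recursion reaches its base case (ring distance $\le 3$) after $h = O(\log_2 d_R(u,v))$ levels. Counting the two vertical runs of length $h$ plus the $\le 3$ ring edges, and invoking Lemma~\ref{lem:canonical-geodesics-well-defined} to conclude $d(u,v)=|\langle u,v\rangle|$, gives $d(u,v)\le 2\log_2(d_R(u,v)-1)+2$. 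This also drops out of the induction directly: when the geodesic uses the parent edges, $d(u,v)=d(u',v')+2$ with $u',v'$ the parents, and $d_R(u',v')-1\le (d_R(u,v)-1)/2$ by Lemma~\ref{lem:ringed-dist-growth}, so the inductive hypothesis $d(u',v')\le 2\log_2(d_R(u',v')-1)+2$ propagates the bound up a level.

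For the \emph{lower bound} I would take an arbitrary geodesic $[u,v]$; since its level sequence is reversed unimodal (a consequence of Lemma~\ref{lem:ringed-dist-growth} noted after that lemma), it reaches a topmost level $\ell-h$, hence uses at least $h$ up-edges and at least $h$ down-edges, so $d(u,v)\ge 2h$. Conversely, because descending one level at most doubles a ring distance (up to $+1$), tracking the descendant-interval of the current vertex shows $d_R(u,v)\le 2^h\cdot O(1)$, which forces $h\ge \log_2 d_R(u,v)-O(1)$ and therefore $d(u,v)\ge 2\log_2 d_R(u,v)-O(1)$. Equivalently, in the inductive formulation: when the geodesic climbs, $d(u,v)=d(u',v')+2\ge 2\log_2 d_R(u',v')+2\ge 2\log_2\big((d_R(u,v)-1)/2\big)+2 = 2\log_2(d_R(u,v)-1)$, using $d_R(u',v')\ge (d_R(u,v)-1)/2$ (the other half of the parent/child ring-distance relation), with the base case supplying the stronger flat-ring estimate.

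\textbf{Expected main obstacle.} The two structural relations — ring distances halving under "parent" and doubling under "child" — are only exact up to an additive $\pm 1$, and in the lower-bound argument the ring moves performed at the top of the geodesic contribute a further multiplicative term; the delicate part is bookkeeping these errors so that the accumulated constant is exactly the one claimed (in particular, that the lower bound is not weakened below $2\log_2 d_R(u,v)$). A secondary subtlety is that the topmost vertex of an arbitrary geodesic need not be a common ancestor of $u$ and $v$ when ring moves are interleaved with vertical moves; I would remove this by first reducing to the canonical geodesic via Lemma~\ref{lem:ringed-tree-flipping} (and Lemma~\ref{lem:rectified-geodesic}), after which the topmost vertices are genuine ancestors and the interval-descent estimate is clean.
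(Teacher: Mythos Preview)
Your approach is essentially the paper's: induct on $d_R(u,v)$, pass to the parents $u',v'$, and use the halving of ring distance together with the canonical-geodesic structure. The paper's proof is simply the terse version of your ``inductive formulation'': for $d_R(u,v)>3$ the canonical geodesic gives the \emph{equality} $d(u,v)=d(u',v')+2$ (via Lemma~\ref{lem:canonical-geodesics-well-defined}), and then both bounds follow from the inductive hypothesis and the appropriate parent/child ring-distance inequality.

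Two small remarks. First, your level-counting lower bound and your ``secondary subtlety'' about reducing arbitrary geodesics via Lemma~\ref{lem:ringed-tree-flipping} are unnecessary detours: once $d(u,v)=d(u',v')+2$ is in hand you never need to look at a non-canonical geodesic, since the lemma is a statement about distances only. Second, for the lower bound the paper uses the sharper inequality $d_R(u',v')\ge d_R(u,v)/2$ (the parent of position $p$ is $\lfloor p/2\rfloor$, so the parent ring distance is $\lfloor d_R(u,v)/2\rfloor$ or $\lceil d_R(u,v)/2\rceil$), which is exactly what is needed to land on $2\log_2 d_R(u,v)$; your use of $(d_R(u,v)-1)/2$ only yields $2\log_2(d_R(u,v)-1)$, the gap you flagged in your ``expected obstacle''.
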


\begin{proof}
We perform induction on $d_R(u,v)$. If $1 < d_R(u,v) \leq 3$, we can check that it is correct. If $d_R(u,v) > 3$, let $u', v'$ be parents of $u,v$ respectively. By induction hypothesis, $2\log_2 d_R(u',v') \leq d(u',v') \leq 2\log_2 (d_R(u',v')-1) + 2$. 
By triangle inequality, $d(u,v) \le d(u',v') + 2 \leq 2\log_2 (d_R(u',v')-1) + 4$. But $d_R(u',v') - 1 \leq (d_R(u,v)-1)/2$ by Lemma \ref{lem:ringed-dist-growth}. And we have $d(u,v) \leq 2\log_2 (d_R(u',v')-1) + 4 \leq 2\log_2 (d_R(u,v)-1) + 2$. On the other hand, $d(u,v) \geq 2 + d(u',v') \geq 2\log_2 (2d_R(u',v'))$. As $d_R(u',v') \geq d_R(u,v)/2$, we have $d(u,v) \geq 2\log_2 d_R(u,v)$. Combining this two inequalities concludes the induction.
\end{proof}

\begin{coro}\label{coro:dist-estm-by-ring-dist}
For any $u,v$ on the same level in $RT(k)$, $d(u,v) \leq 2\log_2 d_R(u,v) + 2$.
\end{coro}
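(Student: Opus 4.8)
The plan is to deduce the corollary directly from Lemma~\ref{lem:dist-estm-by-ring-dist} by splitting on whether $d(u,v) > 1$. In the main case $d(u,v) > 1$, the lemma already supplies $d(u,v) \le 2\log_2(d_R(u,v)-1) + 2$, so all that is left is to absorb the $-1$: since $\log_2$ is increasing and $d_R(u,v)-1 \le d_R(u,v)$, this gives $d(u,v) \le 2\log_2 d_R(u,v) + 2$ at once. One small point worth recording is that the hypothesis $d(u,v) > 1$ already forces $d_R(u,v) \ge 2$ --- two vertices at ring distance $1$ are joined by a ring edge and hence are at graph distance $1$ --- so that $d_R(u,v)-1 \ge 1$ and the logarithm appearing in the lemma is well defined and nonnegative.

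It then remains to handle the boundary case $d(u,v) \le 1$, which Lemma~\ref{lem:dist-estm-by-ring-dist} excludes. If $u=v$ the statement is vacuous (and $d_R$ is only meaningful for distinct vertices). If $d(u,v)=1$, then $u$ and $v$ are adjacent in $RT(k)$; as they lie on the same level, the only edges between same-level vertices are ring edges, so $d_R(u,v)=1$ and the right-hand side equals $2\log_2 1 + 2 = 2 \ge 1 = d(u,v)$. This completes the argument.

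I expect no real obstacle here: the corollary is a purely cosmetic strengthening of Lemma~\ref{lem:dist-estm-by-ring-dist} that removes the side condition $d(u,v)>1$ and replaces $d_R(u,v)-1$ by the tidier $d_R(u,v)$ inside the logarithm. The only place that needs a second's thought is that $\log_2(d_R(u,v)-1)$ degenerates to $\log_2 0$ exactly when $d_R(u,v)=1$, and this is precisely the configuration forced by $d(u,v)\le 1$, which the elementary remark that same-level edges are ring edges disposes of. The clean form $d(u,v)\le 2\log_2 d_R(u,v)+2$ is the one that will streamline the later Rips-condition estimates for the augmented graphs $RT(k,f)$.
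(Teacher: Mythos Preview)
Your proof is correct and follows essentially the same approach as the paper: check the boundary case $d(u,v)=1$ directly (where $d_R(u,v)=1$ and the bound reads $2\ge 1$), and for $d(u,v)>1$ apply Lemma~\ref{lem:dist-estm-by-ring-dist} together with the monotonicity $\log_2(d_R(u,v)-1)\le \log_2 d_R(u,v)$. Your extra remark that $d(u,v)>1$ forces $d_R(u,v)\ge 2$ (so the logarithm in the lemma is well defined) is a nice point the paper leaves implicit.
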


\begin{proof}
We check for the case $d(u,v)=1$ and it is satisfied. 
For $d(u,v)>1$, this results directly from $d(u,v) \leq 2\log_2 (d_R(u,v)-1) + 2 \leq 2\log_2 d_R(u,v) + 2$.
\end{proof}

\begin{lem}\label{lem:limited-jumps}
For a graph in the class $RT(k,f)$ and two vertices
	$u$ and $v$ at the same level $j$, 
	if $[u,v]$ never goes into vertices at level $i<j$, 
	then $d(u,v) \leq \max(32,4\log_2 f(n))-1$ and 
	$d_R(u,v) \leq f(n)\max(32,4\log_2 f(n))-1$.
\end{lem}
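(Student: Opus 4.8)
Let $G \in RT(k,f)$, let $[u,v] = (u = w_0, w_1, \ldots, w_d = v)$ be the geodesic in the statement, and set $d := d(u,v)$. By hypothesis every $w_i$ lies on a level $\geq j$, so each $w_i$ has a well-defined ancestor $\pi(w_i)$ at level $j$, with $\pi(w_0) = u$ and $\pi(w_d) = v$. The plan is to first bound $d_R(u,v)$ in terms of $d$ and $f(n)$ by tracking how far the projection $\pi$ moves along the geodesic, and then to feed that estimate into the bound $d(u,v) \le 2\log_2 d_R(u,v) + 2$ of Corollary~\ref{coro:dist-estm-by-ring-dist} to obtain a self-referential inequality for $d$.

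For the first step, I claim $d_R(\pi(w_i), \pi(w_{i+1})) \le f(n)$ for every edge of the geodesic. A tree (parent--child) edge makes one endpoint an ancestor of the other, so $\pi(w_i) = \pi(w_{i+1})$ and the displacement is $0$. A ring edge joins two vertices at a common level $\ell \ge j$ at ring distance $1$, and a long-range edge joins two vertices at a common level $\ell \ge j$ at ring distance $\le f(n)$ by the defining property of $RT(k,f)$; in either case, applying Lemma~\ref{lem:ringed-dist-growth} $\ell - j$ times to pass to the level-$j$ ancestors only contracts the ring distance (the map $x \mapsto (x+1)/2$ fixes $1$ and lies below the identity for $x \ge 1$), so the displacement stays $\le f(n)$. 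Summing over the $d$ edges and using the triangle inequality for $d_R$ at level $j$ yields $d_R(u,v) \le d \cdot f(n)$.

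Plugging this into Corollary~\ref{coro:dist-estm-by-ring-dist} (which bounds distances in the base ringed tree $RT(k)$, hence upper-bounds distances in $G$, since $G$ retains all ringed-tree edges) gives $d \le 2\log_2 d_R(u,v) + 2 \le 2\log_2(d\, f(n)) + 2 = 2\log_2 d + 2\log_2 f(n) + 2$. A short elementary argument finishes: split on whether $4\log_2 f(n) \le 32$, and in each case use that $d - 2\log_2 d$ is increasing and already exceeds the relevant right-hand constant at $d = 32$, so the inequality forces $d < \max(32, 4\log_2 f(n))$; since $d$ is a nonnegative integer, $d(u,v) \le \max(32, 4\log_2 f(n)) - 1$. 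Finally $d_R(u,v) \le d(u,v)\, f(n) \le f(n)\max(32, 4\log_2 f(n)) - f(n) \le f(n)\max(32, 4\log_2 f(n)) - 1$, using $f(n) \ge 1$.

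The step I expect to be the crux is the projection bound: one must notice that the hypothesis ``$[u,v]$ never enters a level above $j$'' is exactly what makes the level-$j$ ancestor map defined on the entire geodesic, and one must handle long-range edges occurring at levels strictly deeper than $j$ via the monotone contraction of Lemma~\ref{lem:ringed-dist-growth} rather than just the level-$j$ case. The closing self-referential inequality is where the particular constants $32$ and $4$ emerge, but it is routine arithmetic.
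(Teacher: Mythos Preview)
Your approach is essentially the paper's: both establish $d_R(u,v) \le d(u,v)\cdot f(n)$ (the paper phrases it tersely as ``each long-range edge can jump at most $f(n)$ hops on the ring at any level,'' which your projection argument makes precise), then combine with Corollary~\ref{coro:dist-estm-by-ring-dist} to get the self-referential inequality $x \le 2\log_2(xf(n))+2$ and solve. One small slip: from ``$d < \max(32,4\log_2 f(n))$'' plus integrality you cannot conclude ``$d \le \max(32,4\log_2 f(n))-1$'' when $4\log_2 f(n)$ is not an integer; the paper instead rewrites the inequality as $f(n)\ge 2^{(x-2)/2}/x \ge 2^{(x+1)/4}$ for $x\ge 32$, which yields $x \le 4\log_2 f(n)-1$ directly without an integrality step.
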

\begin{proof}
First, we will prove $d_R(u,v)/f(n) \leq d(u,v) \leq 2\log_2 d_R(u,v) + 2$.
The first inequality is because the shortest distance from $u$ to $v$
	without going into vertices in level $i < j$ in the base ringed
	tree $RT(k)$ is $d_R(u,v)$, and each long-range edge can jump at most
	$f(n)$ hops on the ring at any level.
The second inequality results directly from Corollary \ref{coro:dist-estm-by-ring-dist}.
Let $ x=d(u,v)$.
From the above two inequalities,  
	we have $x \leq 2\log_2 xf(n) + 2$.
Then, $f(n) \geq 2^{(x-2)/2}/x \geq 2^{(x+1)/4}$ for $x \geq 32$, 
	and thus $x=d(u,v) \leq \max(32,4\log_2 f(n))-1$. 
It follows that $d_R(u,v) \leq f(n)\max(32,4\log_2 f(n))-1$.
\end{proof}



\begin{lem} \label{lem:loglocal}
For a graph in the class $RT(k,f(n))$ and two
	vertices $u$ and $v$, $[u,v]$ and $\langle u,v\rangle$ 
	are within distance $2\max(32,4\log_2 f(n))$ to each other.
\end{lem}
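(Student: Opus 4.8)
The plan is to prove Lemma~\ref{lem:loglocal} by induction on the level of the deeper of the two endpoints (say $u$ lies on level $\ell$), establishing both inclusions $[u,v]\subseteq B(\langle u,v\rangle,2M)$ and $\langle u,v\rangle\subseteq B([u,v],2M)$ at once, where I abbreviate $M=\max(32,4\log_2 f(n))$. The base case $\ell=0$ is trivial, and more generally the situation in which the geodesic $[u,v]$ never rises above level $\ell$ is a clean base case: Lemma~\ref{lem:limited-jumps} gives $d(u,v)\le M-1$ and $d_R(u,v)\le f(n)M-1$, so $[u,v]$ has length at most $M-1$, while by Corollary~\ref{coro:dist-estm-by-ring-dist} (using $RT(k)\subseteq RT(k,f)$, so that $\langle u,v\rangle$ is a path of length $d_{RT(k)}(u,v)$) the canonical geodesic has length at most $2\log_2(f(n)M)+2$, which a short arithmetic check shows is at most $M$. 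Since $u$ lies on both paths, both inclusions follow in this case.

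For the inductive step, consider the maximal initial portion of $[u,v]$ that stays on level $\ell$, ending at a vertex $y$, followed by the step up to its parent; let $t$ be the first vertex of $[u,v]$ on level $\ell-1$. The sub-path $[u,y]$ is a geodesic between two level-$\ell$ vertices that does not rise above level $\ell$, so Lemma~\ref{lem:limited-jumps} gives it length $\le M-1$ and $d_R(u,y)\le f(n)(M-1)$; then Lemma~\ref{lem:ringed-dist-growth} yields $d_R(u',t)\le f(n)M/2+1$ where $u'$ is the parent of $u$, and by Corollary~\ref{coro:dist-estm-by-ring-dist} $t$ lies within $2\log_2(f(n)M)+2\le M$ of $u'$, which is a vertex of $\langle u,v\rangle$. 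Iterating this one-level analysis along the ascending part of $[u,v]$ and telescoping the ring-distance bounds — if $R_i$ denotes the ring distance at level $\ell-i$ between the geodesic and the corresponding ancestor $u^{(i)}$ of $u$, then $R_{i+1}\le (R_i+f(n)M)/2$, so $R_i\le f(n)M$ for all $i$ — shows that every vertex on the ascending part of $[u,v]$ is within $M$ of a spine vertex $u^{(i)}$ of $\langle u,v\rangle$ (plus at most $M-1$ more from being inside a horizontal run), hence within $2M$ of $\langle u,v\rangle$; the descent toward $v$ is symmetric and the short top ``platform'' of $\langle u,v\rangle$ is handled directly. Conversely each spine vertex $u^{(i)}$ of $\langle u,v\rangle$ is within $M$ of the point where $[u,v]$ first reaches level $\ell-i$, giving the reverse inclusion with room to spare.

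The main obstacle is precisely this reconciliation across levels. A naive one-level peeling would like to invoke the induction hypothesis on $[t,s]$ versus $\langle t,s\rangle$ for the appropriate interior endpoints, but the geodesic may enter level $\ell-1$ at a vertex $t$ whose ring distance from where $\langle u,v\rangle$ crosses level $\ell-1$ is as large as $\Theta(f(n)M)$, so $\langle t,s\rangle$ and the relevant sub-path of $\langle u,v\rangle$ are not close pointwise and a direct recursion would exhaust the $2M$ budget. The fix, which is the technical heart, is to compare not with the parent $u'$ but with a high enough ancestor of $u$: since $t$ is within ring distance $O(f(n)M)$ of $u$, it shares a common ancestor with $u$ only $O(\log(f(n)M))=O(M)$ levels up, and that ancestor lies on $\langle u,v\rangle$. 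One also needs the observation that a geodesic in $RT(k,f)$ cannot ascend more than $O(1)$ levels past where $\langle u,v\rangle$ turns around, for otherwise its length would exceed $d_{RT(k)}(u,v)=\lvert\langle u,v\rangle\rvert$. A minor additional nuisance is that geodesics in $RT(k,f)$ need not have strictly reversed-unimodal level sequences, but any downward dip is itself a geodesic between two vertices on a common level not rising above it, hence of length $\le M-1$ by Lemma~\ref{lem:limited-jumps}, and so is harmlessly within $2M$ of everything; thus it does not affect the bound. The final step is to keep careful track of the additive constants so that the accumulated estimates really land at $2M$ rather than $2M+O(1)$, which is why the statement uses the specific constant $\max(32,4\log_2 f(n))$.
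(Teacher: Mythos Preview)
Your core argument is the same as the paper's: track, level by level, the ring distance between the first point where $[u,v]$ reaches level $\ell-i$ and the $i$-th ancestor $u^{(i)}$ of $u$ on $\langle u,v\rangle$; use Lemma~\ref{lem:limited-jumps} for the horizontal run at each level and Lemma~\ref{lem:ringed-dist-growth} for the step up, obtaining exactly your recursion $R_{i+1}\le (R_i+f(n)M)/2$ with invariant $R_i\le f(n)M$; then convert via Corollary~\ref{coro:dist-estm-by-ring-dist}. The paper does the symmetric tracking from the $v$ side and then handles the remaining ``platform'' between the two meeting points at the top level $\ell_0^{[\,]}$ by bounding its ring-length and hence its graph-length by $M$.

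Two remarks on presentation. First, your framing as ``induction on the level of the deeper endpoint'' is a red herring: you never actually invoke that hypothesis, and indeed you yourself explain in the obstacle paragraph why reducing to $\langle t,s\rangle$ fails. What you really do (and what the paper does) is an induction along the levels visited by $[u,v]$, comparing always to the fixed canonical geodesic $\langle u,v\rangle$ rather than to a shorter canonical geodesic. It would read more cleanly to drop the outer induction and state the level-tracking argument directly. Second, the paper treats separately the case that $[u,v]$ ascends \emph{above} the top of $\langle u,v\rangle$: since $\langle u,v\rangle$ uses at most three ring edges, $[u,v]$ can overshoot by at most one level, and in that unique configuration the two paths are within distance~$1$. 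Your ``$O(1)$ levels'' remark is correct in spirit but you should make this precise, since your ancestor-tracking only makes sense while $u^{(i)}$ actually lies on $\langle u,v\rangle$. Likewise, your ``handled directly'' for the top platform should be spelled out: bound $d_R$ between the two top meeting points by $3f(n)M$ (triangle inequality through the two tracked points and the short horizontal run of $[u,v]$ at level $\ell_0^{[\,]}$), then apply Corollary~\ref{coro:dist-estm-by-ring-dist}.
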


\begin{proof}
The case of $u$ and $v$ are ancestor and descendant to each other
	on the tree are trivial.
Thus we consider the case that $u$ and $v$ are not 
	ancestor and descendant to each other.
Let $\ell_0^{\langle\rangle}$ and  $\ell_0^{[\,]}$ be the innermost level that 
	geodesics $\langle u,v\rangle$ and $[u,v]$ reach, respectively.
If $\ell_0^{[\,]} < \ell_0^{\langle\rangle}$, then
	$[u,v]$ uses at least two more tree edges than $\langle u,v\rangle$.
This means $\langle u,v\rangle$ uses at least two ring edges.
If $\langle u,v\rangle$ uses exactly two ring edges, 
	all edges in $[u,v]$ must be tree edges, which is impossible
	to be a geodesic given that $u$ and $v$ are not 
	ancestor and descendant to each other.
Thus, $\langle u,v\rangle$ uses exactly three ring edges.
Then $[u,v]$ uses exactly one ring or long-range edge.
Let $u'$ and $v'$ be the ancestors of $u$ and $v$ at level 
	$\ell_0^{\langle\rangle}$, respectively.
In this case, the only possible situation is : 
	(a) $\ell_0^{[\,]} = \ell_0^{\langle\rangle}-1$, and
	(b) the parents of $u'$ and $v'$ are connected either by a ring edge
	or a long-range edge, which is used by $[u,v]$.
Hence, $[u,v]$ and $\langle u,v\rangle$ share the tree edges
	from $u$ to $u'$ and $v'$ to $v$, and $\langle u,v\rangle$
	goes through ring edges from $u'$ to $v'$ while 
	$[u,v]$ goes through the edge connecting the parents of $u'$ and $v'$.
We thus have that $[u,v]$ and $\langle u,v\rangle$ are within
	distance $1$ of each other.
Therefore, from now on, we consider $\ell_0^{[\,]} \ge \ell_0^{\langle\rangle}$.

Let $\ell_u$ and $\ell_v$ be
	the levels of $u$ and $v$ respectively. 
Without loss of generality, We can suppose that $\ell_u \geq \ell_v \geq 
	\ell_0^{[\,]}$.

For any level $\ell$ both reachable by $[u,v]$ and $\langle u,v\rangle$,
	i.e., $\ell_0^{[\,]} \le \ell \le \ell_u$, 
	let $x_\ell$ be the first level-$\ell$ vertex on 
	geodesic $[u,v]$ starting from $u$, and let $y_\ell$ be the first 
	level-$\ell$ vertex
	on geodesic $\langle u,v\rangle$ starting from $u$.
We claim that $d_R(x_\ell,y_\ell) \le f(n)\max(32,4\log_2 f(n))$.


For level $\ell_u$, it is trivial. 
Suppose that our claim is correct for some level $\ell > \ell_0^{[\,]}$, 
	and we inductively
	prove the claim for level $\ell-1$.
By the induction hypothesis, 
	we know that $d_R(x_\ell,y_\ell) \leq f(n)\max(32,4\log_2 f(n))$. 
Let $x'_\ell$ be the level-$\ell$ vertex just before $x_{\ell-1}$ on $[u,v]$ 
	starting from $u$. 
By Lemma \ref{lem:limited-jumps} and the fact that the portion $[x_\ell,x'_\ell]$ 
	never goes to level $i < \ell$, we have
	$d_R(x_\ell,x'_\ell) \leq f(n)\max(32,4\log_2 f(n))-1$ . 
Therefore, $d_R(x'_\ell,y_\ell) \leq d_R(x_\ell,y_\ell) + d_R(x_\ell,x'_\ell) \leq 2f(n)\max(32,4\log_2 f(n))-1$.
Since $x_{\ell-1}$ and $y_{\ell-1}$ are 
	the parents of $x'_\ell$ and $y_\ell$ respectively,
	by Lemma \ref{lem:ringed-dist-growth} we have
	$d_R(x_{\ell-1},y_{\ell-1})$ $\leq $ $f(n)\max(32,4\log_2 f(n))$. 
Our claim holds for level $\ell - 1$. By induction, our claim stands.

We thus have $d(x_{\ell},y_{\ell}) \leq 2\log_2(d_R(x_{\ell},y_{\ell})) + 
	2 \leq \max(32,4\log_2 f(n))$, for all $\ell_0^{[\,]} \le \ell \le \ell_u$. 
For any vertex $x$ between $x_\ell$ and $x'_{\ell}$ on $[u,v]$ (note
	that $x$ may be at a level $\ell'\ge \ell$),
	by Lemma \ref{lem:limited-jumps} we have
	$d(x,y_\ell) \le d(x,x_{\ell}) + d(x_{\ell},y_{\ell}) \le
	d(x'_{\ell},x_{\ell}) + d(x_{\ell},y_{\ell}) \le 2\max(32,4\log_2 f(n))$.
Hence all such vertices $x$ are within distance $2\max(32,4\log_2 f(n))$
	from vertex $y_\ell$ in $\langle u,v\rangle$, and vice versa.

Similarly, we can define $z_\ell$ to be the the first level-$\ell$ vertex on 
	geodesic $[v,u]$ starting from $v$, and  $w_\ell$ to be the first 
	level-$\ell$ vertex
	on geodesic $\langle v,u\rangle$ starting from $v$, for
	all $\ell_0^{[\,]} \le \ell \le \ell_v$.
By a symmetric argument, we can show that $w_\ell$ are within 
	distance $2\max(32,4\log_2 f(n))$ from all vertices in 
	the segment of $[u,v]$ from $z_{\ell-1}$ to $z_{\ell}$ for 
	$\ell > \ell_0^{[\,]}$, and $d(z_{\ell_0^{[\,]}}, w_{\ell_0^{[\,]}}) \le
	\max(32,4\log_2 f(n))$.

The only portion left to argue is from $x_{\ell_0^{[\,]}}$ to $z_{\ell_0^{[\,]}}$
	in geodesic $[u,v]$, and from $y_{\ell_0^{[\,]}}$ to $w_{\ell_0^{[\,]}}$ in
	geodesic $\langle u,v\rangle$.
By Lemma \ref{lem:limited-jumps} and the definition of $\ell_0^{[\,]}$, 
	we know that 
	$d(x_{\ell_0^{[\,]}}, z_{\ell_0^{[\,]}}) \le
	\max(32,4\log_2 f(n))$.
Therefore, all vertices in the segment from $x_{\ell_0^{[\,]}}$ to $z_{\ell_0^{[\,]}}$
	in geodesic $[u,v]$ are within $2\max(32,4\log_2 f(n))$ to
	both $y_{\ell_0^{[\,]}}$ and $w_{\ell_0^{[\,]}}$.
Now, for any vertex $x$ in the segment from $y_{\ell_0^{[\,]}}$ to $w_{\ell_0^{[\,]}}$ in
	geodesic $\langle u,v\rangle$, we need to bound the distance from 
	$x$ to $[u,v]$.
Since $d_R(y_{\ell_0^{[\,]}},w_{\ell_0^{[\,]}}) \le
	d_R(y_{\ell_0^{[\,]}},x_{\ell_0^{[\,]}}) + d_R(x_{\ell_0^{[\,]}},z_{\ell_0^{[\,]}}) + 
	d_R(z_{\ell_0^{[\,]}},w_{\ell_0^{[\,]}}) \le 3 f(n)\max(32,4\log_2 f(n))$,
	we have $d(y_{\ell_0^{[\,]}},w_{\ell_0^{[\,]}})  \le
	2\log_2d_R(y_{\ell_0^{[\,]}},w_{\ell_0^{[\,]}}) + 2\le \max(32,4\log_2 f(n))$.
Thus, for any vertex $x$  in the segment from $y_{\ell_0^{[\,]}}$ to 
	$w_{\ell_0^{[\,]}}$ in geodesic $\langle u,v\rangle$, $x$ can 
	reach either $x_{\ell_0^{[\,]}}$ and $z_{\ell_0^{[\,]}}$ in
	at most $2\max(32,4\log_2 f(n))$
	hops.
Therefore, $x$ is within distance $2\max(32,4\log_2 f(n))$ from $[u,v]$.
\end{proof}

\begin{proof}[Proof of Theorem~\ref{thm:random-ringed-tree-delta}]
We use Rips condition here. For any $u,v,w$, 
	by Lemma~\ref{lem:loglocal}, we have
	$[u,v] \subseteq B(\langle u,v \rangle,$ $ 2\max(32,4\log_2 f(n)))$.
By Lemma \ref{lem:canonical-geodesic-triangle-thin}, 
	$\langle u,v \rangle \subseteq B(\langle u,w \rangle \cup 
	\langle v,w \rangle , 3)$.
Again by Lemma~\ref{lem:loglocal}, 	$\langle u,w \rangle \cup 
	\langle v,w \rangle \subseteq B([u,w] \cup [v,w],2\max(32,4\log_2 f(n)))$.
Combining these together, we have
	$[u,v] \subseteq B([u,w] \cup [v,w],4\max(32,4\log_2 f(n))+3)$.
Therefore, since $\delta$ and $\delta_{Rips}$ differ within a constant factor, we have $\delta(RT(k,f)) \le c \log f(n)$ for some constant $c$.
\end{proof}

\subsubsection{Proof of Theorem~\ref{thm:randomRT}}




We first analyze the $\delta$-hyperbolicity of $RRT(k,e^{-\alpha d_R(u,v)})$.

\begin{proof}[Proof of Theorem \ref{thm:randomRT}, part 1]
$e^{\alpha} \leq \rho=\sum_{v \in V \setminus \{u\}}e^{-d_R(u,v)\alpha} \leq 2\sum_{i=1}^{+\infty}e^{i\alpha}$. 
Therefore, $\rho=\Theta(1)$. 
A vertex $u$ on the leaves of $RT(k)$ 
	has a long-range edge with ring distance greater than $k$ with 
	probability $\Theta(e^{-k\alpha})/\rho$. 
Let $k=\frac{2}{\alpha}\log n$, we know that a vertex 
	has a long-range edge of ring distance greater 
	than $\frac{2}{\alpha}\log n$ with probability $\Theta(1/n^2)=o(1/n)$.

Therefore, with probability $1-o(1)$, 
	long-range edges never exceed ring distance $\frac{2}{\alpha}\log n$. 
From Theorem \ref{thm:random-ringed-tree-delta} 
	it follows that
	$\delta(RRT(k,e^{-\alpha d_R(u,v)}))=O(\log \log n)$, for any $\alpha > 0$.
\end{proof}




For the case of $\delta(RRT(k,d_R(u,v)^{-\alpha}))$, 
	we first look at a lemma about the effect of long-range edges
	with ring distance $\Omega(n^c)$ 
	on $\delta$-hyperbolicity of ringed trees.

\begin{lem} \label{lem:large-jumping}
If we add an edge between $u$ and $v$ 
	on the outermost ring of a $k$ level ringed tree with 
	$d_R(u,v) \geq c'n^{c}$, for some constants $c$ and $c'$,
	then the resulted graph $G$ (possibly with other edges on the outermost 
	ring) has $\delta(G) = \Omega(\log n)$.
\end{lem}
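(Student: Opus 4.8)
The plan is to produce four vertices of $G$ whose Gromov four‑point value is already $\Omega(\log n)$; since $\delta(G)$ is the supremum of this quantity over all quadruples, that suffices. (This is the configuration behind the geodesic triangle mentioned in Section~\ref{sec:rtoutline}.) I would work with the canonical geodesic $\langle u,v\rangle$ in the base ringed tree $RT(k)$. Let $\ell_r$ be the level at which $\langle u,v\rangle$ stops ascending, i.e.\ the deepest level whose ancestors $\hat u,\hat v$ of $u,v$ satisfy $d_R(\hat u,\hat v)\le 3$; then the ancestors one level deeper have ring distance at least $4$. Put $L=(k-1)-\ell_r$ and $\ell_C=\ell_r+\lceil L/2\rceil$, and take $C$ (resp.\ $D$) to be the ancestor of $u$ (resp.\ of $v$) at level $\ell_C$; the quadruple will be $\{u,\hat u,C,D\}$. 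Writing $r_i$ for the ring distance between the level-$(\ell_r+i)$ ancestors of $u$ and $v$, Lemma~\ref{lem:ringed-dist-growth} gives $2r_{i-1}-1\le r_i\le 2r_{i-1}+1$; with $r_0\le 3$ and $r_1\ge 4$ this forces $2^i<r_i\le 2^{i+2}$ for $i\ge 1$, and taking $i=L$ with $d_R(u,v)=r_L\ge c'n^{c}$ gives $L\ge c\log_2 n+\log_2 c'-2$, while $L\le k-1$ trivially; so $L=\Theta(\log n)$.

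Next I would nail down the pairwise distances in $G$. Since only tree edges change the tree level (ring edges, and all the extra outermost‑ring edges of $G$, leave it unchanged), any path between vertices on levels $a$ and $b$ uses at least $|a-b|$ tree edges; hence $d_G(u,\hat u)=L$ because $\hat u$ is an ancestor of $u$. The crux is the other ``long'' distance $d_G(C,D)$, handled by a dichotomy: a $C$–$D$ path that uses no outermost‑ring edge lies in $RT(k)$ with its outer ring deleted, hence has length at least $d_{RT(k)}(C,D)$, which by Lemma~\ref{lem:dist-estm-by-ring-dist} is at least $2\log_2 r_{\lceil L/2\rceil}>2\lceil L/2\rceil\ge L$; a $C$–$D$ path that does use such an edge must visit level $k-1$, hence makes a round trip between levels $\ell_C$ and $k-1$ and so contains at least $2\lfloor L/2\rfloor\ge L-1$ tree edges. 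Either way $d_G(C,D)\ge L-1$ (and $\le d_{RT(k)}(C,D)=L+O(1)$ by Corollary~\ref{coro:dist-estm-by-ring-dist}). The four remaining distances are all $\tfrac12 L+O(1)$: $d_G(u,C)=\lfloor L/2\rfloor$ and $d_G(\hat u,C)=\lceil L/2\rceil$ are realized by tree paths, while $d_G(u,D)\le 1+\lfloor L/2\rfloor$ (go $u\to v$ along the added edge, then up to $D$) and $d_G(\hat u,D)\le d_R(\hat u,\hat v)+\lceil L/2\rceil$ (ring from $\hat u$ to $\hat v$, then down to $D$).

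Putting these together, the three ways of pairing $\{u,\hat u,C,D\}$ yield sums $d_G(u,\hat u)+d_G(C,D)=2L+O(1)$ and $d_G(u,C)+d_G(\hat u,D)=d_G(u,D)+d_G(\hat u,C)=L+O(1)$. Feeding this into Gromov's four‑point expression, the quadruple $\{u,\hat u,C,D\}$ has value $\tfrac12(2L-L)+O(1)=\tfrac12 L+O(1)=\Omega(\log n)$, and therefore $\delta(G)=\Omega(\log n)$.

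The step I expect to be the real obstacle is the lower bound $d_G(C,D)\ge L-1$, because this is exactly where the clause ``possibly with other edges on the outermost ring'' bites: those edges may be numerous and span arbitrarily large ring distances. The dichotomy above is what defuses them — any such edge lives only at the deepest level, so a path using one pays a full round trip down to level $k-1$ and back (cost $\approx L$), while a path avoiding them is confined to $RT(k)$ minus its outer ring, where it still has to cross the $\Omega(2^{L/2})$‑wide ring gap between $C$ and $D$, which already costs $\Omega(\log n)$ by Lemma~\ref{lem:dist-estm-by-ring-dist}. The structural input that makes that gap so wide is precisely that $\ell_r$ is the turn level of the canonical geodesic, forcing $r_1\ge 4$ and hence geometric growth of the $r_i$ below $\ell_r$; verifying this growth estimate carefully (and that $\ell_C$ is genuinely an interior level, so $C,D$ are not leaves) is the main bookkeeping to get right.
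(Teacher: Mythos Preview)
Your proof is correct and follows the same geometric picture as the paper: both exploit the triangle with apex at the top of $\langle u,v\rangle$ and show its midlevel is $\Omega(\log n)$ away from the opposite side. The paper works via the Rips condition on $\Delta(u,v,w)$ with $w$ the topmost vertex of $\langle u,v\rangle$ and $x$ the midpoint of $[u,w]$, asserting $d(x,y)=\Omega(\log n)$ for all $y\in[v,w]$ ``by considering the canonical geodesic $\langle x,y\rangle$''; you instead run Gromov's four-point condition on $\{u,\hat u,C,D\}$, which is the same configuration (your $\hat u$ is the paper's $w$ up to $O(1)$, your $C$ is its $x$). The one substantive addition in your argument is the explicit dichotomy for $d_G(C,D)$ (path avoids level $k-1$ vs.\ visits it), which is exactly what is needed to justify the lower bound in the presence of the arbitrary extra edges on the outermost ring; the paper's proof leaves this step implicit. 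So: same idea, with your version making the dependence on the ``possibly with other edges'' clause transparent.
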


\begin{proof}
Let $w$ be a node of lowest layer number on $\langle u,v \rangle$. 
By Lemma \ref{lem:dist-estm-by-ring-dist} and the structure of canonical geodesic, 
	$d(w,u) \geq (2\log_2 c'n^c - 3)/2 \geq c\log_2 n + \log_2 c' - 3/2$. We consider the midpoint $x$ of $[u,w]$. For any point $y$ in $[v,w]$, 
	by considering the canonical geodesic 
	$\langle x,y \rangle$, we know that $d(x,y) \geq (c\log_2 n + \log_2 c' - 3/2) / 2 - 3$. 
Therefore $\Delta(u,v,w)$ is at best $((c\log_2 n + \log_2 c' - 3/2) / 2 - 3)$-slim, and thus $\delta(G) = \delta_{Rips}(G) = \Omega(\log n)$.
\end{proof}

A probabilistic version comes naturally as the following corollary.

\begin{coro} \label{coro:large-jumping}
For a random graph $G$ formed by linking edges on leaves of a ringed tree.
	if for some constant $c$ with $0 < c < 1$, 
	with high probability there exists an edge linking some $u$ and 
	$v$ with $d_R(u,v)=\Theta(n^c)$, 
	then with high probability $\delta(G)=\Theta(\log n)$.
\end{coro}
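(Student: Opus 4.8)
The plan is to combine the deterministic Lemma~\ref{lem:large-jumping} with an elementary diameter bound, so that the corollary falls out by intersecting two high-probability events. Concretely, I would (i) prove the upper bound $\delta(G) = O(\log n)$ for \emph{every} realization of $G$, (ii) prove the lower bound $\delta(G) = \Omega(\log n)$ on the high-probability event supplied by the hypothesis, and (iii) conclude $\delta(G) = \Theta(\log n)$ with probability $1-o(1)$.

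Step (i) is immediate. Any graph $G$ in the family is the ringed tree $RT(k)$ together with some extra edges joining leaves, and adding edges never increases a graph distance, so $D(G) \le D(RT(k)) = \Theta(\log n)$ with $n = 2^{k}-1$. Combining this with the elementary bound $\delta(G) \le D(G)/2$ gives $\delta(G) = O(\log n)$, and this holds surely.

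Step (ii): let $\mathcal{A}$ denote the event that $G$ contains an edge between two leaves $u,v$ with $d_R(u,v) = \Theta(n^{c})$; by assumption $\Pr[\mathcal{A}] = 1-o(1)$. Reading the $\Theta$ in the hypothesis as providing a constant $c' > 0$ with $d_R(u,v) \ge c' n^{c}$ for all large $n$, on the event $\mathcal{A}$ the realized $G$ is precisely a $k$-level ringed tree carrying an edge of ring distance at least $c' n^{c}$ on its outermost ring, possibly with further edges on that ring --- which is verbatim the hypothesis of Lemma~\ref{lem:large-jumping}. Hence $\delta(G) = \Omega(\log n)$ on $\mathcal{A}$, so $\Pr[\delta(G) = \Omega(\log n)] \ge \Pr[\mathcal{A}] = 1-o(1)$. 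Taking the conjunction with the sure upper bound of step (i) yields $\delta(G) = \Theta(\log n)$ with probability $1-o(1)$, which is step (iii).

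I do not expect a genuine obstacle here: the mathematical content is entirely carried by Lemma~\ref{lem:large-jumping}, and the corollary is essentially a wrapper turning that deterministic statement into a probabilistic one. The only points deserving an explicit sentence are that the family ``graphs formed by linking edges on leaves of a ringed tree'' matches the ``possibly with other edges on the outermost ring'' clause of Lemma~\ref{lem:large-jumping}, so conditioning on $\mathcal{A}$ keeps us inside that lemma's scope, and that the hypothesis $d_R(u,v) = \Theta(n^{c})$ is used only through the one-sided lower bound $d_R(u,v) \ge c' n^{c}$.
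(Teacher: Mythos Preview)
Your proposal is correct and follows essentially the same approach as the paper: the paper's proof is literally the two-line remark that the ringed-tree diameter gives the $O(\log n)$ upper bound and Lemma~\ref{lem:large-jumping} gives the $\Omega(\log n)$ lower bound. Your write-up just spells out the same steps more carefully, including why adding leaf edges keeps you in the scope of that lemma.
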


\begin{proof}
Diameter of ringed tree gives $O(\log n)$ upper bound. Lemma \ref{lem:large-jumping} gives $\Omega(\log n)$ lower bound.
\end{proof}

We can now estimate the $\delta$-hyperbolicity of $RRT(k,d_R(u,v)^{-\alpha})$.

\begin{proof}[Proof of Theorem \ref{thm:randomRT}, part 2]
Note that in a ringed tree with $n$ vertices, at least
	$n/2$ of them are leaves.
For a constant $c$ with $0< c < 1$ and
	a fixed vertex $u$, the probability that the 
	long-range edge $(u,v)$ has $d_R(u,v) \leq (n/4)^c$ 
	(we say that it is good) is $p = 2\rho^{-1}\sum_{d=1}^{(n/4)^c}d^{-\alpha}$,
	where $\rho = \Theta(\sum_{d=1}^n d^{-\alpha})$.

For $0 < \alpha < 1$, $p=O(n^{(1-\alpha)(c-1)})=o(1)$. For $\alpha = 1$, 
	$p=c + o(1)$. 
In these two cases, all edges are good with probability at most 
	$(c + o(1))^{(n/2)}=o(1)$. 

For $\alpha > 1$, $\rho = O(1)$. We take $q=1-p$. We have $q=O(n^{c(1-\alpha)})$. By picking $c = \min(1, \frac{1}{2(\alpha-1)})$, we have $q=O(n^{-1/2})$. All edges are good with probability $(1-q)^{n/2}=O(e^{-\sqrt{n}})=o(1)$.

In all three cases, by Corollary \ref{coro:large-jumping}, with 
	probability $1-o(1)$, $\delta(RRT(k,d_R(u,v)^{-\alpha}))=\Theta(\log n)$. 
\end{proof}

Finally, we estimate the $\delta$-hyperbolicity of $RRT(k,2^{-\alpha h(u,v)})$.


\begin{proof}[Proof of Theorem \ref{thm:randomRT}, part 3]
We note $n_L = (n+1)/2$ the number of leaves.
Fix a leaf $u$. 
There are $2^{h-1}$ leaves $v$ such that $h(u,v)=h$. Therefore $\rho = \sum_{h=1}^{\log_2 n_L} 2^{h - 1 - \alpha h}$. For a constant $c$ with
	$0<c < 1$, let $p(c)=\rho^{-1}\sum_{h=1}^{c\log_2 n_L} 2^{h - 1 - \alpha h}$ be the probability that $u$ never links to any $v$ with $h(u,v) \geq c\log_2 n$.
We have $\rho = 2^{-\alpha}n_L^{1-\alpha} / (1 - 2^{1-\alpha})$ for $\alpha \neq 1$, and $\rho = \frac{1}{2}\log_2 n_L$ for $\alpha=1$. For $\alpha \neq 1$, $p(c)=n_L^{-(1-\alpha)(1-c)}$. For $\alpha = 1$, $p(c)=c$.

For the case $\alpha \leq 1$, $p(1/2)=O(1)$. Therefore with probability $(1-p(1/2))^{n_L}=o(1)$ there exists some $u,v$ linked together with $h(u,v) \geq \frac{1}{2}\log_2 n$. For $\alpha > 1$, we take constant $c_0 = \min(1, \frac{1}{2(\alpha-1)})$, and $p(c_0)=n_L^{-1/2}$. Therefore with probability $(1-p(c_0))^{n_L}=O(e^{-\sqrt{n_L}})=o(1)$ linked together with $h(u,v) \geq c_0\log_2 n_L$. In any cases, with $1-o(1)$ probability, there exists $u,v$ linked together by long-range edge with $h(u,v) \geq c\log_2 n_L$ for some constant $c$. We notice that this occurs uniformly through all edges.

Given $u,v$ with $h(u,v)=h$, $d_R(u,v) < 2^{h/2}$ with probability at most $2^{h}2^{1-2h}=2^{1-h}$ by simply counting pairs within ring distance $2^{h/2}$. With $h(u,v) \geq c\log_2 n_L, c>0$, we know that with probability $1 - 2n_L^{-c}=1-o(1)$ we have $d_R(u,v) \geq 2^{h/2} = n_L^{c/2}$. Combining with the previous analysis, we prove that with $1-o(1)$ probability, there exists $u,v$ linked together by long-range edge with $d_R(u,v) \geq n_L^{c/2}$ for some constant $c>0$. By Corollary \ref{coro:large-jumping} and by $n_L > n/2$, with probability $1-o(1)$ we have $\delta(RRT(k,2^{-\alpha h(u,v)}))=\Theta(\log n)$.
\end{proof}




\subsubsection{Proof of Theorem~\ref{thm:randomRBT}}

We can order a binary tree to give it a ring distance. We will suppose that such a distance is defined hereinafter. We begin with a counter part of Lemma \ref{lem:large-jumping} in binary tree.

\begin{lem}\label{lem:large-jump-binary}
If there is an edge between two leaves $u,v$ of a binary tree of size $n$ with distance to lowest common ancestor $h(u,v)=c_1 \log_2 n + c_2$ for some constant $c_1>0, c_2>0$, then the resulted graph $G$ (possibly with other edges on the outermost ring) has $\delta(G) = \Omega(\log n)$.
\end{lem}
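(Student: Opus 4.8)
The plan is to follow the template of Lemma~\ref{lem:large-jumping}, but to replace the ring‑distance and canonical‑geodesic estimates by a direct count of tree depth, which is what makes the binary‑tree case both cleaner and (as the statement warns) insensitive to whatever further edges among the leaves may be present. First I would set $w$ to be the lowest common ancestor of $u$ and $v$, so that in the underlying binary tree the distance from $w$ to each of $u,v$ equals exactly $h:=h(u,v)=c_1\log_2 n+c_2=\Theta(\log n)$; let $d_T$ denote distance in that tree and write $\mathrm{lev}(\cdot)$ for the depth of a node (root at level $0$, leaves at the maximum level $L$). The structural fact I would isolate is: in $G$ every tree edge changes the level of its endpoints by exactly $1$, whereas every added edge joins two leaves, hence two nodes both at level $L$. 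Consequently any path of length $\ell$ in $G$ between nodes at levels $a$ and $b$ satisfies $\ell\ge|a-b|$, and a path that touches a leaf pays in addition for the descent to level $L$ and the climb back up. From this I read off at once that $d(u,w)=d(v,w)=h$, that the $G$‑geodesics $[u,w]$ and $[v,w]$ coincide with the (level‑monotone) tree paths, and that $d(u,v)=1$ via the added edge.

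Next I would consider the geodesic triangle $\Delta(u,v,w)$ with sides the edge $[u,v]$ and the two tree paths, take $x$ to be the midpoint of $[u,w]$ — so that $x$ is the ancestor of $u$ at level $L-\lfloor h/2\rfloor$, lying strictly inside the subtree of $w$ that contains $u$ — and show $d(x,\,[u,v]\cup[v,w])\ge\lfloor h/2\rfloor$. For $[u,v]=\{u,v\}$ this is immediate: $d(x,u)=\lfloor h/2\rfloor$, and any path from $x$ to $v$ has length at least $\mathrm{lev}(v)-\mathrm{lev}(x)=\lfloor h/2\rfloor$. For a vertex $y$ of $[v,w]$, say at level $L-j$ with $0\le j\le h$: when $j<h$, $y$ lies in the subtree of $w$ not containing $x$, so $\mathrm{LCA}(x,y)=w$ and $d_T(x,y)=\lfloor h/2\rfloor+(h-j)\ge\lfloor h/2\rfloor$; hence a path from $x$ to $y$ that uses no added edge already has length $\ge d_T(x,y)\ge\lfloor h/2\rfloor$, while one that does use an added edge must reach level $L$, which costs at least $\lfloor h/2\rfloor$ starting from $x$, so its length is $\ge\lfloor h/2\rfloor$ as well; the boundary case $j=h$ is $y=w$, with $d(x,w)=\lceil h/2\rceil\ge\lfloor h/2\rfloor$. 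Thus $d(x,\,[u,v]\cup[v,w])\ge\lfloor h/2\rfloor$, so $\Delta(u,v,w)$ is not $(\lfloor h/2\rfloor-1)$‑slim, giving $\delta_{Rips}(G)\ge\lfloor h/2\rfloor-1$; since $\delta(G)\ge\delta_{Rips}(G)/4$ we obtain $\delta(G)=\Omega(\log n)$.

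The step I expect to be the crux — the analogue of the difficulty in Lemma~\ref{lem:large-jumping}, where one must control paths that slip around the ring — is the lower bound on $d(x,y)$ for $y$ on the far side $[v,w]$: one has to rule out that the edges among the leaves open up a short detour from the \emph{internal} vertex $x$ into the interior of $[v,w]$. The depth‑counting observation above is exactly what kills this, because no added edge is incident to any internal node: $x$ can only be entered or left through the tree, and "escaping'' to the leaf level and climbing back is always at least as expensive as staying on the tree path. In particular the argument invokes only the single guaranteed edge $(u,v)$ and is unaffected by the presence of arbitrarily many further edges between leaves, which is precisely the robustness asserted in the statement.
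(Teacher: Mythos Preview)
Your proof is correct and follows essentially the same approach as the paper's: take $w$ the lowest common ancestor, $x$ the midpoint of $[u,w]$, and use a level-counting argument (added edges never leave level $L$) to show $x$ is at distance $\Omega(h)$ from $[u,v]\cup[v,w]$. Your write-up is in fact more careful than the paper's terse justification; one trivial slip is that $d_T(x,y)=\lceil h/2\rceil+(h-j)$ rather than $\lfloor h/2\rfloor+(h-j)$, which only strengthens the inequality you need.
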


\begin{proof}
Consider $w$ the lowest common ancestor of $u,v$, and $x$ the midpoint of $[u,w]$. We have $d(x,w)=d(w,u)/2=h(u,v)/2$. For any $y$ in $[w,v]$, $d(x,y) \geq h(u,v)/2$, as the only path in the tree from $y$ to $x$ always passes by $w$, and we need to climb $h(u,v)/2$ levels if we use links on leaves. Therefore $\Delta(u,v,w)$ is at best $h(u,v)/2=\frac{1}{2}(c_1 \log_2 n + c_2)$, and we conclude that $\delta(G) = \Omega(\log n)$.
\end{proof}

\begin{coro}\label{coro:large-jump-binary}
For a random graph $G$ formed by linking edges on leaves of a binary tree.
	if for some constant $c$ with $0 < c < 1$, 
	with high probability there exists an edge linking some $u$ and 
	$v$ with $h(u,v)=\Theta(\log n)$, 
	then with high probability $\delta(G)=\Theta(\log n)$.
\end{coro}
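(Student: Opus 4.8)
The plan is to establish the matching $\Omega(\log n)$ lower bound and $O(\log n)$ upper bound on $\delta(G)$ separately and then combine them; the statement is essentially a repackaging of Lemma~\ref{lem:large-jump-binary} together with an elementary diameter estimate.

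First I would dispose of the upper bound, which holds deterministically. The graph $G$ is obtained from the underlying binary tree on $n = 2^k-1$ vertices purely by \emph{adding} edges among its leaves, so every graph distance in $G$ is at most the corresponding distance in the plain binary tree. Hence the diameter satisfies $D(G) \le 2(k-1) = \Theta(\log n)$, and the triangle-inequality bound $\delta(G) \le D(G)/2$ gives $\delta(G) = O(\log n)$ with probability $1$.

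Next I would handle the lower bound using the hypothesis. By assumption, with probability $1-o(1)$ there is an edge of $G$ joining two leaves $u,v$ with $h(u,v) = \Theta(\log n)$; in particular, for all large $n$ this forces $h(u,v) \ge c_1 \log_2 n$ for some constant $c_1 > 0$. On this event, Lemma~\ref{lem:large-jump-binary} applies to the edge $(u,v)$: taking $w$ to be the lowest common ancestor of $u$ and $v$ and $x$ the midpoint of $[u,w]$, every point $y\in[w,v]$ has $d(x,y) \ge h(u,v)/2$ (any $y$-to-$x$ path in $G$ must traverse the tree past $w$ or climb that many levels through leaf edges), so $\Delta(u,v,w)$ fails to be better than $\bigl(h(u,v)/2\bigr)$-slim and $\delta(G) = \Omega(\log n)$. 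Since the underlying event occurs with probability $1-o(1)$, we conclude $\delta(G) = \Omega(\log n)$ with probability $1-o(1)$. Combining with the upper bound yields $\delta(G) = \Theta(\log n)$ with probability $1-o(1)$.

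There is no real obstacle here: the only point deserving a word of care is the translation between the ``$h(u,v) = \Theta(\log n)$'' phrasing of the hypothesis and the ``$h(u,v) = c_1\log_2 n + c_2$ with $c_1,c_2>0$'' form demanded by Lemma~\ref{lem:large-jump-binary}. Since $h(u,v)$ is an integer that is $\Theta(\log n)$, it exceeds $c_1\log_2 n$ for a suitable constant $c_1>0$ and all sufficiently large $n$, which is all the lemma's argument actually uses; the entire analytical content of the corollary is carried by Lemma~\ref{lem:large-jump-binary}.
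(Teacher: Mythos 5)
Your proposal is correct and follows essentially the same route as the paper: the $O(\log n)$ upper bound from the binary-tree diameter (since added leaf edges only shorten distances, and $\delta(G)\le D(G)/2$), and the $\Omega(\log n)$ lower bound by invoking Lemma~\ref{lem:large-jump-binary} on the high-probability event supplied by the hypothesis. Your extra remark reconciling the ``$h(u,v)=\Theta(\log n)$'' hypothesis with the lemma's stated form is a reasonable clarification of a detail the paper leaves implicit, but it does not change the argument.
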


\begin{proof}
Diameter of binary tree gives $O(\log n)$ upper bound. Lemma \ref{lem:large-jump-binary} gives $\Omega(\log n)$ lower bound.
\end{proof}

\begin{proof}[Proof of Theorem \ref{thm:randomRBT}]
For $RBT(k,e^{-\alpha d_R(u,v)})$, the height of the whole tree is 
	$h = \lfloor \log_2 n \rfloor$. 
There are $\Theta(\sqrt{n})$ subtrees of height $h/2$, with the root
	at the level $h/2$. 
For every neighboring such subtrees, the rightmost leaf $u$ on the 
	left subtree and the leftmost leaf $v$ on the right subtree 
	verifies $d_R(u,v)=1$, $h(u,v) \geq h/2$. For each leaf, $\rho=O(1)$. 
Therefore, for $u,v$ with $d_R(u,v)=1$, there is an extra edge between $u,v$ 
	with constant probability $e^{-\alpha}\rho^{-1} > 0$. 
As there are $\sqrt{n}$ such pairs, with probability 
	$1-(1-e^{-\alpha}\rho^{-1})^{\sqrt{n}}=1-o(1)$, 
	there is a pair of leaves $u,v$ linked by an extra edge with 
	$d_R(u,v)=1$, $h(u,v) \geq h/2 = \Theta(\log n)$. 
By Corollary \ref{coro:large-jump-binary}, with probability $1-o(1)$, 
	$\delta(RBT(k,e^{-\alpha d_R(u,v)}))=\Theta(\log n)$.

For $RBT(k,d_R(u,v)^{-\alpha})$ and $RBT(k,2^{-\alpha h(u,v)})$, using the same analysis in the proof of Theorem \ref{thm:randomRT}, we know that for some constant $c>0$, with $1-o(1)$ probability, there is an extra edge between $u,v$ with $d_R(u,v) = \Omega(n^c)$. We have $h(u,v)=\Omega(\log n)$ because a subtree of height $h$ spans a ring distance at most $2^h$. By Corollary \ref{coro:large-jump-binary}, we have $\delta(RBT(k,d_R(u,v)^{-\alpha}))=\delta(RBT(k,2^{-\alpha h(u,v)}))=\Theta(\log n)$.
\end{proof}

\subsection{Extensions of random ringed tree model} \label{sec:rtringextension}

We will now discuss some extensions of the random ringed tree (RRT) model,
	and show that our results still hold for these extensions,
	thus extending its expressivity.

We start from some observations in the proof of Theorem \ref{thm:randomRT}. 
In this proof, the upper bound of $\delta$-hyperbolicity is given by 
	Theorem \ref{thm:random-ringed-tree-delta}, and 
	the lower bound is given by Corollary \ref{coro:large-jumping}. 
In the statement of Theorem \ref{thm:random-ringed-tree-delta}, by the
	definition of $RT(k,f)$, only a uniform bound of ring distance
	$d_R(u,v)$ for each long-range edge $(u,v)$ is considered. 
In the statement of Corollary \ref{coro:large-jumping}, the only quantity
	concerning a long-range edge $(u,v)$ is also the ringed distance
	between $u$ and $v$, and to apply this corollary, we only need to show that
	a long-range edge $(u,v)$ with $d_R(u,v)=\Theta(n^c)$ for some
	constant $c$ exists with high probability. 
Therefore, the proof of
	Theorem \ref{thm:randomRT} relies only on the ring distances of
	long-range edges.

To extend the RRT model while keeping similar properties on
$\delta$-hyperbolicity, we only need to show that Theorem
\ref{thm:random-ringed-tree-delta} and Corollary
\ref{coro:large-jumping} are still applicable in these extensions. We
will here discuss two extensions on choosing long-range edges.

\vspace{\topsep}

\noindent{\bf A constant number of long-range edges for each node.} In the
original RRT model, each node only have one long-range edge
	connecting to other nodes. 
We can extend the model to allow each node to have a constant
	number of long-range edges connecting to a constant number of
	other nodes.  
In this extension, Theorem \ref{thm:randomRT}
still holds, since Theorem \ref{thm:random-ringed-tree-delta} is not
concerned by the number of long-range edges, and Corollary
\ref{coro:large-jumping} is still applicable as the required
probability only increases with extra long-range edges.

\vspace{\topsep}

\noindent{\bf Independent long-range edges.} In the original RRT model, we
choose exactly one long-range edge for each node. 
A variant of the model is that each node $u$ can choose edge $(u,v)$
	independently from other edges $(u,v')$, with the same probability as
	in the original model such that on expectation $u$ connect out
	with one long-range edge.
In this variant, Theorem \ref{thm:randomRT}
still holds. The reason is that in expectation, at least a constant fraction of nodes issue only one edge, and the
computation for applying Corollary \ref{coro:large-jumping} stays similar. It is clear that the application of Theorem
\ref{thm:random-ringed-tree-delta} stays valid.

\vspace{\topsep}

In the two variants discussed above, we can see that Theorem
\ref{thm:randomRT} still applies, and we have exactly the same
property on $\delta$-hyperbolicity of these variants. We can also
combine these two variants, and it is clear that our results are still
valid.

\section{Discussions and open problems} \label{sec:discuss}

Perhaps the most obvious extension of our results is to close the gap in 
the bounds on the hyperbolicity in the low-dimensional small-world model when 
$\gamma$ is at the ``sweetspot,'' as well as extending the results for 
large $\gamma$ to dimensions $d\ge 2$.  
Also of interest is characterizing in more detail the hyperbolicity 
properties of other random graph models, in particular those that have 
substantial heavy-tailed properties.
Finally, exact computation of $\delta$ by its definition takes $O(n^4)$ time, which is not scalable
	to large graphs, and thus the design of more efficient exact or approximation algorithms
	would be of interest.

From a broader perspective, however, our results suggest that $\delta$ is a 
measure of tree-like-ness that can be quite sensitive to noise in graphs, 
and in particular to randomness as it is implemented in common network 
generative models.  
For example, the ringed trees have constant hyperbolicity but once adding some random links
	among leaves, our results show that very likely their hyperbolic $\delta$ 
	reaches the level of graph diameter and they become	not hyperbolic at all.
Moreover, our results for the $\delta$ hyperbolicity of rewired trees 
(Theorem~\ref{thm:randomRBT})
versus 
rewired low-$\delta$ tree-like metrics 
(Theorem~\ref{thm:randomRT}~(1)) suggest that, while quite appropriate 
for continuous negatively-curved manifolds, the usual definition of $\delta$ 
may be somewhat less useful for discrete graphs.  
Thus, it would be of interest to address questions such as: does there exist 
a measure other than Gromov's $\delta$ that is more appropriate for 
graph-based data or more robust to noise/randomness as it is used in popular 
network generation models; is it possible to incorporate in a meaningful way 
nontrivial randomness in other low $\delta$-hyperbolicity graph families; and can 
we construct non-trivial random graph families that contain as much randomness as possible while
having low $\delta$-hyperbolicity comparing to graph diameter?

\section*{Acknowledgments}

We are grateful to Yajun Wang and
	Xiaohui Bei for their helpful discussions on this topic.

\bibliography{hyperbolic,communities,mwmbib_jrnl,mwmbib_proc,mwmbib_book,mwmbib_misc}
\bibliographystyle{abbrv}

%



\end{document}